\documentclass[11pt,reqno]{article}

\usepackage[ruled,vlined]{algorithm2e}
\usepackage{tikz-cd}
\usepackage{algorithmic}

\usepackage[utf8]{inputenc}
\usepackage{authblk,amssymb}
\usepackage{setspace}
\usepackage[margin=1.25in]{geometry}
\usepackage{graphicx}
\graphicspath{ {./figures/} }
\usepackage{subcaption}
\usepackage{amsmath}
\usepackage{lineno}
\usepackage{adjustbox}
\usepackage{braket}
\usepackage{authblk}
\pagestyle{plain}
\usepackage[amsthm]{ntheorem}
\usepackage[toc,page]{appendix}
\def\wt{{\rm wt}}
\def\ord{{\rm ord}}

 \title{An infinite class of quantum codes derived from duadic constacyclic codes}

  \usepackage{hyperref}
  \hypersetup{colorlinks=true,
      linkcolor=blue}


\newcommand{\F}{\mathbb{F}}
\newcommand{\Z}{\mathbb{Z}}

\begin{document}

\theoremstyle{plain}
\newtheorem{theorem}{Theorem}[section]
\newtheorem{lemma}[theorem]{Lemma}
\newtheorem{corollary}[theorem]{Corollary}
\newtheorem{proposition}[theorem]{Proposition}
\newtheorem{question}[theorem]{Question}
\theoremstyle{definition}
\newtheorem{notations}[theorem]{Notations}
\newtheorem{notation}[theorem]{Notation}
\newtheorem{remark}[theorem]{Remark}
\newtheorem{remarks}[theorem]{Remarks}
\newtheorem{definition}[theorem]{Definition}
\newtheorem{claim}[theorem]{Claim}
\newtheorem{assumption}[theorem]{Assumption}
\numberwithin{equation}{section}
\newtheorem{example}[theorem]{Example}
\newtheorem{examples}[theorem]{Examples}
\newtheorem{propositionrm}[theorem]{Proposition}

\author[1]{Reza Dastbasteh \thanks{rdastbasteh@unav.es}}
\author[1]{Josu Etxezarreta Martinez \thanks{jetxezarreta@tecnun.es}}
\author[2,3]{Andrew Nemec \thanks{andrew.nemec@duke.edu}}
\author[1]{Antonio deMarti iOlius \thanks{ademartio@tecnun.es}}
\author[1]{Pedro Crespo Bofill \thanks{pcrespo@tecnun.es}}

\affil[1]{School of Engineering (Tecnun), University of Navarra, Donostia-San Sebastian, Spain}
\affil[2]{Duke Quantum Center, Duke University, Durham, NC, USA}
\affil[3]{Department of Electrical and Computer Engineering, Duke University, Durham, NC, USA}
\date{}
\maketitle

\begin{abstract}
We present a family of quantum stabilizer codes using the structure of duadic constacyclic codes over $\F_4$. Within this family, quantum codes can possess varying dimensions, and their minimum distances are lower bounded by a square root bound. For each fixed dimension, this allows us to construct an infinite sequence of binary quantum codes with a growing minimum distance.
Additionally, we prove that this family of quantum codes includes an infinite subclass of degenerate codes. We also introduce a technique for extending splittings of duadic constacyclic codes, providing new insights into the minimum distance and minimum odd-like weight of specific duadic constacyclic codes.
Finally, we provide numerical examples of some quantum codes with short lengths within this family.
\end{abstract}

{{\textit{Keywords:}} 
quantum code, degenerate code, duadic code, cyclic code, constacyclic code, minimum distance bound.


\section{Introduction}

{\em Quantum error-correcting codes}, or simply quantum codes, employ multiple physical qubits affected by decoherence to generate stable logical qubits for the purpose of storing and processing quantum information without errors.
Among the most extensively studied quantum codes are the {\em stabilizer codes}, which are constructed using certain dual-containing classical codes. 
Indeed, many families of quantum codes with good properties have been established using the stabilizer approach \cite{QF4,QF3,QF2,QF5,QF6,QF1}. 

In this paper, we exclusively focus on binary quantum codes.
We will denote the parameters of a binary quantum code that encodes $k$ logical qubits into $n$ physical qubits and has a minimum distance of $d$ as $[\![n, k, d ]\!]_2$. Such a quantum code is capable of correcting any error of weight at most $\lfloor \frac{d-1}{2}\rfloor$. 
Stabilizer codes exhibit a quantum unique effect named degeneracy \cite{degenRev}, which has no analogous concept in classical coding theory.  
A quantum stabilizer code is called degenerate if there exists a non-trivial error of weight less than $d$ in the stabilizer group that acts trivially on the code \cite{Calderbank,decoders}.

In general, error patterns leading to non-trivial syndromes necessitate active correction of the physical qubits in a quantum code.
This active correction procedure may introduce supplementary errors into the system, due to the inherent noise introduced by quantum gates \cite{decoders,aly2006}. 
Consequently, construction of degenerate codes becomes significant. For example, in a depolarizing channel where low-weight errors are more likely to happen, the occurrence of numerous low-weight errors that act trivially on the code can help mitigate the potential introduction of additional errors during active correction. 
 Note also that to prevent the accumulation of errors in a quantum code, one must be able to measure the error syndrome frequently enough. This is
accomplished by a syndrome measurement circuit that uses a sequence of quantum operators. 
Degenerate codes typically have a shorter-depth syndrome measurement circuit compared to non-degenerate codes with the same parameters, enabling more efficient and reliable syndrome measurement \cite{BBCode}.
In addition to making up a large portion of the class of quantum LDPC codes \cite{panteleev,breuckReview}, degenerate codes are also useful in the constructions of quantum-classical hybrid codes \cite{grassl2017hybrid,nemec2021}, permutation-invariant quantum codes \cite{pollatsek2004}, and codes for amplitude damping channels \cite{Cafaro2014,Duan2010}.

 In general, constructing non-degenerate stabilizer codes is less complicated due to their direct connection to classical codes \cite{Calderbank}. 
 As such, there are not many families of degenerate stabilizer codes that are constructed from classical algebraic codes such as cyclic and constacyclic codes. Another main difficulties of degenerate codes is in determining the true minimum distance or even establishing bounds, especially a lower bound, on the minimum distances of these codes.
 One of the main techniques to construct quantum stabilizer codes is
 the so-called Calderbank-Shor-Steane (CSS) construction \cite{CS,S}. While the CSS construction has proven to be excellent, non-CSS stabilizer codes can be advantageous, for example, for operating over depolarizing channels. This comes from the fact that CSS codes obviate the correlation that exists between Pauli $\mathrm{X}$ and $\mathrm{Z}$ errors in the form of $\mathrm{Y}$ errors. This leads to the fact that such family of codes is limited by the so-called CSS lower bound \cite{cssLowerBound}. 
 
Motivated by the above observations and previous works in the literature such as \cite{aly2006,aly2007,RezaDuadic,guendaduadic}, we design a family of non-CSS quantum stabilizer codes with various interesting features. 
Firstly, it generalizes the construction presented in \cite{aly2006} by producing (degenerate) quantum codes with a varying dimension and growing minimum distances. Secondly, our constructed quantum codes satisfy a square root minimum distance lower bound. Moreover, one can find a sharper lower bound or even compute the true minimum distance of certain codes constructed using our proposed extended splitting method (introduced in Section \ref{S:4.1}) in this family. Lastly, we provide a list of degenerate and non-degenerate binary quantum codes constructed using this approach that includes a few currently best-known codes.

Our proposed quantum codes are derived from duadic constacyclic and cyclic codes over the quaternary field. These codes are defined similarly to conventional duadic cyclic codes but offer more freedom in the selection of their defining sets.
The rich algebraic structure of constacyclic and cyclic codes can be highly beneficial for the practical applications of quantum codes that are constructed this way. 
This is primarily due to the availability of efficient encoding and decoding algorithms for classical cyclic and constacyclic codes, along with quantum cyclic codes that can be adapted for use with our code \cite{boztas,Application1,grassl2000cyclic}.

This paper is organized as follows. In Section \ref{Se:2}, we provide a review of preliminary concepts such as constacyclic, duadic, and quantum stabilizer codes, which will be used throughout the paper. In Section \ref{S:spliting}, we explore various scenarios under which there exists a splitting for Hermitian dual-containing duadic constacyclic codes over $\F_4$.
Moving to Section \ref{S:quantum}, we introduce new classes of quantum codes constructed from duadic constacyclic codes. We also outline the generation of degenerate codes by extending splittings of shorter-length duadic codes to longer-length codes. 
Additionally, we give a lower bound for the minimum odd-like distance of duadic codes constructed from extended splittings, and provide numerous examples of binary quantum codes constructed using our approach.

\section{Background}\label{Se:2}
Let $\mathbb{F}_4=\{0,1,\omega,\omega^2\}$ be the finite field of four elements, where $\omega^2=\omega+1$. We denote the multiplicative group of $\F_4$ by $\mathbb{F}_4 ^{\ast}$. Let $ a \in \mathbb{F}_4 ^{\ast}$ and $n$ be a positive integer. 
A {\em linear code} $C$ of length $n$ over $\F_4$ is an $\F_4$-vector subspace of $\F_4^n$. The {\em parameters} of a linear code of length $n$, dimension $k$, and minimum (Hamming) distance $d$ over $\F_4$ will be denoted by $[n,k,d]_4$. 
For each two vectors $x=(x_0,x_2,\ldots,x_{n-1})$ and $y=(y_0,y_2,\ldots,y_{n-1})$ in $\F_4^n$, the {\em Hermitian inner product} of $x$ and $y$ is defined by 
\[\langle x,y\rangle_h=\sum_{i=0}^{n-1}x_i{y_i^2}.\]
The {\em Hermitian dual} of a linear code $C$ is now defined by
\begin{equation}\label{Hermitian dual}
C^{\bot_h}=\{x \in \F_4^n: \langle x, y\rangle_h=0 \ \text{for all}\ y \in C \}.
\end{equation}
A linear code $C$ over $\F_4$ is called Hermitian {\em dual-containing} (respectively {\em self-dual}), if $C^{\bot_h} \subseteq C$ (respectively $C=C^{\bot_h}$). 
\subsection{Quaternary constacyclic codes}
Constacyclic codes are an important class of linear codes  that generalizes cyclic codes and inherit many desirable properties from them, making them suitable for various practical applications.

\begin{definition}
Let $a \in \F_4^\ast$. A linear code $C$ of length $n$ over $\mathbb{F}_4$ is called \textit{$a$-constacyclic} if for each codeword $(c_0,c_1,\ldots ,c_{n-1}) \in C$, we have $(a c_{n-1} ,c_0 ,c_1 ,\ldots ,c_{n-2}) \in C$. 
\end{definition}
The value $a$ is called the {\em shift constant} of such constacyclic code. A constacyclic code with the shift constant $a=1$ is called {\em cyclic}. Throughout the rest of this paper we assume that $n$ is a positive odd integer.

It is well-known that there exists a one-to-one correspondence between $a$-constacyclic codes of length $n$ over $\F_4$ and ideals of the ring $\mathbb{F}_4 [x]/\langle x^n- a \rangle$. 
Thus each $a$-constacyclic code $C$ is generated by a unique monic polynomial $g(x)$ such that $g(x) \mid x^n- a$. The polynomial $g(x)$ is called the \textit{generator polynomial} of $C$.   
Therefore, to determine all $a$-constacyclic codes of length $n$ over $\F_4$, it is essential to find the irreducible factorization of $x^n-a$ over $\F_4$.

We denote the multiplicative order of $a \in \F_4^\ast$ by $\ord(a)$. Since $|\F_4^\ast|=3$, we have $\ord(a) \in \{1,3\}$. Let  $\ord(a)=t$ and $\alpha$ be a $(tn)$-th primitive root of unity in a finite field extension of $\F_4$ such that $\alpha^n=a$. Then all different roots of $x^n-a$ are in the form $\alpha^s$, where $s \in \Omega _{a}$ and 
\begin{equation}
\Omega_a=\{{kt+1}: 0\le k\le n-1\}.
\end{equation}
For any integer $b$ such that $\gcd(n,b)=1$, we denote the multiplicative order of $b$ modulo $n$ by $\ord_n(b)$. 

\begin{remark}
In the numerical examples of constacyclic codes throughout
this paper,  the $(tn)$-th root of unity $\alpha$ is fixed as follows. First we find  $z=\ord_{tn}(4)$.
Let $\gamma$ be the primitive element of the finite field $\F_{4^z}$
chosen by Magma computer algebra system \cite{magma}.
Set $\alpha_0=\gamma^{(4^z-1)/ (tn)}$, which is a primitive $(tn)$-th root of unity. Then  there exists an integer $1\le i \le t-1$ such that $\alpha_0^{in}=a$. Now select $\alpha=\alpha_0^i$. 
\end{remark}

For each $s \in \Omega _{a}$, the {\em $4$-cyclotomic coset} of $s$ modulo $tn$ is defined by the set
\[Z(s)=\{(4^is) \bmod{tn}: 0\le i \le r-1\},\]
where $r$ is the smallest integer such that $4^rs\equiv s \pmod{tn}$. The smallest element of each $4$-cyclotomic coset is called its {\em coset leader}. 
Moreover, there exist coset leaders $b_1,b_2,\ldots,b_m \in \Omega_a$ such that $\{Z(b_i): 1\le i \le m\}$ forms a partition for $\Omega_a$. Thus one can find the following irreducible factorization of $x^n-a$ over $\F_4$: 
\[x^n-a=\prod_{i=1}^{m}f_i(x),\]
where 
\[f_i(x)=\prod_{\ell \in Z(b_i)}(x-\alpha^\ell).\]
A common and also computationally efficient approach to represent an $a$-constacyclic code over $\F_4$ with the generator polynomial $g(x)$ is by its {\em defining set}, which is $A \subseteq \Omega_a$ such that:
\begin{center}
 $g(\alpha^s)=0$ if and only if $s \in A$.
\end{center}
Defining set of an $a$-constacyclic code is a unique (as $\alpha$ is already fixed) union of $4$-cyclotomic cosets modulo $tn$. 

Note also that $\ord(\omega)=\ord(\omega^2)=3$ and by \cite[Corollary 3]{aydin2017} the families of $\omega$-constacyclic and $\omega^2$-constacyclic codes of length $n$ over $\F_4$ are monomially equivalent. 
That is, one is obtained from the other by permuting and rescaling the coordinates by a nonzero value.
Monomially equivalent codes share the same parameters and various other identical features, such as the weight distribution.
Therefore, from now on, we only consider $\omega$-constacyclic codes in our discussions.
Monomial equivalence of cyclic and constacyclic codes over $\F_4$ is discussed below.

\begin{theorem} \label{cyclic-consta}
Let $n$ be a positive odd integer. If $\gcd(3,n)=1$, then $\omega$-constacyclic and cyclic codes of length $n$ over $\F_4$ are monomially equivalent. 
\end{theorem}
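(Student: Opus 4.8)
The plan is to realize both families as ideals in quotient rings and to connect them by a single coordinate rescaling. Recall that the cyclic codes of length $n$ over $\F_4$ are precisely the ideals of $R_1=\F_4[x]/\langle x^n-1\rangle$, and the $\omega$-constacyclic codes are precisely the ideals of $R_\omega=\F_4[x]/\langle x^n-\omega\rangle$, a codeword $(c_0,\dots,c_{n-1})$ being identified in each case with the residue class of $c_0+c_1x+\dots+c_{n-1}x^{n-1}$. So it suffices to construct an $\F_4$-algebra isomorphism $R_1\to R_\omega$ whose underlying $\F_4$-linear map on $\F_4^n$ is monomial: a ring isomorphism carries ideals bijectively onto ideals, and a monomial map preserves Hamming weight, hence the two families would be matched up by a single weight-preserving monomial transformation.

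First I would choose the rescaling factor. Since $|\F_4^\ast|=3$ and $\gcd(3,n)=1$, the map $\lambda\mapsto\lambda^n$ is an automorphism of the cyclic group $\F_4^\ast$; in particular there is a (unique) $\lambda\in\F_4^\ast$ with $\lambda^n=\omega^2$. Let $\mu_\lambda\colon\F_4[x]\to\F_4[x]$ be the $\F_4$-algebra automorphism determined by $x\mapsto\lambda x$, whose inverse is $\mu_{\lambda^{-1}}$. Then, using $\omega^3=1$,
\[
\mu_\lambda(x^n-1)=\lambda^n x^n-1=\omega^2x^n-1=\omega^2\left(x^n-\omega^{-2}\right)=\omega^2\left(x^n-\omega\right).
\]
Because $\omega^2$ is a unit of $\F_4[x]$, the ideals $\langle\mu_\lambda(x^n-1)\rangle$ and $\langle x^n-\omega\rangle$ coincide, so $\mu_\lambda$ (and symmetrically $\mu_{\lambda^{-1}}$) descends to the quotients and yields an $\F_4$-algebra isomorphism $\overline{\mu_\lambda}\colon R_1\to R_\omega$.

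Next I would read off the coordinate description of $\overline{\mu_\lambda}$. With respect to the monomial bases $\{1,x,\dots,x^{n-1}\}$ of $R_1$ and of $R_\omega$ one has $\overline{\mu_\lambda}(x^i)=\lambda^i x^i$, so $\overline{\mu_\lambda}$ acts on $\F_4^n$ as the diagonal matrix $\mathrm{diag}(1,\lambda,\lambda^2,\dots,\lambda^{n-1})$, which is monomial. Hence $C\mapsto\overline{\mu_\lambda}(C)$ is a bijection between the cyclic codes and the $\omega$-constacyclic codes of length $n$ over $\F_4$ under which corresponding codes share the same Hamming weight distribution; that is, the two families are monomially equivalent, as claimed. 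Combined with the monomial equivalence of $\omega$-constacyclic and $\omega^2$-constacyclic codes recorded before the statement, this also relates cyclic codes to constacyclic codes for every admissible shift constant when $\gcd(3,n)=1$.

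The argument is short and the computation routine; the only place the hypothesis is used, and the only point needing care, is the existence of a scalar $\lambda\in\F_4^\ast$ with $\lambda^n=\omega^2$. This existence is exactly what fails when $3\mid n$, in which case $\lambda^n=1$ for all $\lambda\in\F_4^\ast$ and no such rescaling can send $x^n-1$ to an associate of $x^n-\omega$, so the restriction $\gcd(3,n)=1$ is genuinely necessary.
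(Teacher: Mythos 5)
Your proof is correct, and the paper itself offers no argument for this statement beyond citing Theorem 15 of \cite{bierbrauer}. Your construction --- the substitution $x \mapsto \lambda x$ with $\lambda^n = \omega^2$, inducing an $\F_4$-algebra isomorphism $\F_4[x]/\langle x^n-1\rangle \to \F_4[x]/\langle x^n-\omega\rangle$ that acts on coordinates as the diagonal monomial map $\mathrm{diag}(1,\lambda,\dots,\lambda^{n-1})$ --- is exactly the standard isometry underlying that reference, so you have supplied a complete, self-contained version of the intended proof, including the correct use of the hypothesis $\gcd(3,n)=1$ to guarantee the existence of $\lambda$.
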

\begin{proof}
    The proof follows from \cite[Theorem 15]{bierbrauer}.
\hfill $\square$\end{proof}

Thus only when $3\mid n$, the families of $\omega$-constacyclic codes and cyclic codes of length $n$ over $\F_4$ can have different parameters. Motivated by this result and the fact that both of the mentioned monomial equivalences preserve the Hermitian inner product, for the rest of this paper, we only consider cyclic codes of length $n$ over $\F_4$ when $\gcd(3,n)=1$, and we consider both families separately for other lengths.

Let $a\in \F_4^\ast$ and $\ord(a)=t$. A {\em multiplier} $\mu_b$ on $\Z/(nt)\Z$ is defined by $\mu_b(x)= bx \mod{(nt)}$ for some integer $b$ such that $\gcd(nt,b)=1$. Multipliers act as special permutations on  $\Z/(nt)\Z$.
The following theorem provides criteria for Hermitian dual-containment of constacyclic codes over $\F_4$.

\begin{theorem}\label{T:Hermitian}
Let $C\subseteq \F_4^n$ be an $a$-constacyclic code with the defining set $A$ for some $a \in \F_4^\ast$. Then $C^{\bot_h}\subseteq C$ if and only if $A \cap \mu_{-2}(A)=\emptyset.$
\end{theorem}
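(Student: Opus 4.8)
The plan is to translate the condition $C^{\bot_h} \subseteq C$ into a statement about defining sets, using the standard correspondence between $a$-constacyclic codes and ideals of $\F_4[x]/\langle x^n - a\rangle$. First I would recall that if $C$ has generator polynomial $g(x)$ with $x^n - a = g(x)h(x)$, then the Hermitian dual $C^{\bot_h}$ is again an $a$-constacyclic code (this uses that $n$ is odd and $a \in \F_4^\ast$, so $x^n - a$ has distinct roots and $\bar a = a^2 = a^{-1}$ when $\ord(a)=3$, while $\bar a = a$ when $a=1$; in both cases the root set of $x^n - \bar a^{-1}$ coincides with that of $x^n - a$, since raising the roots $\alpha^s$, $s\in\Omega_a$, to appropriate powers permutes $\Omega_a$). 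The key computation is to identify the defining set of $C^{\bot_h}$ in terms of $A$, the defining set of $C$.

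The second step is the core of the argument: I would compute the defining set of $C^{\bot_h}$ explicitly. Because the Hermitian inner product is $\langle x,y\rangle_h = \sum x_i y_i^2$, the Hermitian dual is obtained by composing the ordinary (Euclidean) dual with the conjugation map $c(x) \mapsto c(x)^2$ (coefficientwise squaring), which on the level of roots sends $\alpha^s \mapsto \alpha^{s/2}$ or, equivalently, sends a cyclotomic coset $Z(s)$ to $Z(2s)$ after accounting for the $2$-power action. Combining this with the fact that the Euclidean dual of a constacyclic code with defining set $A$ has defining set $\Omega_a \setminus (-A)$ (the negatives arising from the transpose/reciprocal polynomial), one obtains that $C^{\bot_h}$ has defining set $\Omega_a \setminus (-2A)$, where $-2A = \{-2s \bmod tn : s \in A\}$ viewed inside $\Omega_a$. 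The containment $C^{\bot_h} \subseteq C$ is equivalent to the reverse containment of defining sets, i.e. $A \subseteq \Omega_a \setminus (-2A)$, which says precisely $A \cap (-2A) = \emptyset$.

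The main obstacle I anticipate is the bookkeeping in the second step: verifying carefully that conjugation-plus-Euclidean-dual yields exactly the set $-2A$ rather than, say, $-2A$ shifted by some constant coming from the shift constant $a$, and checking that $-2A$ is a well-defined union of $4$-cyclotomic cosets inside $\Omega_a$ (one must confirm that multiplication by $-2$ preserves $\Omega_a = \{kt+1 : 0 \le k \le n-1\}$ modulo $tn$, using $\ord(a) = t \mid 3$ and that $-2 \equiv 1 \pmod{t}$ when $t=3$, and that $t \mid tn$ makes the residues land correctly). Once the identity "defining set of $C^{\bot_h}$ is $\Omega_a \setminus (-2A)$" is established, the equivalence is immediate from the fact that, for constacyclic codes, inclusion of codes corresponds to reverse inclusion of defining sets. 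I would therefore devote most of the write-up to pinning down this defining-set identity, perhaps isolating it as a short lemma, and then conclude in one line.
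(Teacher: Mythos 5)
Your outline is correct and is essentially the argument underlying the paper's proof, which does not spell it out but simply cites \cite[Theorem 4.4.16]{Huffman} for $a=1$ and \cite[Lemma 2.2]{Kai} for $a\in\{\omega,\omega^2\}$: those results are exactly the defining-set identity you isolate, namely that $C^{\bot_h}$ is again $a$-constacyclic with defining set $\Omega_a\setminus(-2A)$ (an identity the paper itself invokes later, see (\ref{Eq:3}) in the proof of Theorem \ref{T:2}), from which $C^{\bot_h}\subseteq C$ is equivalent to $A\subseteq \Omega_a\setminus(-2A)$, i.e.\ $A\cap -2A=\emptyset$, by the reverse-inclusion correspondence between codes and defining sets. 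So, modulo actually writing out the bookkeeping you flag (that $-2\equiv 1\pmod t$ keeps $-2A$ inside $\Omega_a$ and that conjugation composed with the Euclidean dual produces $-2A$), your plan matches the intended proof.
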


\begin{proof}
The proof of the cyclic case ($a=1$) follows from \cite[ Theorem 4.4.16]{Huffman}. When $a\in \{\omega,\omega^2\}$, the proof follows by choosing $q=4$ in \cite[Lemma $2.2$]{Kai}.
\hfill $\square$\end{proof}

We finish this subsection by recalling the Bose-Chaudhuri-Hocquenghem (BCH) minimum distance lower bound for $a$-constacyclic codes over $\F_4$. We denote the minimum distance of a code $C$ by $d(C)$.

\begin{theorem}\label{T:BCH}
Let $n$ be a positive odd integer and $a\in \F_4^\ast$ with $t=\ord(a)$. Let $C$ be an $a$-constacyclic code of length $n$ over $\F_4$ with the defining set $A$. If 
\[\{\ell+jt: 0\le j\le \delta-2 \}\subseteq A\]
for some integer $\ell < tn$, then $d(C) \ge \delta$.
\end{theorem}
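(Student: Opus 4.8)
The plan is to run the classical Vandermonde--BCH argument, adapted to the constacyclic setting, using that $\alpha$ is a primitive $(tn)$-th root of unity. Suppose, for contradiction, that $C$ contains a nonzero codeword whose polynomial representative is $c(x)=\sum_{i=1}^{w}c_{k_i}x^{k_i}$ with $c_{k_i}\in\F_4^\ast$, $0\le k_1<k_2<\cdots<k_w\le n-1$, and $\wt(c)=w\le\delta-1$. By hypothesis $\{\ell+jt:0\le j\le\delta-2\}\subseteq A$, so the defining-set characterization of $C$ gives $c(\alpha^{\ell+jt})=0$ for all such $j$; in particular this holds for $j=0,1,\ldots,w-1$, since $w-1\le\delta-2$. (Here the exponents $\ell+jt$ are read modulo $tn$, which is legitimate because $\alpha^{tn}=1$ and $A$ is a union of $4$-cyclotomic cosets modulo $tn$.)

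Next I would turn these $w$ vanishing conditions into a homogeneous linear system. Expanding and using $\alpha^{(\ell+jt)k_i}=\alpha^{\ell k_i}(\alpha^{tk_i})^{j}$, we get
\[
0=c(\alpha^{\ell+jt})=\sum_{i=1}^{w}\bigl(c_{k_i}\alpha^{\ell k_i}\bigr)\bigl(\alpha^{tk_i}\bigr)^{j},\qquad j=0,1,\ldots,w-1,
\]
that is, $Vu=0$, where $V=\bigl[(\alpha^{tk_i})^{j}\bigr]_{0\le j\le w-1,\ 1\le i\le w}$ is a $w\times w$ Vandermonde matrix with nodes $\alpha^{tk_1},\ldots,\alpha^{tk_w}$, and $u$ is the column vector with entries $c_{k_i}\alpha^{\ell k_i}$, each of which is nonzero.

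The crux is to show $V$ is nonsingular, equivalently that the nodes $\alpha^{tk_i}$ are pairwise distinct. Since $\det V=\prod_{1\le i<i'\le w}\bigl(\alpha^{tk_{i'}}-\alpha^{tk_i}\bigr)$, it suffices to check that $\alpha^{t(k_{i'}-k_i)}\neq 1$ whenever $i\neq i'$. Because $\alpha$ has multiplicative order $tn$, the equality $\alpha^{t(k_{i'}-k_i)}=1$ would force $tn\mid t(k_{i'}-k_i)$, hence $n\mid(k_{i'}-k_i)$; but $0<|k_{i'}-k_i|<n$, a contradiction. Thus $V$ is invertible, forcing $u=0$, which contradicts $c_{k_i}\alpha^{\ell k_i}\neq 0$. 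Hence no nonzero codeword of weight at most $\delta-1$ exists, i.e. $d(C)\ge\delta$.

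I do not anticipate a serious obstacle: the only point needing care is the reduction of the exponents modulo $tn$ and the observation that $\alpha^{\ell k_i}$ is a unit, so that the diagonal scaling relating $V$ to the true coefficient matrix is invertible; the Vandermonde determinant computation itself is routine over any field. As an alternative route, when $\gcd(3,n)=1$ one could instead transport the statement from the classical cyclic BCH bound via the monomial equivalence of Theorem~\ref{cyclic-consta}, but the direct argument above has the advantage of handling the case $3\mid n$ as well, uniformly and with no extra work.
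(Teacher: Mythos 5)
Your argument is correct. Note that the paper does not prove this statement itself: it simply cites it as the $q=4$ case of Lemma~4 of the reference \cite{BCHconsta}, and your Vandermonde argument is precisely the standard proof underlying that cited lemma, so in substance you have reproduced the intended route in a self-contained way. The two points that need care are exactly the ones you flag and handle correctly: the evaluations $c(\alpha^{\ell+jt})$ make sense with exponents read modulo $tn$ because each $\ell+jt$ lies in $A\subseteq\Omega_a$ and $\alpha^s$ is a root of $x^n-a$ for $s\in\Omega_a$; and the Vandermonde nodes $\alpha^{tk_i}$ are pairwise distinct because $\alpha$ has order $tn$ and $0\le k_i\le n-1$, so $n\nmid(k_{i'}-k_i)$ for $i\neq i'$.
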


\begin{proof}
This is the case $q=4$ of \cite[Lemma 4]{BCHconsta}.    
\hfill $\square$\end{proof}

The BCH bound allows us to simply find a lower bound for the minimum distance of a
linear cyclic or constacyclic code by finding the longest consecutive set inside the defining set.

\subsection{Duadic codes}

Duadic codes are an important family of linear codes with rich algebraic and combinatorial properties. Quadratic residue (QR) codes and many examples of linear codes with optimal parameters belong to the family of duadic codes. 
Binary duadic codes were first introduced by Leon et al. \cite{Leon-Pless}, and were later generalized to larger fields by Pless \cite{Pless,Pless3}.
They are extensively discussed in \cite[Chapter 6]{Huffman} and \cite[Section 2.7]{Encyclopedia}.  
There has also been some recent work on the development of duadic constacyclic codes~\cite{blackford2,blackford1,karbaski} and their applications to construct families of quantum codes with good properties~\cite{aly2006,RezaDuadic}. 

One way to define duadic codes is using the concept of splitting, see for example \cite[Theorem 6.1.5]{Huffman} and its following discussion. Here, we provide a more tailored definition of splitting (with a few modifications) that is well-suited for our construction of quantum codes and subsequent applications.

\begin{definition} \label{splitting def}
Let $a \in \F_4^\ast$ and $X,S_1,S_2 \subseteq \Omega_a$ be unions of $4$-cyclotomic cosets such that 
\begin{enumerate}
\item $X\cup S_1 \cup S_2 =\Omega_a$,
\item $X\neq \emptyset$, $S_1$, and $S_2$ are disjoint,
\item and there exists a multiplier ${\mu}_b$ such that ${\mu}_bS_1=S_2$, ${\mu}_bS_2=S_1$, and $\mu_b(Z(s))=Z(s)$ for each $s\in X$.
\end{enumerate}
Then the triple $(X,S_1,S_2)$ is called a {\em splitting} of $\Omega_a$ that is given by $\mu_b$. 
\end{definition}

A splitting with $S_1=S_2=\emptyset$ will be called {\em trivial}, and {\em non-trivial} otherwise. 
Let $a\in \F_4^\ast$ such that $\ord(a)=t$ and $X \subseteq \Omega_a$. Recall that $\alpha$ is a $(tn)$-th primitive root of unity in a finite field extension of $\F_4$ such that $\alpha^n=a$.
 A vector $(c_0,c_1,\ldots,c_{n-1})\in \F_4^n$ will be called {\em even-like with respect to $X$} provided that 
\[c(\alpha^s)=0 \ \text{for each} \ s \in X,\]
where
\[c(x)=\sum_{i=0}^{n-1}c_ix^i.\]
We call an $a$-constacyclic code \emph{even-like with respect to $X$} if it has only even-like vectors with respect to $X$. Otherwise we call it an {\em odd-like code with respect to $X$}. 
Whenever the set $X$ is clear from the context, we simply call the mentioned codes even-like and odd-like. 

Note also that almost all our results remain true after change the last condition of Definition \ref{splitting def} part 3 to ``$\mu_b(X)=X$", except perhaps
the square root minimum distance lower bound (Theorem \ref{T:2} part 5). Given the importance of this square root bound in determining degenerate quantum codes, for the rest of this paper, we adhere to Definition \ref{splitting def} as it stands.

\begin{definition}\label{D:duadic}
Let $n$ be a positive odd integer, $a \in \F_4^\ast$, and $(X,S_1,S_2)$ be a non-trivial splitting of $\Omega_a$. Then the $a$-constacyclic codes of length $n$ over $\F_4$ with the defining sets $S_1$ and $S_2$ (respectively $S_1 \cup X$ and $S_2 \cup X$) are called odd-like (respectively even-like) {\em duadic $a$-constacyclic} codes of length $n$ over $\F_4$.     
\end{definition} 
In the literature, duadic cyclic codes with $X=\{0\}$ have been used to construct infinite families of quantum codes \cite{aly2006,RezaDuadic} with many good properties. In this paper, to avoid reproducing the results of \cite{aly2006,RezaDuadic}, we mainly consider duadic constacyclic codes over $\F_4$ that satisfy $|X|>1$.

Note also that a generalization of duadic codes, called polyadic or $m$-adic codes, can be defined similarly by considering splittings of $\Omega_a$ that consist of more than three subsets \cite{polyadic1, polyadic, polyadic2}. However, we do not consider polyadic codes in this paper.

\subsection{Quantum stabilizer codes}

In this section, we briefly recall the stabilizer construction of quantum error correction codes with the concentration on only linear codes over $\F_4$. 
Generally, an $[\![n,k,d ]\!]_2$ binary quantum stabilizer code is a $2^k$ dimensional subspace of the Hilbert space $\mathbb{C}^{2^n}$, and it is capable of correcting each error of weight at most $\lfloor \frac{d-1}{2} \rfloor$. Each stabilizer code is determined by a stabilizer set \cite{Gottesman,Calderbank}, $\mathcal{S}$, an abelian subgroup of the $n$-qubit Pauli group not containing $-I_n$ \cite{EAcorrect}. 
The stabilizer code is defined as the simultaneous $+1$-eigenspace of the operators in $\mathcal{S}$. 
From a classical coding theory point of view, the stabilizer formalism of quantum codes allows us to use classical additive codes over $\F_4$ to construct quantum codes \cite{Calderbank}. A particular case of this, focusing on linear codes over $\F_4$, is stated below. 

\begin{theorem} \label{linear quantum definition}
 Let $C$ be a linear $[n,k,d]_4$ code over $\F_4$ such that $C^{\bot_h}\subseteq C$. Then we can construct an $[\![ n, 2k-n,d' ]\!]_2$ binary quantum code, where $d'$
 is the minimum weight in~$C\setminus C^{\bot_h}$ in case $C^{\bot_h}\subsetneq C$ 
 and $d'=d$ otherwise.
\end{theorem}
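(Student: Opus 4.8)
The plan is to invoke the standard CSS-type / additive-code construction of quantum stabilizer codes from Hermitian self-orthogonal (equivalently, dual-containing) codes over $\F_4$, as established in \cite{Calderbank}, and then to track the minimum distance carefully in the degenerate case. First I would recall that a linear $[n,k]_4$ code $C$ with $C^{\bot_h}\subseteq C$ is in particular an additive code over $\F_4$ that is self-orthogonal with respect to the trace-Hermitian form; applying the main theorem of \cite{Calderbank} to $C^{\bot_h}$ then yields a stabilizer code whose number of logical qubits is $n - \dim_{\F_2}(C^{\bot_h}) \cdot 1$ adjusted by the additive dimension. Concretely, as an additive code $C$ has $\F_2$-dimension $2k$ and $C^{\bot_h}$ has $\F_2$-dimension $2(n-k)$, and the resulting quantum code encodes $\tfrac12(2n - 2\dim_{\F_2} C^{\bot_h}) = 2k-n$ qubits into $n$ physical qubits, giving the claimed $[[n,2k-n,\cdot]]_2$ parameters. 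I would present this bookkeeping explicitly so the dimension count $2k-n$ is transparent.

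Next I would address the minimum distance. The general principle from \cite{Calderbank} is that the distance of the stabilizer code built from the self-orthogonal additive code $D := C^{\bot_h}$ equals the minimum weight of $D^{\perp_h} \setminus D$ when $D \subsetneq D^{\perp_h}$ (the impure/degenerate case), and equals the minimum weight of $D^{\perp_h}$ when $D = D^{\perp_h}$ (the pure case). Here $D^{\perp_h} = (C^{\bot_h})^{\bot_h} = C$, so $d'$ is the minimum weight of $C \setminus C^{\bot_h}$ precisely when $C^{\bot_h} \subsetneq C$, and otherwise $C = C^{\bot_h}$ is Hermitian self-dual, forcing $2k-n = 0$ and $d' = d(C) = d$. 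This matches the statement verbatim, so the proof is essentially a translation of the cited result into the language of linear (rather than merely additive) $\F_4$-codes.

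The main obstacle — really the only subtlety — is making sure the weight statement is stated on the correct side of the duality: the stabilizer group corresponds to the \emph{smaller} code $C^{\bot_h}$, and logical operators correspond to cosets of $C^{\bot_h}$ inside its Hermitian dual $C$, so degenerate (trivial) logical operators are exactly the nonzero elements of $C^{\bot_h}$. One must verify that passing from $C$ to $C^{\bot_h}$ and back is consistent with the hypothesis $C^{\bot_h}\subseteq C$ and does not swap the roles; this is where a careless application of \cite{Calderbank} could introduce an error. I would therefore spell out the correspondence $\mathcal{S} \leftrightarrow C^{\bot_h}$, note that $(C^{\bot_h})^{\bot_h}=C$ by finite-dimensionality of the Hermitian form, and conclude that non-trivial logical Pauli operators are represented by vectors in $C\setminus C^{\bot_h}$, whence $d'=\min\{\wt(v): v\in C\setminus C^{\bot_h}\}$ in the degenerate case and $d'=d(C)$ when equality $C=C^{\bot_h}$ holds. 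No further computation is needed beyond this identification.
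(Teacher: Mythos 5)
Your proposal is correct and follows essentially the same route as the paper, which simply cites Theorem 2 of \cite{Calderbank}; your write-up just makes explicit the bookkeeping (stabilizer $\leftrightarrow C^{\bot_h}$, $\F_2$-dimension count giving $2k-n$, and the distance being read off $C\setminus C^{\bot_h}$ in the impure case) that the cited theorem encapsulates. The only point worth stating explicitly in a full write-up is that for a \emph{linear} code the Hermitian dual coincides with the trace-Hermitian dual used for additive codes in \cite{Calderbank}, which is what justifies identifying $(C^{\bot_h})^{\perp}$ with $C$ in that framework.
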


\begin{proof}
The proof follows from \cite[Theorem 2]{Calderbank}.
\hfill $\square$\end{proof}

If the quantum code of Theorem $\ref{linear quantum definition}$ has minimum distance $d'=d$, then the code is called a {\em non-degenerate} (pure), otherwise (when $d'>d$), it is called {\em degenerate} (impure). Degenerate stabilizer codes are those that have stabilizer elements of weight less than the minimum distance of the code. 
Such non-trivial errors of weight less than $d$ are in correspondence to Pauli errors that act trivially on the quantum code \cite{decoders}. Degenerate codes have no analogous concept in classical coding theory, since there are no non-trivial errors that preserve the space of a classical code.
The {\em rate} of an $[\![n,k,d ]\!]_2$ quantum codes is defined by $r=\frac{k}{n}$. 
Note that the Hermitian dual-containing requirement comes from the requirement that the elements of a stabilizer subgroup need to commute. 
In practice, constructing codes over $\F_4$ that meet this condition and have good parameters is typically challenging. An alternative approach is to employ the CSS construction, a special case that involves two binary classical codes, where one is a subset of the other \cite{CS,S}.
The codes that will be constructed in this work are non-CSS.

The next construction extends a linear code, which is not necessarily Hermitian dual-containing, to a Hermitian dual-containing linear code of a larger length over $\F_4$ and allows us to construct quantum codes from such linear codes.  

\begin{theorem}\label{lisonek}\cite{Lisonek}.
Let $C$ be an ${[n,k]_4}$ linear code and $e=n-k- \dim(C \cap C^{\bot_h})$. Then, there exists a
quantum code with parameters $[\![ n+e,2k-n+e,d ]\!]_2$, where
\[ d\geq \min\{d(C), d(C+ C^{\bot_h}) +1\}.\]
\end{theorem}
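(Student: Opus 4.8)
The plan is to construct explicitly, from $C$, a Hermitian dual-containing linear code $\hat{C}\subseteq\F_4^{n+e}$ of dimension $k+e$, and then apply Theorem~\ref{linear quantum definition} to $\hat C$. The structural fact I would establish first is that the hull $H:=C\cap C^{\bot_h}$ is precisely the radical of the Hermitian form restricted to $C^{\bot_h}$: a vector $x\in C^{\bot_h}$ is orthogonal to all of $C^{\bot_h}$ exactly when $x\in(C^{\bot_h})^{\bot_h}=C$. Since $\dim H=n-k-e$ while $\dim C^{\bot_h}=n-k$, I would fix a subspace $D$ with $C^{\bot_h}=H\oplus D$, so $\dim D=e$ and the Hermitian form is nondegenerate on $D$; equivalently, for any basis $v_1,\dots,v_e$ of $D$ the Gram matrix $M=\big(\langle v_i,v_j\rangle_h\big)_{1\le i,j\le e}\in\F_4^{e\times e}$ is invertible. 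I would also record $C\cap D=0$ (as $D\cap C\subseteq D\cap H=0$) and $C+C^{\bot_h}=C\oplus D$, both of which will be needed for the distance estimate.

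Next I would build the $e$ new coordinates. From $\langle y,x\rangle_h=\langle x,y\rangle_h^2$ one gets that $M^{T}$ equals the entrywise square of $M$, so $M$ is a nondegenerate Hermitian matrix over $\F_4$; the classification of nondegenerate Hermitian forms over a finite field then supplies a matrix $W\in\F_4^{e\times e}$, with rows $w_1,\dots,w_e$, such that $\langle w_i,w_j\rangle_h=\langle v_i,v_j\rangle_h$ for all $i,j$ (signs are irrelevant in characteristic $2$). I would then define $\hat C\subseteq\F_4^{n+e}$ as the span of the vectors $(g_i,0)$, $i=1,\dots,k$, for a basis $g_1,\dots,g_k$ of $C$, together with the vectors $(v_j,w_j)$, $j=1,\dots,e$. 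Since $C\cap D=0$ and the $v_j$ are linearly independent, this spanning set is linearly independent, so $\dim\hat C=k+e$ and $\dim\hat C^{\bot_h}=n-k$.

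To prove $\hat C^{\bot_h}\subseteq\hat C$ I would exhibit $n-k$ linearly independent codewords of $\hat C$ that lie in $\hat C^{\bot_h}$. For $h$ in a basis of $H$, the vector $(h,0)\in\hat C$ is orthogonal to every $(g_i,0)$ because $h\in C^{\bot_h}$, and to every $(v_j,w_j)$ because $h\in C$ while $v_j\in C^{\bot_h}$; this gives $n-k-e$ such vectors. Each $(v_j,w_j)\in\hat C$ is orthogonal to every $(g_i,0)$ because $v_j\in C^{\bot_h}$, and $\langle(v_i,w_i),(v_j,w_j)\rangle_h=\langle v_i,v_j\rangle_h+\langle w_i,w_j\rangle_h=0$ by the choice of $W$; this gives $e$ more. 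These $n-k$ vectors are independent (a vanishing combination has first coordinate in $H\subseteq C$ plus an element of $D$, and $C\cap D=0$ forces all coefficients to vanish), so they span the $(n-k)$-dimensional space $\hat C^{\bot_h}$, proving $\hat C^{\bot_h}\subseteq\hat C$. When $e=0$ the construction degenerates to $D=0$, $\hat C=C$, and one simply applies Theorem~\ref{linear quantum definition} to $C$.

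Finally, Theorem~\ref{linear quantum definition} applied to $\hat C$ yields an $[[n+e,\, 2(k+e)-(n+e),\, d']]_2=[[n+e,\, 2k-n+e,\, d']]_2$ quantum code. To bound $d'$, I would take a nonzero minimum-weight codeword $\hat c=(c,u)$ realizing $d'$ — in $\hat C\setminus\hat C^{\bot_h}$ when that set is nonempty, otherwise in $\hat C\setminus\{0\}$. Expanding $\hat c$ in the spanning set shows $c\in C+C^{\bot_h}$ in general and $c\in C$ when $u=0$; moreover $c\ne 0$, since $c=0$ would force (using $C\cap D=0$) first $u=0$ and then $\hat c=0$. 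Hence, if $u=0$ then $c\in C\setminus\{0\}$ and $\wt(\hat c)=\wt(c)\ge d(C)$, whereas if $u\ne 0$ then $c\in(C+C^{\bot_h})\setminus\{0\}$ and $\wt(\hat c)=\wt(c)+\wt(u)\ge d(C+C^{\bot_h})+1$; in either case $\wt(\hat c)\ge\min\{d(C),\,d(C+C^{\bot_h})+1\}$, which is the claimed bound. The only step that is not routine bookkeeping with dimensions and orthogonality is the existence of the tail matrix $W$, i.e. the equivalence of all rank-$e$ nondegenerate Hermitian forms over $\F_4$; that is where I would expect to spend the most care.
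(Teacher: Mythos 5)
Your proof is correct. Note, though, that the paper does not prove this statement at all: Theorem \ref{lisonek} is quoted from Lison\v{e}k--Singh \cite{Lisonek}, with only the remark afterwards about computing $e$ for constacyclic codes. So there is no in-paper argument to compare against; what you have written is essentially a self-contained reconstruction of the Lison\v{e}k--Singh "nearly self-orthogonal" construction, carried out from the dual-containing side (you enlarge $C$ to $\hat C\subseteq \F_4^{n+e}$, whereas the original paper phrases it as appending $e$ columns to a generator matrix of the stabilizer-side code $C^{\bot_h}$ so that it becomes Hermitian self-orthogonal; the two viewpoints are dual to each other and yield the same code). Your decomposition $C^{\bot_h}=H\oplus D$ with $H=C\cap C^{\bot_h}$ the radical, the verification that the form is nondegenerate on $D$, the dimension count giving $\hat C^{\bot_h}\subseteq\hat C$, and the weight analysis splitting on $u=0$ versus $u\neq 0$ are all sound, and the distance bound you obtain holds for every nonzero codeword of $\hat C$, so it covers both the impure case ($d'$ taken over $\hat C\setminus\hat C^{\bot_h}$) and the self-dual case of Theorem \ref{linear quantum definition}.

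Two small points you leave implicit but that are needed and do hold: (i) the existence of $W$ with $\langle w_i,w_j\rangle_h=\langle v_i,v_j\rangle_h$ is exactly the statement that every nonsingular Hermitian matrix $M$ over $\F_4$ factors as $W\overline{W}^{T}$ with $W$ invertible, i.e., that all nondegenerate Hermitian forms of rank $e$ over $\F_4$ are equivalent to the standard one --- a classical fact, so your flagged "place to spend care" is fine; and (ii) in the case $u=0$ you need all $b_j=0$, i.e., the rows $w_1,\dots,w_e$ must be linearly independent; this follows immediately because their Gram matrix $M$ is nonsingular, but it is worth stating, since without it you would only get $c\in C+C^{\bot_h}$ and the bound $d(C)$ would not be available in that branch.
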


As it is stated in Proposition 2 of \cite{Lisonek} and Theorem 4.5.4 of \cite{RezaThesis}, the parameter $e$ of Theorem \ref{lisonek}, which measures Hermitian dual-containment deficiency of a linear code over $\F_4$, can be easily computed for linear cyclic and constacyclic codes over $\F_4$ using the formula $e=|A \cap \mu_{-2}(A)|$, where $A$ is the defining set of such code.
The above results are our main tools in this paper to construct infinite classes of binary quantum codes.

\section{Splitting for quaternary duadic constacyclic codes} \label{S:spliting}
This section explores the conditions for a duadic constacyclic code over $\F_4$, defined in Definition \ref{D:duadic}, to be Hermitian dual-containing. 
This exploration enables the identification of appropriate duadic constacyclic codes as ingredients for constructing binary quantum codes using the result of Theorem \ref{linear quantum definition}.
To enhance the readability of our results, and considering the commonly known nature of the techniques employed in this section, we will defer the proofs and additional details of this section to Appendix \ref{A:spliting}. 

Let $n$ be a positive odd integer and $a\in \F_4^\ast$. Recall that the multiplier $\mu_{-2}$ acts as an involution on each $4$-cyclotomic coset inside $\Omega_a$. Many of our discussions in this section is based on this fact. 
 Recall also that $\ord(a)=t \in \{1, 3\}$ and
$\ord_{tn}(4)=|Z(1)|$, where the 4-cyclotomic coset $Z(1)$ is computed modulo $tn$. 
We call a Hermitian dual-containing $a$-constacyclic code $C$ {\em minimal}, if there is no other Hermitian dual-containing $a$-constacyclic code $D$ such that $D \subsetneq C$.
The next proposition shows that all minimal Hermitian dual-containing $a$-constacyclic codes are duadic codes.

\begin{proposition}\label{PP:1}
 An $a$-constacyclic code $C \subsetneq \F_4^n$ is minimal Hermitian dual-containing if and only if 
 there exists a non-trivial splitting $(X,S_1, S_2)$ of $\Omega_a$ that is given by $\mu_{-2}$ and $C$ is an odd-like duadic code with respect to $X$.    
\end{proposition}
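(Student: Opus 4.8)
The plan is to prove both directions using the characterization of Hermitian dual-containment from Theorem \ref{T:Hermitian}, namely that an $a$-constacyclic code with defining set $A$ satisfies $C^{\bot_h} \subseteq C$ if and only if $A \cap -2A = \emptyset$ (equivalently $\mu_{-2}(A) \cap A = \emptyset$). Throughout I will identify $\mu_{-2}$ with the multiplier $x \mapsto -2x \bmod tn$ acting on $\Omega_a$, and note that since defining sets are unions of $4$-cyclotomic cosets and $\mu_{-2}$ permutes these cosets (as $\gcd(tn, 2) = 1$), the condition $A \cap \mu_{-2}(A) = \emptyset$ is a condition on whole cosets.

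\emph{($\Leftarrow$)} Suppose $(X, S_1, S_2)$ is a splitting given by $\mu_{-2}$ and $C$ is the odd-like duadic code with defining set, say, $S_1$ (the case $S_2$ is symmetric). Since $\mu_{-2} S_1 = S_2$ and $S_1 \cap S_2 = \emptyset$, we get $S_1 \cap \mu_{-2}(S_1) = S_1 \cap S_2 = \emptyset$, so by Theorem \ref{T:Hermitian}, $C^{\bot_h} \subseteq C$. For minimality, I would argue as follows: any Hermitian dual-containing $a$-constacyclic code $D$ has defining set $B$ with $B \cap \mu_{-2}(B) = \emptyset$; in particular $B$ contains no $\mu_{-2}$-symmetric coset (one with $Z(s) = \mu_{-2}(Z(s))$), so $B$ must be disjoint from every coset in $X$, forcing $B \subseteq S_1 \cup S_2$. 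If additionally $D \subseteq C$, then $B \supseteq S_1$, so $B = S_1$ or $B = S_1 \cup (\text{some cosets of } S_2)$; but $B$ containing a coset of $S_1$ and also a coset $Z(s) \subseteq S_2$ would give $\mu_{-2}(Z(s)) \subseteq S_1 \subseteq B$, contradicting $B \cap \mu_{-2}(B) = \emptyset$ unless $S_1$ contributes nothing there — so in fact $B = S_1$ and $D = C$. Hence $C$ is minimal. (The argument that $X$ collects exactly the $\mu_{-2}$-fixed cosets needs care; see the obstacle paragraph.)

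\emph{($\Rightarrow$)} Conversely, suppose $C \subsetneq \F_4^n$ is minimal Hermitian dual-containing with defining set $A$. By Theorem \ref{T:Hermitian}, $A \cap \mu_{-2}(A) = \emptyset$. Set $S_1 = A$, and let $X$ be the union of all $4$-cyclotomic cosets $Z(s) \subseteq \Omega_a$ with $\mu_{-2}(Z(s)) = Z(s)$ that are not in $A$; since $A \cap \mu_{-2}(A) = \emptyset$, no $\mu_{-2}$-fixed coset lies in $A$, so $X$ is simply the union of all $\mu_{-2}$-fixed cosets in $\Omega_a$. Define $S_2 = \Omega_a \setminus (S_1 \cup X)$. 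I must check: (i) $\mu_{-2}(S_1) = S_2$; (ii) $S_1, S_2, X$ are nonempty and disjoint; (iii) the splitting axioms hold. Disjointness of $S_1$ and $X$ holds by construction; $S_2$ is disjoint from both by definition; nonemptiness of $S_1$ follows from $C \subsetneq \F_4^n$ (so $A \neq \emptyset$). The key points are $\mu_{-2}(S_1) \subseteq S_2$ — which follows because $\mu_{-2}(A)$ is disjoint from $A$ (hypothesis) and from $X$ (as $\mu_{-2}$ permutes the fixed cosets among themselves and $\mu_{-2}(A)$ contains none of them since $A$ contains none) — and the reverse inclusion $S_2 \subseteq \mu_{-2}(S_1)$, which is exactly where minimality is used: if some coset $Z(s) \subseteq S_2$ were not in $\mu_{-2}(S_1)$, then $\mu_{-2}(Z(s))$ is also not in $S_1$ (since $\mu_{-2}$ is an involution on cosets modulo... actually $\mu_{-2}^2 = \mu_4 = \mathrm{id}$ on cosets, as multiplication by $4$ fixes every $4$-cyclotomic coset), so $A \cup \mu_{-2}(Z(s))$ would still be a defining set with empty intersection with its own $\mu_{-2}$-image, giving a strictly smaller Hermitian dual-containing code $D \subsetneq C$ — contradiction. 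This also yields nonemptiness of $S_2$ (else $\mu_{-2}(S_1) = \emptyset$, impossible) and of $X$: if $X = \emptyset$ then $\Omega_a = S_1 \sqcup S_2$ with $\mu_{-2} S_1 = S_2$, which is the classical duadic situation; one still has a valid splitting, but one must check this does not contradict $|X| \ge 1$ — actually the proposition as stated does not require $|X| > 1$, only that a splitting exists, so $X$ could in principle be small, but nonemptiness must be argued (this ties into the obstacle below).

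\emph{Main obstacle.} The subtle point throughout is the interplay between the involution $\mu_{-2}$ acting on $\Omega_a$ and the requirement (part of Definition \ref{splitting def}) that $X$ be \emph{nonempty}. I expect the crux to be showing that whenever a minimal Hermitian dual-containing code exists with $C \subsetneq \F_4^n$, the set of $\mu_{-2}$-fixed cosets together with leftover cosets forced into $X$ is genuinely nonempty — i.e., that one cannot have $\Omega_a = S_1 \sqcup \mu_{-2}(S_1)$ with nothing left over, OR that if this does happen it is still covered by the statement. I would handle this by examining whether $-1 \in \langle 4 \rangle \pmod{tn}$ or by a counting/parity argument: since $|\Omega_a| = n$ is odd, a free involution on a set partitioned into two swapped halves is impossible unless there is a fixed part, so $X \neq \emptyset$ automatically. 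Verifying that $\mu_{-2}$ descends to a well-defined involution on the set of $4$-cyclotomic cosets modulo $tn$ — using $\mu_{-2}^2 = \mu_4$ and that $\mu_4$ fixes every $4$-cyclotomic coset — is the technical lemma underpinning the whole argument, and I would state and prove it first.
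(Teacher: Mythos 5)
Your proposal is correct and follows essentially the same route as the paper: both directions reduce to the criterion $A\cap\mu_{-2}(A)=\emptyset$ of Theorem \ref{T:Hermitian}, minimality is exploited by enlarging the defining set with a leftover coset, and the nonemptiness of $X$ is settled exactly as in the paper by the parity argument that $|S_1|=|S_2|$ while $|\Omega_a|=n$ is odd. The only cosmetic difference is that you define $X$ as the union of all $\mu_{-2}$-fixed cosets and deduce $S_2$, whereas the paper sets $S_2=\mu_{-2}(A)$ and then shows every coset of the complement is fixed; the worry you raise in the converse direction is also unfounded, since condition (3) of Definition \ref{splitting def} already forces every coset of $X$ to be $\mu_{-2}$-fixed.
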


Inspired by this observation, in the rest of this section, we investigate the existence of a non-trivial splitting for $\Omega_a$ that is given by $\mu_{-2}$.  
The next two lemmas classify when $\Omega_a$ or one of its 4-cyclotomic cosets is fixed by $\mu_{-2}$, respectively.  

\begin{lemma}\label{L:1}
 Let $n$ be a positive odd integer and $a\in \F_4^\ast$ with $\ord(a)$=t. Then $tn \mid 2^{2j-1}+1$ for some $1\le j \le r=\ord_{tn}(4)$ if and only if $\mu_{-2}(Z(s))=Z(s)$ for each $s \in \Omega_a$.  
\end{lemma}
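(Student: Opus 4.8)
The plan is to unwind the statement ``$\mu_{-2}(Z(s))=Z(s)$ for each $s\in\Omega_a$'' into an elementary condition on $tn$, using the fact that the $4$-cyclotomic cosets are exactly the orbits of multiplication by $4$ on $\Z/(tn)\Z$, and that $\mu_{-2}$ fixes $Z(s)$ precisely when $-2s$ lies in the $\langle 4\rangle$-orbit of $s$. First I would observe that, since $\gcd(tn,4)=1$ (as $tn$ is odd), for each $s\in\Omega_a$ we have $\mu_{-2}(Z(s))=Z(s)$ iff $-2s\equiv 4^{i}s\pmod{tn}$ for some $i$. It is convenient to start with $s=1$: the condition $\mu_{-2}(Z(1))=Z(1)$ says $-2\equiv 4^{i}\pmod{tn}$ for some $0\le i\le r-1$, equivalently $2^{2i}+2\equiv 0$, i.e. $2(2^{2i-1}+1)\equiv 0\pmod{tn}$; since $tn$ is odd this is $tn\mid 2^{2i-1}+1$. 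Writing $j=i$ (and noting $i=0$ would force $tn\mid 2^{-1}+1$, which one should handle by shifting the exponent, so that the clean range is $1\le j\le r$) gives exactly the displayed divisibility $tn\mid 2^{2j-1}+1$ for some $1\le j\le r$.

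Next I would argue the two implications. For the ``if'' direction, suppose $tn\mid 2^{2j-1}+1$ for some $1\le j\le r$. Then $-2\equiv 2^{2j-1}\cdot 2 \cdot(-1)\cdot(-1)$— more directly, $2^{2j}\equiv -2\pmod{tn}$, i.e. $4^{j}\equiv -2$. Hence for every $s$, $\mu_{-2}(s)=-2s\equiv 4^{j}s\pmod{tn}$, which lies in $Z(s)$; since $\mu_{-2}$ is a bijection on $\Z/(tn)\Z$ and maps the orbit $Z(s)$ into itself, we get $\mu_{-2}(Z(s))=Z(s)$ for all $s\in\Omega_a$ (indeed for all $s$). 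For the ``only if'' direction, assume $\mu_{-2}(Z(s))=Z(s)$ for all $s\in\Omega_a$; in particular for $s=1$, so $-2\equiv 4^{i}\pmod{tn}$ for some $0\le i\le r-1$. As computed above this yields $tn\mid 2^{2i-1}+1$ when $i\ge 1$; if $i=0$, then $-2\equiv 1\pmod{tn}$, so $tn\mid 3$, and then $4\equiv 1$ so $r=1$ and one checks $tn\mid 2^{1}+1=3$ directly, i.e. $j=1$ works. Either way we land in the range $1\le j\le r$.

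The one genuinely fiddly point — and the step I expect to need the most care — is the bookkeeping of the exponent range $1\le j\le r$: the naive algebra produces $4^{i}\equiv -2$ with $0\le i\le r-1$, and I must make sure the resulting exponent $2i-1$ (or, after using $4^{r}\equiv 4^{0}$ to rotate, $2(i+r)-1$) always lands in an index $j$ with $1\le j\le r$, and conversely that any $j$ in that range corresponds to a legitimate power of $4$ modulo $tn$. This is just the observation that the powers $4^{0},4^{1},\dots,4^{r-1}$ are all distinct and exhaust $\langle 4\rangle$, together with $2^{2r-1}+1\equiv 2^{-1}+1\pmod{tn}$ when $4^{r}\equiv1$; I would phrase it cleanly by noting $-2\in\langle 4\rangle$ iff there is $j$ with $1\le j\le r$ and $2^{2j}\equiv-2$, and then multiplying through and cancelling the (odd-invertible) factor $2$. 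Everything else is a one-line consequence of $Z(s)$ being a $\langle 4\rangle$-orbit and $\mu_{-2}$ being multiplication by the fixed scalar $-2$.
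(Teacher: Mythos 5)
Your proof is correct and follows essentially the same route as the paper's: specialize to $s=1$ to translate $\mu_{-2}(Z(1))=Z(1)$ into $4^j\equiv-2\pmod{tn}$, multiply by $2s$ and cancel the odd-invertible factor to get the divisibility condition, and run the argument in reverse for the converse. Your extra bookkeeping around the exponent range (the $i=0$ case forcing $tn\mid 3$) is a careful touch the paper glosses over, but it does not change the substance of the argument.
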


\begin{lemma}\label{L:2}
    Let $n$ be a positive odd integer and $a\in \F_4^\ast$ with $\ord(a)$=t.
    Let $s\in \Omega_a$ such that $\gcd(n,s)=m$. Then $\mu_{-2}(Z(s))=Z(s)$ if and only if $t\frac{n}{m} \mid 2^{2j-1}+1$ for some integer $1\le j \le r=\ord_{tn}(4)$.
\end{lemma}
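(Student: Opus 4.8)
The plan is to reduce Lemma \ref{L:2} to Lemma \ref{L:1} applied to a shorter length. Fix $s \in \Omega_a$ with $\gcd(n,s) = m$. Write $s = m s'$ where $\gcd(n/m, s') = 1$, and note that $\gcd(t, s) $ need not be controlled, but since $s \equiv 1 \pmod t$ (as $s \in \Omega_a$), we have $\gcd(t,s)=1$, hence $\gcd(tn/m, s)$ is a unit factor issue that I would handle by observing $s$ is invertible modulo $t n/m$ when restricted appropriately. The key structural observation is that $\mu_{-2}(Z(s)) = Z(s)$ (cosets modulo $tn$) is equivalent to the existence of $1 \le j \le r$ with $4^j s \equiv -2 s \pmod{tn}$, i.e. $s(4^j + 2) \equiv 0 \pmod{tn}$, i.e. $4^j + 2 \equiv 0 \pmod{tn/\gcd(tn,s)}$. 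Since $\gcd(t,s)=1$, we get $\gcd(tn,s) = \gcd(n,s) = m$, so the condition becomes $4^j + 2 \equiv 0 \pmod{t n/m}$, equivalently $2(2^{2j-1}+1) \equiv 0 \pmod{tn/m}$, and since $tn/m$ is odd this is $tn/m \mid 2^{2j-1}+1$.

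The one remaining subtlety is the range of $j$: Lemma \ref{L:1} phrased the condition with $1 \le j \le r = \ord_{tn}(4)$, whereas the natural argument above produces some $j$ with $4^j s \equiv -2s$, a priori only $1 \le j \le r' := |Z(s)|$, the size of the coset of $s$ modulo $tn$. First I would note $r' \mid r$ and that if $4^{j_0}s \equiv -2s \pmod{tn}$ for some $j_0$, then $4^{j_0 + \ell r'} s \equiv -2 s$ for all $\ell \ge 0$; choosing $\ell$ so that $j_0 + \ell r'$ lands in $[1,r]$ (possible since $r' \mid r$ and $r \ge r'$) handles the forward direction, and the reverse direction is immediate since a $j$ in the smaller range also lies in $[1,r]$. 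Symmetrically, one can reduce a $j \le r$ modulo $r'$ using $4^{r'} s \equiv s$. So the two range conventions are interchangeable here. I expect the main (minor) obstacle is exactly this bookkeeping of which modulus and which range of exponents one is working with — in particular making the step $\gcd(tn, s) = m$ airtight using $s \equiv 1 \pmod t$, and confirming that dividing through by the common factor $2$ is legitimate because $tn/m$ is odd. Once those are pinned down, the lemma is essentially Lemma \ref{L:1} for the length $n/m$ (with the same $t$), and I would phrase the final line as: apply Lemma \ref{L:1} with $n$ replaced by $n/m$, noting $\ord_{t(n/m)}(4)$ divides $r$ so the existence of a suitable $j$ transfers.
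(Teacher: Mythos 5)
Your proposal is correct and follows essentially the same route as the paper: reduce $\mu_{-2}(Z(s))=Z(s)$ to $4^js\equiv -2s\pmod{tn}$, i.e.\ $2s(2^{2j-1}+1)\equiv 0\pmod{tn}$, and divide out by $\gcd(tn,s)=m$ using the oddness of $tn$. You are in fact more careful than the paper on two minor points it glosses over — verifying $\gcd(tn,s)=\gcd(n,s)$ via $s\equiv 1\pmod t$, and reconciling the exponent range $1\le j\le r$ with the coset size $|Z(s)|$ — so no issues.
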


As we mentioned earlier, the families of cyclic and $\omega$-constacyclic codes of length $n$ over $\F_4$ are monomially equivalent provided that $\gcd(n,3)=1$. 
Therefore, we study the existence of splittings for each of these two families separately, when $\gcd(n,3)=3$.  
First, we investigate the existence of splitting for duadic $\omega$-constacyclic codes of length $n$ over $\F_4$ when $3\mid n$. 

\begin{proposition}\label{P:3}
Let $n=3^in_1$ be a positive odd integer such that $i \ge 1$, $n_1 \equiv 1 \pmod 3$, and $3n \nmid 2^{2j-1}+1$ for each integer $1\le j \le r=\ord_{3n}(4)$. Then the following hold.
\begin{enumerate}
\item $\mu_{-2}(Z(1)) \neq Z(1)$.
\item $3^{i+1} \mid 2^{3^i}+1$.
\item $\mu_{-2}(Z(n_1))=Z(n_1)$ and $|Z(n_1)|=3^i$. 
\item For any positive integer $ 1 \le m \le 3^{i+1}-1$ such that $m\equiv 1 \pmod 3$, we have $Z(mn_1)=Z(n_1)$.
\end{enumerate}
The above items remain true for $n_1 \equiv 2 \pmod 3$ after replacing $Z(n_1)$ with $Z(2n_1)$.
\end{proposition}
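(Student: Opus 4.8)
The plan is to work with the structure of $4$-cyclotomic cosets modulo $3n = 3^{i+1} n_1$ and exploit the hypothesis $3n \nmid 2^{2j-1}+1$ together with elementary facts about the multiplicative order of $2$ modulo powers of $3$. First I would establish Part (2): since $3 \mid 2+1$, lifting-the-exponent (or a direct induction using $2^{3^k}+1 = (2^{3^{k-1}})^3 + 1 = (2^{3^{k-1}}+1)(2^{2\cdot 3^{k-1}} - 2^{3^{k-1}} + 1)$ and checking the second factor is divisible by $3$ exactly once) shows $3^{k+1} \parallel 2^{3^k}+1$, in particular $3^{i+1} \mid 2^{3^i}+1$. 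Next, for Part (1): if $\mu_{-2}(Z(1)) = Z(1)$ then by Lemma \ref{L:1} we would have $3n \mid 2^{2j-1}+1$ for some $1 \le j \le r$, contradicting the hypothesis; so $\mu_{-2}(Z(1)) \neq Z(1)$ is immediate.

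For Part (3), note $\gcd(n, n_1) = n_1$ (since $n = 3^i n_1$), so by Lemma \ref{L:2}, $\mu_{-2}(Z(n_1)) = Z(n_1)$ iff $3 \cdot \frac{n}{n_1} = 3^{i+1} \mid 2^{2j-1}+1$ for some $1 \le j \le r$. By Part (2), $3^{i+1} \mid 2^{3^i}+1$, and $3^i$ is odd, so writing $3^i = 2j-1$ with $j = (3^i+1)/2$ gives a candidate exponent; I would need to check $j \le r = \ord_{3n}(4)$. This should follow because $\ord_{3^{i+1}}(4) = \ord_{3^{i+1}}(2)/\gcd(2,\ldots)$ and one computes $\ord_{3^{i+1}}(2) = 2\cdot 3^i$ (as $2$ is primitive mod $9$ hence mod $3^{i+1}$ by the Dickson fact cited before the proposition), so $\ord_{3^{i+1}}(4) = 3^i$; since $3^{i+1} \mid 3n$, $\ord_{3^{i+1}}(4) \mid r$, giving $3^i \mid r$ hence $j = (3^i+1)/2 \le 3^i \le r$. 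For $|Z(n_1)| = 3^i$: the size is the least $\ell$ with $4^\ell n_1 \equiv n_1 \pmod{3n}$, i.e. $4^\ell \equiv 1 \pmod{3^{i+1}}$, which is $\ord_{3^{i+1}}(4) = 3^i$.

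For Part (4): if $m \equiv 1 \pmod 3$ and $1 \le m \le 3^{i+1}-1$, then $mn_1 \in \Omega_a$ needs checking ($\Omega_a = \{3k+1 : 0 \le k \le n-1\}$ in the constacyclic case; since $n_1 \equiv 1 \pmod 3$ and $m \equiv 1 \pmod 3$ we get $mn_1 \equiv 1 \pmod 3$, and the range $mn_1 \le (3^{i+1}-1)n_1 < 3n$ places it in $\Omega_a$). Then $Z(mn_1) = Z(n_1)$ iff $m \equiv 4^\ell \pmod{3^{i+1}}$ for some $\ell$ (after cancelling $n_1$, which is invertible mod $3^{i+1}$), i.e. iff $m$ lies in the cyclic subgroup generated by $4$ in $(\Z/3^{i+1}\Z)^\ast$; since $\ord_{3^{i+1}}(4) = 3^i = |\{x : x \equiv 1 \pmod 3\}| = |(\Z/3^{i+1}\Z)^\ast|/2$ and $4 \equiv 1 \pmod 3$, the subgroup $\langle 4 \rangle$ is exactly the index-$2$ subgroup $\{x \equiv 1 \pmod 3\}$, so every such $m$ works. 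Finally, the $n_1 \equiv 2 \pmod 3$ case is symmetric: then $2n_1 \equiv 1 \pmod 3$ lies in $\Omega_a$, $\gcd(n, 2n_1) = n_1$ still, and the same arguments go through with $n_1$ replaced by $2n_1$ throughout.

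The main obstacle I expect is the bookkeeping in Part (3) around verifying $j \le r$, i.e. pinning down that $\ord_{3^{i+1}}(4) = 3^i$ and that this divides $r = \ord_{3n}(4)$; this is where the cited primitivity fact about $(\Z/p^t\Z)^\ast$ and the interplay between the $3$-part and the $n_1$-part of the modulus $3n$ must be handled carefully. Everything else reduces to the standard description of cyclotomic coset sizes as orders of $4$ in the relevant quotient group, combined with Lemmas \ref{L:1} and \ref{L:2}.
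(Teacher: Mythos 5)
Your proposal is correct and follows essentially the same route as the paper: Parts (1), (3), and (4) are handled exactly as in the paper via Lemmas \ref{L:1} and \ref{L:2} and the fact that $2$ is a primitive root modulo $3^{i+1}$ (so that $\ord_{3^{i+1}}(4)=3^i$), and your extra care in checking $j\le r$ in Part (3) is a valid point the paper leaves implicit (it also follows simply because exponents can be reduced modulo $r$). The only divergence is Part (2), where you use the lifting-the-exponent factorization instead of the paper's observation that $2^{3^i}\equiv -1 \pmod{3^{i+1}}$ by primitivity; both arguments are fine and yours even gives exact divisibility.
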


A special case of the above result, leading to an interesting splitting of $\Omega_\omega$ that is given by $\mu_{-2}$, is stated below. 

\begin{corollary}\label{C:1}
Let $n=3^in_1$ be a positive odd integer such that  $n_1>1$ and $i\ge 1$. 
If $\gcd(n_1, 2^{2j-1}+1)=1$ for each integer $1\le j \le r=\ord_{3n}(4)$, then there exists a splitting of $\Omega_\omega$ that is given by $\mu_{-2}$ in the form $(X,S_1,S_2)$, where $|S_1|=|S_2|=\frac{3^i(n_1-1)}{2}$, $|X|=3^i$, and

 $$X=\begin{cases} Z(n_1) & \ \text{if}\  n_1\equiv 1 \pmod 3\\
 Z(2n_1) &  \ \text{if}\  n_1\equiv 2 \pmod 3.
\end{cases}$$
 \end{corollary}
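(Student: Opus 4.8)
The plan is to apply Proposition \ref{P:3} and Lemma \ref{L:2} directly, once we check that the hypothesis $\gcd(n_1, 2^{2j-1}+1)=1$ for all $1\le j\le r$ is enough to (a) guarantee that we are in the setting of Proposition \ref{P:3}, i.e. that $3n\nmid 2^{2j-1}+1$ for every $1\le j\le r$, and (b) force the fixed set $X$ of the splitting to be exactly $Z(n_1)$ (resp.\ $Z(2n_1)$). First I would treat the case $n_1\equiv 1\pmod 3$; the case $n_1\equiv 2\pmod 3$ follows by the same argument with $n_1$ replaced by $2n_1$, as recorded in Proposition \ref{P:3}.

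\medskip

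\emph{Step 1: the hypothesis of Proposition \ref{P:3} holds.} Suppose for contradiction that $3n \mid 2^{2j-1}+1$ for some $1\le j\le r$. Then in particular $n_1 \mid 2^{2j-1}+1$, contradicting $\gcd(n_1,2^{2j-1}+1)=1$ unless $n_1=1$; but $n_1>1$ by assumption. Hence $3n\nmid 2^{2j-1}+1$ for every $1\le j\le r$, so Proposition \ref{P:3} applies. In particular, by Part (1), $\mu_{-2}(Z(1))\neq Z(1)$, so $\mu_{-2}$ is not the identity on the cyclotomic cosets, and by Lemma \ref{L:1} there is at least a chance of a splitting given by $\mu_{-2}$.

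\medskip

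\emph{Step 2: identify the fixed set $X$.} Using Algorithm \ref{A:Splitting}, the fixed set $X$ of any splitting given by $\mu_{-2}$ must contain $X_0 = \bigcup\{Z(s): s\in CL,\ \tfrac{3n}{\gcd(s,n)}\mid 2^{2j-1}+1 \text{ for some } 1\le j\le r\}$, and $X$ can be taken equal to $X_0$. So I would show that for $s\in\Omega_\omega$, the coset $Z(s)$ is fixed by $\mu_{-2}$ if and only if $\gcd(s,n)$ is divisible by $n_1$, i.e. $Z(s)\subseteq Z(n_1)$ by Proposition \ref{P:3}(4). By Lemma \ref{L:2}, $\mu_{-2}(Z(s))=Z(s)$ iff $t\tfrac{n}{m}\mid 2^{2j-1}+1$ for some $1\le j\le r$, where $m=\gcd(n,s)$ and $t=3$. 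Write $m = 3^{a}m_1$ with $3\nmid m_1$ and $m_1\mid n_1$; then $3\tfrac{n}{m} = 3^{i+1-a}\tfrac{n_1}{m_1}$. Since $\gcd(n_1,2^{2j-1}+1)=1$ and $2^{2j-1}+1$ is odd, $3^{i+1-a}\tfrac{n_1}{m_1}\mid 2^{2j-1}+1$ forces $\tfrac{n_1}{m_1}=1$, i.e. $m_1=n_1$, hence $n_1\mid m$. Conversely, if $n_1\mid m$ then $\gcd(n_1,s)=n_1$, so $Z(s)\subseteq Z(mn_1')$-type cosets and by Proposition \ref{P:3}(3)--(4) the coset is fixed (one checks $3\cdot\tfrac{n}{m}=3^{i+1-a}\mid 2^{3^i}+1$ since $3^{i+1-a}\le 3^{i+1}$ and $3^{i+1}\mid 2^{3^i}+1$ by Part (2), and $3^i$ is odd). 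Therefore $X = \bigcup_{n_1\mid s}Z(s) = Z(n_1)$, the last equality by Proposition \ref{P:3}(4); and $|X|=|Z(n_1)|=3^i$ by Part (3).

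\medskip

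\emph{Step 3: build the splitting and count.} With $X=Z(n_1)$ fixed pointwise-as-a-coset by $\mu_{-2}$, partition the remaining coset leaders $CL\setminus X_0$ into pairs $(b_i,c_i)$ with $c_i$ the coset leader of $Z(-2b_i)$, as in Algorithm \ref{A:Splitting}; these pairs are genuinely pairs (no $b_i$ with $Z(-2b_i)=Z(b_i)$) precisely by Step 2. Choosing one coset from each pair gives $S_1$ and the complementary choice gives $S_2$, so that $\mu_{-2}S_1=S_2$, $\mu_{-2}S_2=S_1$, $X\cup S_1\cup S_2=\Omega_\omega$, and all three are nonempty (nonemptiness of $S_1,S_2$ follows from $\mu_{-2}(Z(1))\neq Z(1)$, so $Z(1)\not\subseteq X$). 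This is exactly Definition \ref{splitting def} for $\mu_{-2}$. Finally $|\Omega_\omega|=n=3^in_1$ and $|X|=3^i$, so $|S_1|=|S_2|=\tfrac{n-3^i}{2}=\tfrac{3^i(n_1-1)}{2}$; note $n_1-1$ is even since $n_1\equiv 1\pmod 3$ and $n_1$ is odd, so this is an integer. I would conclude by remarking that the case $n_1\equiv 2\pmod 3$ is identical with $n_1\rightsquigarrow 2n_1$ and $Z(n_1)\rightsquigarrow Z(2n_1)$, invoking the last sentence of Proposition \ref{P:3}.

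\medskip

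\emph{Main obstacle.} The delicate point is Step 2: showing that $\gcd(n_1,2^{2j-1}+1)=1$ across the full range $1\le j\le r$ really does pin down $X$ to be $Z(n_1)$ and nothing larger. The ``$X\supseteq Z(n_1)$'' direction is the easy half (Proposition \ref{P:3}(2)--(3) hands it over); the ``$X\subseteq Z(n_1)$'' direction is where the coprimality hypothesis is used essentially, and one must be careful with the interplay between the $3$-part and the prime-to-$3$ part of $\gcd(n,s)$, and with the fact that $2^{2j-1}+1$ is odd so the factor $3^{i+1-a}$ and the factor $\tfrac{n_1}{m_1}$ can be handled independently. Everything else is bookkeeping with cyclotomic cosets and the counting identity $|X|+|S_1|+|S_2|=n$.
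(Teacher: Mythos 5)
Your proposal is correct and follows essentially the same route as the paper: use Lemma \ref{L:2} together with Proposition \ref{P:3} to show that the cosets fixed by $\mu_{-2}$ are exactly those contained in $Z(n_1)$ (the coprimality hypothesis ruling out any other fixed coset), then pair the remaining cosets with their images under $\mu_{-2}$ to form $S_1$ and $S_2$ and count. Your Step 1, explicitly checking that $3n\nmid 2^{2j-1}+1$ so that Proposition \ref{P:3} applies, is a small point the paper leaves implicit, but it does not change the argument.
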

Next, we discuss basic results about the existence of splittings for $\Omega_1=\Z/n\Z$ that are given by $\mu_{-2}$, when $n$ is divisible by 3. 

\begin{proposition}\label{P:4}
  Let $n=3^in_1$ be a positive integer such that $i \ge 1$, $n_1 \not\equiv 0 \pmod 3$, and $n \nmid 2^{2j-1}+1$ for each integer $1\le j \le r$. 
  Then, the following hold.
\begin{enumerate}
\item $\mu_{-2}(Z(1)) \neq Z(1)$.
\item $Z(sb) \neq Z(s'b)$ for each $1\le s\neq s' \le 3$ and $b$ such that $\gcd(3,b)=1$. Moreover, $Z(2b)=2Z(b)$ for any $b \in \Z/n\Z$.
\item  We have $\mu_{-2}(Z(3^\ell sn_1))=Z(3^\ell sn_1)$ and $|Z(3^\ell sn_1)|=3^{i-\ell-1}$ for each $1\le s \le 2$ and $0\le \ell \le i-1$. 
\item For any positive integer $m \not \equiv 0 \pmod 3$, we have 
\[ 
Z(m 3^\ell n_1)=
\begin{cases}
Z(3^\ell n_1)    & (m,n_1) \equiv (1,1) \ \text{or} \ (1,2) \pmod 3 \\
Z(2\times 3^\ell n_1)     & (m,n_1) \equiv (2,2) \ \text{or} \ (2,1) \pmod 3 \\
\end{cases}
\]
for each $0\le \ell \le i-1$.
\end{enumerate} 
\end{proposition}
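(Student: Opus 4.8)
The plan is to prove all four items essentially by mimicking the corresponding steps of Proposition \ref{P:3}, tracking carefully how the cyclotomic cosets modulo $n$ (rather than modulo $3n$) behave, since here $\Omega_1 = \Z/n\Z$ contains every residue class and so the statements involve all three of $b, 2b, 3b$ where $\Omega_\omega$ only saw one representative. First I would write $n = 3^i n_1$ with $\gcd(3,n_1)=1$ and recall the key number-theoretic fact quoted before Proposition \ref{P:3}: $2$ is a primitive root modulo $3$ and modulo $9$, hence modulo $3^{i+1}$, and more precisely $2$ is a primitive root modulo $3^j$ for every $j \ge 1$, so $(\Z/3^j\Z)^\ast$ is cyclic of order $2\cdot 3^{j-1}$ generated by $2$.

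For Part (1), I would invoke Lemma \ref{L:1} (or rather its contrapositive): the hypothesis $n \nmid 2^{2j-1}+1$ for all $1 \le j \le r$ is, via Lemma \ref{L:1} applied with $t=1$, exactly the statement that $\mu_{-2}(Z(s)) = Z(s)$ fails for some $s$, and in fact—since $\mu_{-2}$ fixes $Z(1)$ iff it fixes every coset (same lemma)—it fails for $s=1$. For Part (2), the claim $Z(sb) \ne Z(s'b)$ for distinct $s,s' \in \{1,2,3\}$ and $\gcd(3,b)=1$: I would argue that all elements of $Z(sb) = \{4^\ell s b \bmod n\}$ are congruent to $sb \bmod 3$ (since $4 \equiv 1 \pmod 3$), and $sb, s'b$ have distinct residues mod $3$ when $s \ne s' \in \{1,2,3\}$ and $3 \nmid b$ — one checks the three residue pairs. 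Hence the cosets are disjoint. The identity $Z(2b) = 2Z(b)$ is immediate since $2$ commutes with multiplication by $4^\ell$ modulo $n$.

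For Part (3), fix $1 \le s \le 2$ and $0 \le \ell \le i-1$, and set $m' = \gcd(3^\ell s n_1, n)$. Since $\gcd(s,3)=\gcd(s n_1, n)$... more carefully, $\gcd(3^\ell s n_1, 3^i n_1) = 3^\ell n_1$ because $\gcd(s,3)=1$ and $\gcd(s, n_1)$ divides $n_1$ — actually one needs $\gcd(s n_1, n_1) $; since $s \in \{1,2\}$, $\gcd(s, 3^{i-\ell})=1$, giving $m' = 3^\ell n_1 \gcd(s, 3^{i-\ell}) = 3^\ell n_1$. Then $n/m' = 3^{i-\ell}$, and by Lemma \ref{L:2} we need $3^{i-\ell} \mid 2^{2j-1}+1$ for some odd exponent in range; but Proposition \ref{P:3} Part (2) gives $3^{i-\ell} \mid 3^{i-\ell} \mid 2^{3^{i-\ell-1}}+1$ and $3^{i-\ell-1}$ is odd, so such $j$ exists (one should confirm $3^{i-\ell-1}+1 \le r$, or rather that $Z(1)$'s size $r$ exceeds it — this is where I expect to lean on $j \le r$ being automatically satisfied because the order of $4$ modulo $n$ is a multiple of the order modulo $3^{i-\ell}$). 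The size $|Z(3^\ell s n_1)| = 3^{i-\ell-1}$ follows because $4^a \cdot 3^\ell s n_1 \equiv 3^\ell s n_1 \pmod{3^i n_1}$ iff $4^a \equiv 1 \pmod{3^{i-\ell}}$, and the order of $4$ in $(\Z/3^{i-\ell}\Z)^\ast$ is $3^{i-\ell-1}$ (half of $2\cdot 3^{i-\ell-1}$, since $4 = 2^2$ and $2$ is primitive). Part (4) is the exact analogue of Proposition \ref{P:3} Part (4): writing $m \not\equiv 0 \pmod 3$, use that $2$ is primitive mod $3^{i-\ell}$ to write $m \equiv 2^c \pmod{3^{i-\ell}}$; the residue of $m$ mod $3$ determines the parity of $c$, splitting into the cases $m \equiv 1$ (even $c$, so $m \equiv 4^{c/2}$, giving $Z(m 3^\ell n_1) = Z(3^\ell n_1)$ after adjusting by the $n_1$-residue) versus $m \equiv 2$ (odd $c$, absorb one factor of $2$ to land in $Z(2 \cdot 3^\ell n_1)$), and then use Part (2)'s $Z(2b)=2Z(b)$ together with the $n_1 \bmod 3$ bookkeeping to match the four listed subcases.

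The main obstacle I anticipate is the careful gcd bookkeeping in Parts (3) and (4): unlike the constacyclic setting where $\Omega_\omega$ only contained residues $\equiv 1 \pmod 3$ (so a single coset $Z(n_1)$ or $Z(2n_1)$ sufficed), here one must track both the power of $3$ (the parameter $\ell$) and the residue class mod $3$ of the cofactor simultaneously, and verify that the exponent $j$ produced by Proposition \ref{P:3} Part (2) really lies in the range $1 \le j \le r$. That last point should follow from the fact that $r = \ord_n(4)$ is divisible by $\ord_{3^{i-\ell}}(4) = 3^{i-\ell-1}$, so $3^{i-\ell-1}+1$... — actually one wants a $j$ with $2j-1 = 3^{i-\ell-1}$, i.e. $j = (3^{i-\ell-1}+1)/2$, and $3^{i-\ell-1} < 2 \cdot 3^{i-1} \mid 2r$ comfortably; I would spell this inequality out once and reuse it. Everything else is a routine transcription of the Proposition \ref{P:3} arguments with $3n$ replaced by $n$.
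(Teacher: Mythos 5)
Your proposal is correct and follows essentially the same route as the paper: Part (1) via Lemma \ref{L:1}/\ref{L:2} with $s=1$, Part (2) via reduction mod $3$ using $4\equiv 1\pmod 3$, Part (3) via the gcd computation $\gcd(3^\ell sn_1,n)=3^\ell n_1$ together with $2$ being a primitive root modulo powers of $3$ (so $3^{i-\ell}\mid 2^{3^{i-\ell-1}}+1$ with odd exponent and $\ord_{3^{i-\ell}}(4)=3^{i-\ell-1}$), and Part (4) by writing $m\equiv 2^c\pmod{3^{i-\ell}}$ and reading off the parity of $c$ from $m\bmod 3$. Your extra check that the required exponent $j$ can be taken in the range $1\le j\le r$ (by reducing modulo the order of $4$, which divides $r$) is a point the paper leaves implicit, but it does not change the argument.
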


A specific case of the above result is provided below. In particular, we present splittings for $\Z/n\Z$ that are given by the action of $\mu_{-2}$, wherein the fixed cyclotomic cosets by the multiplier $\mu_{-2}$ always have a coset leader that is a multiple of $n_1$.

\begin{corollary}\label{C:2}
    Let $n=3^in_1$ be a positive odd integer such that  $n_1>1$ and $i\ge 1$. 
    If $\gcd(n_1, 2^{2j-1}+1)=1$ for each integer $1\le j \le r=\ord_n(4)$, then there exists a splitting of $\Z/n\Z$ that is given by $\mu_{-2}$ in the form $(X,S_1,S_2)$, where $X=\displaystyle\bigcup_{n_1 \mid s} Z(s)$. Moreover, $|X|=3^i$ and $|S_1|=|S_2|=\frac{3^i(n_1-1)}{2}$.
\end{corollary}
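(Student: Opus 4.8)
The plan is to mimic the proof of Corollary \ref{C:1}, exploiting Proposition \ref{P:4} in place of Proposition \ref{P:3}, while being attentive to the structural difference that $\Z/n\Z$ contains the element $0$ and more ``levels'' of multiples of $n_1$ than $\Omega_\omega$ does. As in the cyclic-code convention, we take the splitting with $X = \{0\}$ absorbed into the fixed part; more precisely, I will first show that \emph{every} cyclotomic coset whose coset leader is a multiple of $n_1$ is fixed by $\mu_{-2}$, and that these are exactly the cosets fixed by $\mu_{-2}$. For the first half: if $n_1 \mid s$, write $s = 3^\ell s' n_1$ with $\gcd(3,s')=1$ and $0 \le \ell \le i-1$ (the case $s=0$ being trivial since $\mu_{-2}(0)=0$); then Proposition \ref{P:4} Part (3) gives $\mu_{-2}(Z(s)) = Z(s)$ and $|Z(s)| = 3^{i-\ell-1}$. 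For the second half, suppose $s \in \Z/n\Z$ is such that $\mu_{-2}(Z(s)) = Z(s)$ and let $m = \gcd(n,s)$; Lemma \ref{L:2} forces $\tfrac{n}{m} \mid 2^{2j-1}+1$ for some $1 \le j \le r$. Writing $\tfrac{n}{m} = 3^{a} n_1'$ with $n_1' \mid n_1$, the hypothesis $\gcd(n_1, 2^{2j-1}+1)=1$ forces $n_1' = 1$, i.e. $\tfrac{n}{m} \mid 3^i$, i.e. $n_1 \mid m \mid s$. This establishes that $X := \bigcup_{n_1 \mid s} Z(s)$ is precisely the union of all $\mu_{-2}$-fixed cyclotomic cosets.

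Next I would partition the complement $\Omega_1 \setminus X = \Z/n\Z \setminus X$ into pairs. Since no coset in the complement is fixed by $\mu_{-2}$, and since $\mu_{-2}$ is a permutation of order $2$ on the set of cyclotomic cosets that is compatible with the cyclotomic-coset structure (because $\mu_{-2} \circ \mu_{-2} = \mu_4$ acts as the identity on cosets), the complement decomposes into $2$-element orbits $\{Z(s_z), Z(-2s_z)\}$ for $z = 1, \dots, k$. Setting $S_1 = \bigcup_{z=1}^k Z(s_z)$ and $S_2 = \bigcup_{z=1}^k Z(-2s_z) = \mu_{-2}(S_1)$ then yields the three required properties of Definition \ref{splitting def}: the union is $\Omega_1$, the three sets are nonempty (nonemptiness of $X$ is clear as $0 \in X$; nonemptiness of $S_1, S_2$ follows from $n_1 > 1$, which guarantees $Z(1) \not\subseteq X$, and $Z(1) \ne \mu_{-2}(Z(1))$ by Proposition \ref{P:4} Part (1)) and disjoint, and $\mu_{-2}$ swaps $S_1, S_2$ while fixing each coset of $X$.

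It remains to compute the cardinalities. For $|X|$: the multiples of $n_1$ in $\Z/n\Z$ are $\{c n_1 : 0 \le c \le 3^i - 1\}$, which has exactly $3^i$ elements, and $X$ is the union of the cosets through these elements, all of which lie among the multiples of $n_1$ by Part (3) (and $0$); hence $|X| = 3^i$. Alternatively, and more robustly, one sums $\sum_\ell (\text{number of cosets at level } \ell) \cdot 3^{i-\ell-1}$ using Part (4) to count distinct cosets, but the direct count is cleaner: the set of multiples of $n_1$ is closed under multiplication by $4$, so it is a disjoint union of cyclotomic cosets, and it equals $X$. Then $|S_1| = |S_2|$ because $\mu_{-2}$ is a bijection between them, and $|S_1| + |S_2| + |X| = n = 3^i n_1$ gives $|S_1| = |S_2| = \tfrac{3^i(n_1 - 1)}{2}$ (an integer since $n_1$ is odd).

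The main obstacle I anticipate is the clean verification that the set of multiples of $n_1$ in $\Z/n\Z$ is exactly $X$ — i.e. that no coset straddles the boundary between ``multiple of $n_1$'' and ``not a multiple of $n_1$.'' This is where Proposition \ref{P:4} Part (3) (fixedness, giving membership in $X$) together with the converse direction via Lemma \ref{L:2} and the coprimality hypothesis must be combined carefully; a sloppy argument could miss cosets at intermediate levels $0 < \ell < i-1$ or mishandle the $n_1 \equiv 2 \pmod 3$ case, where one must track $2 \times 3^\ell n_1$ rather than $3^\ell n_1$ as the relevant representative (Part (4)). Everything else is a direct transcription of the Corollary \ref{C:1} argument.
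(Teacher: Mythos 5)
Your proof is correct and takes essentially the same route as the paper: Proposition \ref{P:4} Parts (3)--(4) to show every coset through a multiple of $n_1$ is fixed by $\mu_{-2}$, Lemma \ref{L:2} together with the hypothesis $\gcd(n_1,2^{2j-1}+1)=1$ for the converse, pairing of the remaining (non-fixed) cosets under the involution $\mu_{-2}$ to form $S_1$ and $S_2$, and the same count $|X|=3^i$, $|S_1|=|S_2|=\frac{n-|X|}{2}$. If anything, your converse step (writing $\frac{n}{m}=3^a n_1'$ with $n_1'\mid n_1$ and forcing $n_1'=1$, hence $n_1\mid m\mid s$) is phrased more carefully than the paper's, which asserts $m<n_1$ before reaching the same contradiction.
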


\section{Quantum codes from duadic constacyclic codes}\label{S:quantum}
  
In this section, we construct a new infinite family of quantum codes from duadic constacyclic codes over $\F_4$. 
The simple and elegant mathematical structure of duadic constacyclic codes makes it easier to analyze and understand the structure of such quantum codes. 
We also establish a square root-type minimum distance lower bound for the quantum codes within this family. A similar square root bound was previously discussed in \cite{blackford2, karbaski} for classical duadic constacyclic codes with $|X|>0$, although we show that it is not stated accurately. 
Such square root minimum distance lower bound is only applicable to bound the ``minimum odd-like weight" of these codes and does not generally bound the minimum distance of odd-like duadic codes.

Furthermore, we prove that our family of quantum codes encompasses infinite subclasses of degenerate quantum codes with interesting properties.  
The quantum codes provided in this section give a generalization of the class of quantum codes presented in \cite[Section IV]{aly2006}, which has a constant dimension of one, i.e. only a single logical qubit is encoded.  
We also discuss a method for extending splittings of shorter-length codes to create splittings for longer-length duadic codes. Additionally, we examine the properties of quantum duadic codes constructed using this method.
We also demonstrate numerical examples of quantum codes with good parameters that lie within this family. 

To begin, we first provide some properties of duadic constacyclic codes. 
The proof closely follows from that of duadic cyclic codes, when $X=\{0\}$, \cite[Theorems 4.3.17, 6.1.3, 6.4.2, 6.4.3]{Huffman}, and we omit certain details. 
Let $n$ be a positive odd integer, $a\in \F_4^\ast$, and $(X,S_1,S_2)$ be a non-trivial splitting of $\Omega_a$. 
Let $C$ and $D$ be odd-like and even-like duadic $a$-constacyclic codes of length $n$ over $\F_4$ with the defining sets $S_1$ and $S_1\cup X$, respectively.  
The {\em minimum odd-like weight} of $C$ (with respect to $X$) is defined by 
\[d_o(C)=\min\{\wt(c):c \in C \setminus D\}.\]
The following theorem summarizes some properties of duadic codes. We specifically focus on Hermitian dual-containing duadic codes, i.e., only duadic codes when $\mu_{-2}$ gives the splitting. 
Recall that $\mu_{-2}$ is a permutation of $\mathbb{Z}/n\mathbb{Z}$, allowing it to act on a code $C$ of length $n$ by permuting its coordinates accordingly, which will be denoted as $\mu_{-2}(C)$.

\begin{theorem}\label{T:2}
Let $n$ be a positive odd integer, $a\in \F_4^\ast$, and $(X,S_1,S_2)$ be a splitting of $\Omega_a$ over $\F_4$ that is given by $\mu_{-2}$. Let $C_1$ and $C_2$ (respectively $D_1$ and $D_2$) be odd-like (respectively even-like) duadic $a$-constacyclic codes of length $n$ over $\F_4$ with the defining set $S_1$ and $S_2$ (respectively $S_1 \cup X$ and $S_2 \cup X$). Then the following hold for each $1 \le i \le 2$. 
\begin{enumerate}
    \item $\mu_{-2}(C_1)=C_2$ and $\mu_{-2}(D_1)=D_2$ (the pair of codes are permutation equivalent). 
    \item $\dim(C_i)=\frac{n+|X|}{2}$ and $\dim(D_i)=\frac{n-|X|}{2}$.
    \item $C_1 +C_2=\F_4^n$ and $D_1 \cap D_2=0$.
    \item $D_i^{\bot_h}=C_i$ and $(D_1 +D_2)^{\bot_h}=C_1 \cap C_2$ which is the $a$-constacyclic code of length $n$ over $\F_4$ with the defining set $\Omega_a \setminus X$.
    \item $d_o(C_i)^2 \ge d(C_1 \cap C_2)$. 
\end{enumerate}

\end{theorem}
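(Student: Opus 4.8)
The plan is to prove the five parts more or less in the order stated, leaning heavily on the analogous facts for duadic cyclic codes (\cite[Theorems 4.3.17, 6.1.3, 6.4.2, 6.4.3]{Huffman}) and on the two structural theorems already available: Theorem~\ref{T:Hermitian} (Hermitian dual-containment via defining sets) and Theorem~\ref{T:BCH}. Throughout, I would work with defining sets: a code is determined by a union of $4$-cyclotomic cosets inside $\Omega_a$, complementation of defining sets corresponds to Euclidean dual (up to the reversal map $\mu_{-1}$), and $\mu_{-2}$ implements the Hermitian dual. Recall the basic bookkeeping: $|S_1|=|S_2|$ because $\mu_b$ is a bijection carrying $S_1$ onto $S_2$, and $|S_1|+|S_2|+|X|=|\Omega_a|=n$.

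For Part~(1), the multiplier $\mu_b$ acts on the ring $\F_4[x]/\langle x^n-a\rangle$ as a permutation of coordinates (more precisely a monomial map fixing $a$-constacyclicity, since $\gcd(b,tn)=1$ and $b\equiv 1\pmod t$ can be arranged, or one invokes the permutation-equivalence statement as in the cyclic case), and it sends the code with defining set $S_1$ to the code with defining set $\mu_b^{-1}(S_1)$; using $\mu_bS_1=S_2$, $\mu_bS_2=S_1$ one gets $\mu_b(C_1)=C_2$, and similarly $\mu_b(D_1)=D_2$ once we check $\mu_b(S_1\cup X)=S_2\cup X$, which holds because $\mu_b$ fixes each $Z(s)$ for $s\in X$. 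Part~(2) is immediate from $\dim = n - |\text{defining set}|$ together with the cardinality identities above: $|S_i| = (n-|X|)/2$, so $\dim(C_i) = n - (n-|X|)/2 = (n+|X|)/2$ and $\dim(D_i) = n - (n-|X|)/2 - |X| = (n-|X|)/2$. Part~(3): the defining set of $C_1\cap C_2$ is $S_1\cup S_2$ and of $C_1+C_2$ is $S_1\cap S_2=\emptyset$, so $C_1+C_2=\F_4^n$; dually $D_1\cap D_2$ has defining set $(S_1\cup X)\cup(S_2\cup X)=S_1\cup S_2\cup X=\Omega_a$, hence $D_1\cap D_2=0$.

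Part~(4) is where the hypothesis $b=-2$ is used. By Theorem~\ref{T:Hermitian}, $C^{\bot_h}$ has defining set $-2\cdot(\Omega_a\setminus A)$ when $C$ has defining set $A$; I would verify that $-2(\Omega_a\setminus S_i) = \Omega_a\setminus(\text{defining set of }D_i)$, i.e. that the Hermitian dual of $D_i$ (defining set $S_i\cup X$) is exactly $C_i$ (defining set $S_i$). Since $\mu_{-2}$ fixes $X$ setwise and swaps $S_1,S_2$, we compute $\mu_{-2}(\Omega_a\setminus(S_1\cup X)) = \mu_{-2}(S_2) = S_1$, giving $D_1^{\bot_h}=C_1$, and symmetrically for $i=2$. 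Then $(D_1+D_2)^{\bot_h} = D_1^{\bot_h}\cap D_2^{\bot_h} = C_1\cap C_2$, and $C_1\cap C_2$ has defining set $S_1\cup S_2 = \Omega_a\setminus X$ as noted. Part~(5) is the substantive one and I expect it to be the main obstacle. The idea is classical (the ``square root bound'' argument): take $c\in C_1\setminus D_1$ of weight $w=d_o(C_1)$; apply the multiplier $\mu_{-2}$ to get $c'=\mu_{-2}(c)\in C_2$, also of weight $w$ since $\mu_{-2}$ is a coordinate permutation; form the coordinatewise-style product. Concretely, since $c(\alpha^s)\ne 0$ for some $s\in X$ (as $c\notin D_1$) and the product polynomial $c(x)\,c'(x^{-2})$ (or the appropriate constacyclic analogue) vanishes on all of $S_1\cup S_2$ but not on all of $X$, it represents a nonzero codeword of $C_1\cap C_2$ whose weight is at most $w^2$ (product of two vectors of weight $w$ has weight $\le w^2$). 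Hence $d(C_1\cap C_2)\le w^2 = d_o(C_1)^2$. The delicate points are (i) setting up the right bilinear ``product'' operation in the constacyclic ring $\F_4[x]/\langle x^n-a\rangle$ — in the cyclic case one uses $a(x)\mapsto a(x)b(x)$ and multiplier composition; here one must track the shift constant and ensure the product of an odd-like codeword and a $\mu_{-2}$-image lands in the code with defining set $\Omega_a\setminus X$ — and (ii) confirming the resulting codeword is genuinely nonzero, which is exactly where $c\notin D_1$ (equivalently $c$ does not vanish on $X$) is needed. I would model this step closely on the proof of \cite[Theorem 6.4.3]{Huffman}, adapting the root-set bookkeeping from $\Z/n\Z$ to $\Omega_a\subseteq\Z/tn\Z$.
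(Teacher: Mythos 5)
Your proposal follows essentially the same route as the paper: parts (1)--(4) by defining-set bookkeeping (dimension $=n-|\text{defining set}|$, sums/intersections via intersections/unions of defining sets, and the Hermitian-dual defining-set formula with $\mu_{-2}$ swapping $S_1,S_2$ and fixing $X$), and part (5) by the classical square-root argument: multiply a minimum odd-like codeword by its multiplier image inside $\F_4[x]/\langle x^n-a\rangle$, note the product lies in $C_1\cap C_2$, is nonzero because both factors are nonzero on a coset $Z(s)\subseteq X$ fixed by the multiplier, and has at most $d_o(C_i)^2$ nonzero terms. The only cosmetic slip is that in part (5) the multiplier used must be the $\mu_b$ giving the splitting (as in the paper), not specifically $\mu_{-2}$, since part (5) is stated for general $b$; with that substitution your sketch coincides with the paper's proof.
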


\begin{proof}
 Let $\ord(a)=t$ and $\alpha \in K$ be an $(nt)$-th primitive root of unity such that $\alpha^n=a$, where $K$ is a finite field extension of $\F_4$. 
 
 (1) There exists a parity check matrix for $C_1$ over $K$ with row vectors in the form
\[\![1,\alpha^s,\alpha^{2s},\ldots,\alpha^{ns}]\]
for each $s \in S_1$.
Applying the permutation $\mu_{-2}$ to this matrix sends it to a parity check matrix for $C_2$. Thus $\mu_{-2}(C_1)=C_2$.  A similar proof shows the other case.

(2) Recall that for each $a$-constacyclic code $C$ of length $n$ with the defining set $A$ we have $\dim(C)=n-|A|$. Thus 
$\dim(C_i)=n-|S_i|=\frac{n+|X|}{2}$ and $\dim(D_i)=n-|S_i|-|X|=\frac{n-|X|}{2}$.

(3) The code $C_1+C_2$ has the defining set $S_1 \cap S_2=\emptyset$, which is also the defining set of the trivial code $\F_4^n$. 
Similarly, $D_1\cap D_2$ has the defining set $(S_1 \cup X) \cup (S_2\cup X)=\Omega_a$, which is the defining set of the zero code.

(4) Recall that the Hermitian dual of an $a$-constacyclic code with the defining set $A \subseteq \Omega_a$ has the defining set 
\begin{equation}\label{Eq:3}
\Omega_a \setminus (-2A),
\end{equation}
where $-2A$ is computed modulo $tn$.
Therefore the Hermitian dual of $D_i$ has the defining set $\Omega_a \setminus -2(S_i \cup X)=S_i$, which is the defining set of $C_i$. 
Note that $D_1+D_2$ and $C_1\cap C_2$ have the defining sets $(S_1 \cup X) \cap (S_2 \cup X)=X$ and $S_1 \cup S_2=\Omega_a \setminus X$, respectively.
Now applying the relation (\ref{Eq:3}) shows that $(D_1 +D_2)^{\bot_h}=C_1 \cap C_2$.  

(5) First note that since the pairs $(C_1,C_2)$ and $(D_1,D_2)$  are permutation equivalent (part 1) under the same permutation action, and so $C_1$ and $C_2$ have the same minimum odd-like weight.
Let $u(x)\in C_1$ be the minimum odd-like weight vector with respect to $X$. Then $u(x) \not\in D_1$ and there exists $s \in X$ such that for each $\ell \in Z(s)$ we have $u(\alpha^\ell)\neq 0$. 
Moreover, $u'(x)=\mu_{-2}(u(x))\in C_2$ and the fact that $\mu_{-2}(Z(s))=Z(s)$ implies that $u'(\alpha^\ell)\neq 0$ for each $\ell \in Z(s)$. Thus
$u(x)u'(x)$ is a non-zero element of $ C_1 \cap C_2$ and $\wt(u(x)u'(x)) \ge d(C_1 \cap C_2)$. Moreover, $u(x)u'(x)$ has at most $d_o(C_i)^2$ non-zero terms. Therefore, 
\[d_o(C_i)^2\ge \wt(u(x)u'(x))\ge d(C_1 \cap C_2).\]
\hfill $\square$\end{proof}

Note that in the proof of Theorem \ref{T:2} part (5), it is essential to choose $u(x)$ to be an odd-like vector ($u(x) \in C_1 \setminus D_1$) as otherwise $u(x)u'(x)=0$ and thus $0=\wt(u(x)u'(x))\le d(C_1 \cap C_2)$. So this result only lower bounds the minimum odd-like weight and is not applicable for the minimum distance of a general duadic code.
Now we have all the necessary requirements to present our infinite classes of binary stabilizer codes from duadic codes.

\begin{theorem}\label{T:3}
Let $n$ be a positive odd integer and $a \in \F_4^\ast$ with $\ord(a)=t$ such that $tn \nmid 2^{2j-1}+1$ for each integer $1\le j \le r=\ord_{tn}(4)$. Let $(X,S_1,S_2)$ be a splitting of $\Omega_a$ that is given by $\mu_{-2}$, and $C$ and $D \subset C$ be odd-like and even-like duadic $a$-constacyclic codes of length $n$ over $\F_4$ with the defining sets $S_1$ and $S_1 \cup X$. Let $E$ be the $a$-constacyclic code of length $n$ over $\F_4$ with the defining set $\Omega_a\setminus X$.
\begin{enumerate}
    \item 
There exists a binary quantum stabilizer code with parameters $[\![n,|X|,d ]\!]_2$ such that
$$d=d_o(C) =d(C \setminus D)\ge \sqrt{d(E)}.$$ 

\item There exists a binary quantum stabilizer code with parameters $[\![n+|X|,0,d  ]\!]_2$, where $d$ is even and $d \ge \min \{d(D) ,d(C)+1\}$.

\end{enumerate}
\end{theorem}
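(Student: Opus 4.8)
The plan is to obtain Part (1) by feeding the odd-like duadic code $C$ into the linear-code-to-quantum-code machinery of Theorem \ref{linear quantum definition}, and to obtain Part (2) by feeding the even-like duadic code $D$ into Theorem \ref{lisonek}. The structural facts I will need about $C$, $D$, and $E=C_1\cap C_2$ are exactly those assembled in Theorem \ref{T:2}, which applies because $(X,S_1,S_2)$ is a splitting given by $\mu_{-2}$.

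For Part (1): Since $A=S_1$ is the defining set of $C$ and $\mu_{-2}$ gives the splitting, Theorem \ref{T:Hermitian} yields $C^{\bot_h}\subseteq C$ (indeed $C$ is minimal Hermitian dual-containing by Proposition \ref{PP:1}), and in fact $C^{\bot_h}=D$ by Theorem \ref{T:2}(4). So $\dim C = \frac{n+|X|}{2}$ by Theorem \ref{T:2}(2), and Theorem \ref{linear quantum definition} produces an $[[n,\,2\dim C - n,\,d']]_2 = [[n,\,|X|,\,d']]_2$ quantum code, where $d'$ is the minimum weight of $C\setminus C^{\bot_h}=C\setminus D$. By the definition of the minimum odd-like weight we have $d' = d(C\setminus D) = d_o(C)$. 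Applying Theorem \ref{T:2}(5) with $b=-2$ gives $d_o(C)^2 \ge d(C_1\cap C_2) = d(E)$, i.e. $d' \ge \sqrt{d(E)}$. This finishes Part (1).

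For Part (2): Apply Theorem \ref{lisonek} to $D$, which has $k=\dim D = \frac{n-|X|}{2}$. The parameter there is $e = n - k - \dim(D\cap D^{\bot_h})$; since $D^{\bot_h}=C\supseteq D$ by Theorem \ref{T:2}(4), we get $D\cap D^{\bot_h}=D$, so $e = n - k - k = n - 2k = |X|$. Theorem \ref{lisonek} then yields a quantum code with parameters $[[n+e,\,2k-n+e,\,d]]_2 = [[n+|X|,\,0,\,d]]_2$ and $d \ge \min\{d(D),\,d(D+D^{\bot_h})+1\} = \min\{d(D),\,d(C)+1\}$ since $D+D^{\bot_h}=D+C=C$. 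That $d$ is even follows because a $[[m,0,d]]_2$ stabilizer code is a self-dual code, here arising from the Hermitian self-dual extension of $D$; equivalently the $0$-dimensional code's distance equals the minimum weight of a self-dual additive $\F_4$-code on $n+|X|$ coordinates, and one checks (as for duadic codes in the cyclic case) that the extended code is even. I will spell out the evenness via the standard argument that the extended code contains its dual and the parity check forces even weights.

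The main obstacle I anticipate is the claim that $d$ is even in Part (2): Theorems \ref{lisonek} and \ref{linear quantum definition} as quoted do not directly deliver parity information, so this needs a separate argument about the structure of the length-$(n+|X|)$ extended code (showing it is an even, self-dual additive code over $\F_4$), analogous to the classical fact that a $[[m,0]]$ stabilizer code has even minimum distance. Everything else is a bookkeeping exercise in substituting the dimension and defining-set identities from Theorem \ref{T:2} into Theorems \ref{linear quantum definition} and \ref{lisonek}.
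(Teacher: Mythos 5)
Your proposal follows the paper's proof essentially verbatim: Part (1) applies Theorem \ref{linear quantum definition} to $C$ using Theorem \ref{T:2} Parts (2), (4), and (5) exactly as the paper does, and Part (2) applies Theorem \ref{lisonek} to $D$ with $e=n-2\dim(D)=|X|$. The evenness step you flag as the main obstacle is handled in the paper precisely along the lines you sketch: the resulting length-$(n+|X|)$ code corresponds to a Hermitian self-dual linear code over $\F_4$, and such codes have only even weights (since $\langle c,c\rangle_h \equiv \wt(c) \pmod 2$), so $d$ is even.
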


\begin{proof}
 First note that Lemma \ref{L:1} implies that $Z(1) \neq Z(-2)$ and consequently one can find a splitting of $\Omega_a$ over $\F_4$ in the form $(X,S_1,S_2)$ that is given by $\mu_{-2}$. 
 
 (1) By Theorem \ref{T:2}, parts (2) and (4), we have $D=C^{\bot_h}\subseteq C$ and $C$ has dimension $\frac{n+|X|}{2}$ over $\F_4$. Applying the result of Theorem \ref{linear quantum definition} to $C$ implies the existence of an $[\![n,|X|,d ]\!]_2$ stabilizer code, where $d= d(C \setminus D)=d_o(C)$. Moreover, Theorem \ref{T:2} part(5) gives the lower bound $d \ge \sqrt{d(E)}$. 

 (2) We apply the result of Theorem \ref{lisonek} to the code $D$. First note that $e=n-2(\frac{n-|X|}{2})=|X|$. Thus there exists a quantum stabilizer code with parameters $[\![n+|X|,0,d ]\!]_2$ which is in correspondence to a Hermitian self-dual linear code over $\F_4$. Since Hermitian self-dual codes always have even weights \cite[Theorem 1.4.10 (ii)]{Huffman}, we conclude that $d$ is even. Moreover, $C+D=C$ and we get the minimum distance lower bound $d \ge \min \{d(D),d(C)+1\}$.
\hfill $\square$\end{proof}

In general, in order to apply the above constructions, it is essential to find an integer $n$ that such that $tn \nmid 2^{2j-1}+1$ for each integer $1\le j \le r=\ord_{tn}(4)$. Selection of such value of $n$ guarantees the existence of a binary quantum code. 
Later we prove that there are infinitely many values of $n$ that satisfy the condition of Theorem \ref{T:3}.  Furthermore, the linear codes $C,D$, and $E$ in the above theorem can be constructed easily, for example using the built-in function 
\texttt{ConstaCyclicCode(n, f, a)} in Magma computer algebra~\cite{magma}, where $f$ is the generator polynomial of the code and $a$ is the shift constant. 
Perhaps the main remaining challenge is the computation of true minimum distance. 
Indeed, recently it was demonstrated that computing the true minimum distance of a general quantum stabilizer code is NP-hard~\cite{Kapshikar}. 
Nevertheless, properties specific to cyclic and constacyclic codes can significantly aid in computing the true minimum of quantum codes given in Theorem \ref{T:3}.
This is mainly because:
\begin{itemize}
\item When length of the code is very large, minimum distance lower bounds like the BCH and square root bound provides an initial lower bound for the minimum distance, which can computationally take a considerable amount of time.
\item Using properties like the existence of large cycles in the automorphism groups of cyclic, constacyclic, and quasi-cyclic codes enhances algorithms for computing the minimum distance, outperforming those designed for general codes. Additional details on this can be found, for example, in the minimum distance computation section of \cite{magma}.
\end{itemize} 

Next we give a special case of Theorem \ref{T:3} that can be applied to construct binary quantum stabilizer codes with more explicit parameters. 

\begin{corollary}\label{C:Quantum}
    Let $n=3^in_1$ be a positive odd integer and $a\in \F_4^\ast$ with $\ord(a)=t$, where $n_1>1$ and $i\ge 0$ such that $\gcd(n_1, 2^{2j-1}+1)=1$ for each integer $1\le j \le r=\ord_{tn}(4)$. Let $C$ be an odd-like duadic $a$-constacyclic code of length $n$ over $\F_4$ and $D$ be its even-like subcode.
    \begin{enumerate}
        \item 
    There exists a binary quantum stabilizer code with parameters $[\![n,3^i,d ]\!]_2$, where 
\begin{equation}
d=d_o(C)=d(C \setminus D) \ge \sqrt{n_1}.
\end{equation}\label{E:SRB}
    \item There exists a binary quantum code with parameters $[\![n+3^i,0,d  ]\!]_2$, where $d$ is even and $d \ge \min \{d(D) ,d(C)+1\}$.

    \end{enumerate}
\end{corollary}

\begin{proof}
 (1) First, let $i=0$. Then cyclic and $\omega$ constacyclic codes of length $n$ are monomially equivalent. So we only consider cyclic codes. 
 The fact that $\gcd(n_1, 2^{2j-1}+1)=1$ for each integer $1\le j \le r$ and Lemma \ref{L:2} imply that the only fixed cyclotomic coset is $X=\{0\}$. 
 So there exists a splitting of $\Z/n\Z$ in the form $(X,S_1,S_2)$ that is given by $\mu_{-2}$. In this case, the code $E$ has the defining $\Z/n\Z \setminus \{0\}$ which is a consecutive set of size $n_1-1$.
 Thus the BCH bound of Theorem \ref{T:BCH} implies that $d(E) \ge \sqrt{n_1}$. The rest follows from Theorem \ref{T:3}. 
 
 Second, let $i \ge 1$. Then Corollaries \ref{C:1} and \ref{C:2} imply the existence of a splitting $(X,S_1,S_2)$ for $\Omega_a$ that is given by $\mu_{-2}$ and $|X|=3^i$.
 So we only prove the given minimum distance lower bound, and the rest follows from Theorem \ref{T:3}. 
By Theorem \ref{T:3}, we have 
$d_o(C)=d(C \setminus D)\ge \sqrt{d(E)}$, 
where $E$ is the $a$-constacyclic code of length $n$ over $\F_4$ with the defining set $\Omega_a\setminus X$. Recall that Corollaries \ref{C:1} and \ref{C:2} imply that for each $s\in \Omega_a$ we have $n_1 \nmid s$ if and only if $s \in \Omega_a\setminus X$. 
Next we give a consecutive subset of $\Omega_a\setminus X$ that has size $n_1-1$, and hence the BCH bound of Theorem \ref{T:BCH} implies that $d(E) \ge n_1$. Consider the following (consecutive) set 
\[
 A=\begin{cases}
\{1+j: 0\le j\le n_1-2 \} & a=1\\ 
\{(n_1+3(j+1)) \pmod{3n}: 0\le j\le n_1-2 \} & a=\omega \ \text{and} \ n_1 \equiv 1 \pmod 3 \\
\{(2n_1+3(j+1)) \pmod{3n}: 0\le j\le n_1-2 \} & a=\omega \ \text{and} \ n_2 \equiv 1 \pmod 3
 \end{cases}   
\]
Obviously, we have $|A|=n_1-1$ in all above cases. It is straight forward to see that when $a=1$, we have $A \subseteq \Omega_1\setminus X$ (because $A$ has no multiples of $n_1$). 
Let $a=\omega$ and $n_1 \equiv 1 \pmod 3$. Then the possible multiples of $n_1$ in $A$ are $2n_1$ and $3n_1$ which are not in $A$ since $2n_1,3n_1 \not\equiv 1 \pmod 3$, this implies that $2n_1,3n_1 \not\in \Omega_\omega$. So $A \subseteq \Omega_\omega\setminus X$. The last case follows similarly.

(2) It follows from Theorem \ref{T:3} using the fact $|X|=3^i$.
\hfill $\square$\end{proof}

The family of quantum codes presented in Corollary \ref{C:Quantum} part (1) contains the quantum codes constructed in \cite[Section IV]{aly2006} when $i=0$. Here our construction offers quantum codes with variable dimensions as the value of $i$ can vary. Moreover, the codes of Corollary \ref{C:Quantum} part (2) are in correspondence to Hermitian self-dual codes which are interesting objects in classical coding theory.

In \cite{karbaski}, Theorem 3.5 part 7 states the minimum distance bound (\ref{E:SRB}) as a lower bound for odd-like duadic constacyclic codes when $|X|=3$, which is not true in general. 
The following example serves as a counterexample to this statement.

\begin{example}\label{EX:1}
Let $n=75$. (1) We can partition $\Omega_\omega$ as
$$\Omega_\omega=Z(1) \cup Z(7) \cup Z(10) \cup Z(55) \cup Z(25).$$
Let $S_1=Z(1) \cup Z(10)$, $S_2=Z(7) \cup Z(55)$, and $X=Z(25)$. Then $(X,S_1,S_2)$ forms a splitting of $\Omega_\omega$ that is given by $\mu_{-2}$. Let $C_1$ be the odd-like duadic code of length $75$ over $\F_4$ with the defining set $S_1$. Then our computation in Magma computer algebra \cite{magma} shows that $d(C)=4$. 
However, the square root minimum distance bound of \cite[Theorem 3.5]{karbaski} does not hold as $d(C)=4< \sqrt{\frac{75}{3}}=5$.

Applying the result of Corollary \ref{C:Quantum} to $C$ gives a binary quantum code with parameters $[\![75,3,d_o(C)=9 ]\!]_2$ which is a degenerate quantum code as $d(C)=4$. 
The code $C$ has $225$ codewords of weight $4$ and $23625$ codewords of weight $8$ that all also belong to $C^{\bot_h}=D$, which is its even-like duadic subcode. 
In other words, these weights 4 and 8 vectors are in correspondence to degenerate errors that preserve the quantum code.

(2) An easy computation shows that $\Omega_1$ can be partitioned as
\[
\begin{split}
\Omega_1&=Z(0) \cup Z(1) \cup Z(2) \cup Z(3) \cup Z(5) \cup Z(6)\cup Z(7) \cup Z(10) \cup\\& Z(11) \cup Z(15)  \cup Z(25) \cup Z(30) \cup Z(35) \cup Z(50) \cup Z(55).
\end{split}    
\]
Let $S_1=Z(1)\cup Z(2)\cup Z(3)\cup Z(5)\cup Z(10)\cup Z(15)$, $S_2=Z(6)\cup Z(7)\cup Z(11)\cup Z(30)\cup Z(35)\cup Z(55)$, and $X=Z(0) \cup Z(25) \cup Z(50)$. Then $(X,S_1,S_2)$ forms a splitting for $\Omega_1$ that is given by $\mu_{-2}$. Let $C$ be the odd-like duadic cyclic codes of length $75$ over $\F_4$. 
Then our computation shows that $d(C)=8$, but $d(C\setminus C^{\bot_h})=15$. Thus there exists a degenerate binary quantum code $Q$ with parameters $[\![75,3,15 ]\!]_2$. The codes $C$ and $C^{\bot_h}$ have 2025, 6300, 9450, and 2700 codewords of weights 8, 10, 12, and 14, respectively that are in correspondence to degenerate errors.  
\end{example}

In the case of constacyclic codes over a more general finite field, the same error regarding the statement of square root bound  has happened in \cite[Theorem 11 part(8)]{blackford2}.
One can see a counterexample of that by constructing the odd-like duadic code of length $30$ over $\F_7$ with the defining set $Z(3)\cup Z(5) \cup Z(11) \cup Z(23)$ (here $|X|=2$) which has parameters $[30,16,3]_7$, while $\sqrt{\frac{30}{2}}>3$. Our computation shows that the minimum odd-like weight of this code is~$6$. 

It should be noted that the square root minimum distance lower bound is not generally very tight. 
For instance, in Example \ref{EX:1}, part (2), the true minimum distance is three times larger than the square root bound. Finding a tighter minimum distance lower bound for duadic codes could be an interesting line of research.

The following proposition shows that there are infinitely many integers $n_1$ that satisfy the conditions of Corollary \ref{C:Quantum}.

\begin{proposition}\label{PP:2}
Let $p$ be a prime number in the form $8k+7$ or $8k+5$ for some non-negative integer $k$. Then for each $j\ge 1$ we have    $p \nmid 2^{2j-1}+1$. 
\end{proposition}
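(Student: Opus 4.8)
The plan is to reduce the claim to a statement about quadratic residues modulo $p$, via the observation that $p \mid 2^{2j-1}+1$ forces $-1$ to be a power of $2$ with a specific parity of exponent, which in turn forces $2$ to be a non-residue in a way that clashes with the classical characterization of when $2$ is a quadratic residue mod $p$. First I would suppose, toward a contradiction, that $p \mid 2^{2j-1}+1$ for some $j \ge 1$, i.e. $2^{2j-1} \equiv -1 \pmod p$. Squaring gives $2^{2(2j-1)} \equiv 1 \pmod p$, so the multiplicative order $d=\ord_p(2)$ divides $2(2j-1)$ but does not divide $2j-1$ (since $2^{2j-1}\equiv -1 \not\equiv 1$). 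Hence $d$ is even, say $d=2e$, and $e \nmid 2j-1$ is impossible to arrange unless $2j-1$ is an odd multiple of $e$; writing $2j-1 = e\cdot u$ with $u$ odd, we get $2^{e u}\equiv -1$, and since $(2^e)^2 \equiv 1$ with $2^e \not\equiv 1$, we have $2^e \equiv -1 \pmod p$. The upshot: $p \mid 2^{2j-1}+1$ for some $j$ if and only if $\ord_p(2)$ is even and $2^{\ord_p(2)/2}\equiv -1$, which (since $-1$ is the unique element of order $2$) is automatic once $\ord_p(2)$ is even. So the statement to prove is: if $p \equiv 5$ or $7 \pmod 8$, then $\ord_p(2)$ is odd.

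Next I would invoke the classical supplement to quadratic reciprocity: $2$ is a quadratic residue modulo $p$ if and only if $p \equiv \pm 1 \pmod 8$, i.e. $p \equiv 1$ or $7 \pmod 8$; and $2$ is a non-residue iff $p \equiv 3$ or $5 \pmod 8$. By Euler's criterion, $2$ is a residue iff $2^{(p-1)/2}\equiv 1 \pmod p$, which happens iff $\ord_p(2) \mid (p-1)/2$. When $p \equiv 3 \pmod 4$ (in particular $p \equiv 7 \pmod 8$), $(p-1)/2$ is odd, so $\ord_p(2) \mid (p-1)/2$ forces $\ord_p(2)$ to be odd — and for $p \equiv 7 \pmod 8$ we do have $2$ a residue, giving $\ord_p(2)$ odd as desired. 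For $p \equiv 5 \pmod 8$, $2$ is a non-residue, so $2^{(p-1)/2}\equiv -1$; here $(p-1)/2 \equiv 2 \pmod 4$, and I would argue that $\ord_p(2)$ must be odd by showing it cannot be even: if $\ord_p(2)=2e$ then $2^e \equiv -1$, so the order is $2e$ with $e \mid (p-1)/2$; writing $(p-1)/2 = 2m$ with $m$ odd (valid since $p\equiv 5\pmod 8$), one checks $2^{2m}\equiv -1$ would force... — actually the cleaner route is: $2^{(p-1)/2}\equiv -1$ means $2$ has even order; but then by the reciprocity computation one needs the $2$-adic valuation of $\ord_p(2)$ to match that of $p-1$, and a short valuation argument using $p\equiv 5 \pmod 8$ (so $v_2(p-1)=2$) pins it down. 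I expect this $p \equiv 5 \pmod 8$ sub-case to be the main obstacle, since there the naive "order divides $(p-1)/2$" argument does not immediately give oddness and one must track $2$-adic valuations carefully; the $p \equiv 7 \pmod 8$ case is essentially immediate.

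Finally I would assemble the pieces: in both congruence classes $\ord_p(2)$ is odd, hence $2^{\ord_p(2)/2}$ is undefined/$\ord_p(2)$ has no even "half", so by the equivalence established in the first paragraph there is no $j$ with $p \mid 2^{2j-1}+1$, contradicting the assumption. This completes the proof. (An alternative, perhaps cleaner, packaging avoids the order-valuation bookkeeping entirely: note $p \mid 2^{2j-1}+1$ implies $2^{2j-1}\equiv -1$, so $-1$ lies in the cyclic subgroup $\langle 4 \rangle$ — indeed $-1 = 4^j \cdot 2^{-1}\cdot(-2)\cdots$ — more directly, $2^{2j-1}\equiv -1$ gives $(2^{2j-1})$ is an odd power of $2$ equal to $-1$; multiplying by $2$, $2^{2j}\equiv -2$, i.e. $4^j \equiv -2$, so $-2 \in \langle 4\rangle$, equivalently $\left(\tfrac{-2}{p}\right)=1$ together with a parity condition. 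Since $\left(\tfrac{-2}{p}\right)=\left(\tfrac{-1}{p}\right)\left(\tfrac{2}{p}\right)$ equals $1$ exactly when $p \equiv 1,3 \pmod 8$, the classes $p\equiv 5,7\pmod 8$ already give $\left(\tfrac{-2}{p}\right)=-1$, immediately ruling out $4^j\equiv -2$. This is the streamlined argument I would actually write up.)
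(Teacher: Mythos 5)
Your ``streamlined'' parenthetical argument is exactly the paper's proof: from $p \mid 2^{2j-1}+1$ one gets $4^{j} \equiv -2 \pmod p$, so $-2$ would be a quadratic residue modulo $p$, while $\bigl(\tfrac{-2}{p}\bigr)=\bigl(\tfrac{-1}{p}\bigr)\bigl(\tfrac{2}{p}\bigr)=1$ only for $p\equiv 1,3 \pmod 8$; the paper phrases this identically, citing Gauss (Art.~113) for the non-residuacity of $-2$ when $p\equiv 5,7\pmod 8$. Since you explicitly designate that version as the one you would write up, your proof is correct and coincides with the paper's. Be aware, though, that the order-theoretic route occupying most of your proposal is genuinely broken and could not be repaired as stated: the existence of $j$ with $2^{2j-1}\equiv -1 \pmod p$ is equivalent to $\ord_p(2)\equiv 2 \pmod 4$ (an odd exponent $x$ with $2^{x}\equiv -1$ exists only when $\ord_p(2)/2$ is odd), not merely to $\ord_p(2)$ being even; and the reduced claim ``$p\equiv 5,7\pmod 8$ implies $\ord_p(2)$ odd'' is false for $p\equiv 5\pmod 8$, where $2$ is a non-residue, hence $2^{(p-1)/2}\equiv -1$ and $4\mid \ord_p(2)$ (e.g.\ $\ord_5(2)=4$, $\ord_{13}(2)=12$). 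The correct dichotomy is: order odd for $p\equiv 7\pmod 8$, order divisible by $4$ for $p\equiv 5\pmod 8$; both exclude $\ord_p(2)\equiv 2\pmod 4$, which is what the order-based argument would actually need. The Legendre-symbol argument sidesteps all of this, which is precisely why it (and the paper's proof) is the right packaging.
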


\begin{proof}
For each $j\ge 1$ we have $p \nmid 2^{2j-1}+1$ if and only if $2^{2j} \not\equiv -2 \pmod p$. This condition is equivalent to $-2$ being a quadratic non-residue modulo $p$. Moreover, by \cite[Article 113]{Gauss}, we have $-2$ is quadratic non-residue modulo all primes in the form $8k+7$ or $8k+5$. This implies the result.  
\hfill $\square$\end{proof}

Hence if $n_1$ has only prime factors in the form $8k+7$ or $8k+5$, then the assumptions of Corollary \ref{C:Quantum} are satisfied. 

Next we discuss other features of the above families of quantum codes.
First, it should be mentioned that changing the odd-like duadic code in Theorem \ref{T:3} and Corollary \ref{C:Quantum} (using a different splitting that is given by $\mu_{-2}$) can result in a non-equivalent code with a better minimum distance. 
Second, these results hold for both cyclic and $\omega$-constacyclic codes of length $n$ over $\F_4$, and, as we observed in Example \ref{EX:1}, changing the (cyclic-constacyclic) type of code can result in a non-equivalent code with different parameters.

Next, we show that our construction is capable of producing infinite families of good quantum codes in the sense of the following corollary.

\begin{corollary}\label{C:3}
(1) Let $\delta$ be an arbitrary positive integer. Then, there exists an infinite family of quantum duadic codes with minimum distance $d>\delta$ and non-zero asymptotic rate.   

(2) For each integer $i\ge 0$, there exists an infinite family of quantum duadic code with parameters $[\![n,3^i,d\ge \sqrt{\frac{n}{3^i}} ]\!]_2$.
\end{corollary}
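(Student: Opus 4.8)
The plan is to build both infinite families by exhibiting suitable sequences of lengths $n$ satisfying the hypotheses of Corollary \ref{C:Quantum}, and then invoking that corollary together with the asymptotic analysis of duadic codes. Throughout I will use Proposition \ref{PP:2}, which guarantees that every prime $p \equiv 5, 7 \pmod 8$ satisfies $p \nmid 2^{2j-1}+1$ for all $j \ge 1$, and hence any integer $n_1$ whose prime factors are all of this form satisfies $\gcd(n_1, 2^{2j-1}+1) = 1$ for all $j$; in particular this holds regardless of $r = \ord_{tn}(4)$, so the hypothesis of Corollary \ref{C:Quantum} is met.

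For Part (2): fix $i \ge 0$ and let $p$ be any prime with $p \equiv 5 \pmod 8$ or $p \equiv 7 \pmod 8$ (infinitely many exist by Dirichlet). For each such $p$ that is also $\not\equiv 0 \pmod 3$ and large enough, set $n_1 = p$ and $n = 3^i n_1 = 3^i p$; this is a positive odd integer with $n_1 > 1$, so Corollary \ref{C:Quantum} Part (1) yields a binary quantum code with parameters $[[n, 3^i, d]]_2$ where $d = d_o(C) \ge \sqrt{n_1} = \sqrt{n/3^i}$. As $p$ ranges over the infinitely many admissible primes we obtain an infinite family (the lengths $n = 3^i p$ are distinct), which is exactly the claimed family. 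If one wants to also allow $i$ fixed with $3 \mid p$ excluded automatically (since $p \ne 3$ for large $p$), this is handled; the only mild care is that when $i = 0$ the code is cyclic and $\gcd(3,n)$ may be $1$ or $3$, but the statement of Corollary \ref{C:Quantum} already covers $i = 0$ directly.

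For Part (1): the point is to get a \emph{growing} minimum distance together with non-vanishing rate. Take $i = 0$ (so the codes are cyclic of length $n = n_1$) and choose $n_1$ to be a prime $p \equiv 5, 7 \pmod 8$. Corollary \ref{C:Quantum} Part (1) gives a $[[p, 1, d]]_2$ quantum code with $d \ge \sqrt{p}$. Choosing a sequence of such primes $p \to \infty$ makes $d \to \infty$, so for any prescribed $\delta$ all but finitely many terms satisfy $d > \delta$. The rate here is $1/p \to 0$, which is \emph{not} good enough for ``non-zero asymptotic rate'', so instead I would scale the dimension: fix a growing exponent, i.e., for the $m$-th member of the family use length $n = 3^{i_m} p_m$ with $3^{i_m}$ comparable to $n$ — say choose $i_m$ and a prime $p_m \equiv 5,7 \pmod 8$ with $p_m$ bounded below by a fixed constant $\ge \delta^2$ while $3^{i_m} \to \infty$. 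Then the code has parameters $[[n, 3^{i_m}, d]]_2$ with $d \ge \sqrt{p_m} > \delta$ fixed and rate $3^{i_m}/n = 1/p_m$ bounded away from $0$. This simultaneously achieves $d > \delta$ and asymptotic rate $\ge 1/p_m > 0$.

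The main obstacle is reconciling the two competing requirements in Part (1): the square root bound $d \ge \sqrt{n_1}$ forces $n_1$ to grow if we want $d$ large, while non-vanishing rate $3^i/n = 1/n_1$ forces $n_1$ bounded — so $d > \delta$ (with $\delta$ merely \emph{some} threshold we must exceed) is compatible with bounded $n_1$ only because $\delta$ is a fixed target, not a growing one. The resolution is precisely to pick one prime $n_1 = p \ge \delta^2+1$ (or the smallest admissible such prime), fix it once and for all, and let $i \to \infty$; then $d \ge \sqrt{p} > \delta$ holds for every member while the rate stays equal to $1/p > 0$. I would verify that infinitely many valid lengths $n = 3^i p$ arise (immediate, as $i$ ranges over $\mathbb{Z}_{\ge 0}$) and that each satisfies the hypothesis of Corollary \ref{C:Quantum} (immediate from Proposition \ref{PP:2}, since the prime factor $p$ of $n_1 = p$ is of the required form), and the proof is complete.
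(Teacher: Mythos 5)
Your proposal is correct and follows essentially the same route as the paper: both parts combine Proposition \ref{PP:2} with Corollary \ref{C:Quantum}, taking a fixed admissible $n_1>\delta^2$ and letting $i$ grow for Part (1) (so the rate stays $1/n_1>0$), and fixing $i$ while letting $n_1$ grow for Part (2). The only cosmetic difference is that you restrict $n_1$ to primes $p\equiv 5,7 \pmod 8$ obtained via Dirichlet, whereas the paper allows any integer all of whose prime factors have this form; this changes nothing substantive.
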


\begin{proof}

(1) Let $n_1> \delta^2$ be an integer with prime factors in the form $8k+7$ or $8k+5$. 
Then, Proposition \ref{PP:2} implies that $n_1$ satisfies the conditions of Corollary \ref{C:Quantum} part (1). Hence, there exists an infinite family of binary quantum codes with parameters $[\![3^in_1,3^i,d\ge \sqrt{n_1}> \delta ]\!]_2$ for each $i\ge 1$, which has the asymptotic rate of $\frac{1}{n_1}>0$.

(2) Let $i\ge 0$ be a fixed  integer and $P$ be an increasing set of integers with prime factors in the form $8k+7$ or $8k+5$. For each $n_1 \in P$, we have that $n_1$ satisfies the conditions of Corollary \ref{C:Quantum} part (1). Therefore, we can construct
binary quantum codes with parameters $[\![n=3^in_1,3^i,d\ge \sqrt{n_1} ]\!]_2$.
\hfill $\square$\end{proof}

Note also that it is not known whether
the family of duadic cyclic codes with $X=\{0\}$ is asymptotically good or bad. The same holds for the duadic constacyclic codes discussed in this work. Any development in this regard can be used to discuss the asymptotic behaviour of quantum duadic codes.  

\subsection{Extended splittings and degenerate quantum codes}\label{S:4.1}

In this section, we establish that the family of binary quantum codes introduced in part (1) of Theorem \ref{T:3} contains an infinite subclass of degenerate quantum codes. 
This is achieved by extending two splittings of duadic codes to a larger length splitting.
Our result is inspired by a similar proof for the duadic cyclic codes when $|X|=1$, namely \cite[Theorem 4]{Smid}. 
Recall that, for each $a\in \F_4^\ast$ and  integer $n$, the triple $(\Omega_a,\emptyset,\emptyset)$ is called a { trivial splitting}. 
In this section, we only use the trivial splittings of length $n=3^i$.

\begin{theorem}\label{T:4}
Let $a\in\F_4^\ast$ with $\ord(a)=t$ and $(T_0,T_1,T_2)$ and $(U_0,U_1,U_2)$ be two splittings for duadic $a$-constacyclic codes of length $n_1$ and $n_2$ over $\F_4$, respectively, that are given by $\mu_{-2}$. 
If $\gcd(n_2, 2^{2j-1}+1)=1$ for each integer $1\le j \le \ord_{tn_2}(4)$,
then $(S_0,S_1,S_2)$ forms a splitting for duadic $a$-constacyclic codes of length $n_1n_2$ over $\F_4$ that is given by $\mu_{-2}$, where
\[
S_k=
\begin{cases}
\{in_2: i \in T_k\} \cup \{i+jtn_2: i \in U_k, 0\le j \le n_1-1 \}  & \hspace{-2.2cm} a=1 \ \text{or}\ n_2\equiv 1 \pmod 3\\
\{(2in_2) \mod(tn_1n_2): i \in T_k\} \cup \{i+jtn_2: i \in U_k, 0\le j \le n_1-1 \} & \ \text{otherwise}
\end{cases}
\]
for each $1\le k \le 2$ and 
\[
S_0=
\begin{cases}
\{in_2: i \in T_0\}  &  a=1 \ \text{or}\ n_2\equiv 1 \pmod 3\\
\{(2in_2) \mod(tn_1n_2): i \in T_0\} &  \ \text{otherwise}.
\end{cases}
\]
\end{theorem}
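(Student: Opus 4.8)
The plan is to verify directly that the proposed triple $(S_0,S_1,S_2)$ satisfies the three defining conditions of a splitting (Definition \ref{splitting def}): that the three sets partition $\Omega_a$ (modulo $tn_1n_2$), that $\mu_b$ swaps $S_1$ and $S_2$ while fixing each cyclotomic coset inside $S_0$, and — the condition easiest to overlook — that each $S_k$ is genuinely a union of $4$-cyclotomic cosets modulo $tn_1n_2$. I would begin with the bookkeeping: the set $\{i+jtn_2 : i \in \Omega_a^{(n_2)},\ 0\le j\le n_1-1\}$, where $\Omega_a^{(n_2)}$ denotes the ambient set for length $n_2$, runs over precisely the residues modulo $tn_1n_2$ that are $\equiv 1\pmod t$ and \emph{not} $\equiv 0 \pmod{n_2}$ after rescaling — i.e. the "new" cosets that appear when passing from length $n_2$ to length $n_1n_2$ — while the set $\{in_2 : i\in\Omega_a^{(n_1)}\}$ (or its $\times 2$ twist) accounts for the residues divisible by $n_2$, which is a scaled copy of the length-$n_1$ picture. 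Checking that these two pieces are disjoint and together exhaust $\Omega_a$ modulo $tn_1n_2$ is a counting/congruence argument: the first family has $n_1|U_0\cup U_1\cup U_2| = n_1 n_2$ elements wait — one must be careful — actually $n_1(n_2)$ minus corrections; I would instead count via $|S_0|+|S_1|+|S_2| = n_1 + n_1 n_2 - \text{(overlap)}$ and match it against $|\Omega_a| = n_1 n_2$, using that $T_0\cup T_1\cup T_2$ partitions $\Omega_a^{(n_1)}$ and $U_0\cup U_1\cup U_2$ partitions $\Omega_a^{(n_2)}$, together with $0\in$ (trivial splitting) or $n_1|T_0|$ accounting.

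Next I would handle the cyclotomic-coset condition, which is where the hypothesis $\gcd(n_2,2^{2j-1}+1)=1$ for all $j\le \ord_{tn_2}(4)$ enters. For the "long" part $\{i+jtn_2 : i\in U_k\}$: multiplying a residue of this shape by $4$ permutes the $j$-index (since $4\cdot tn_2 \equiv $ a multiple of $tn_2$, modulo $tn_1n_2$, times $n_1$ considerations) and acts as multiplication by $4$ on the $U_k$-part modulo $tn_2$; because $U_k$ is already a union of $4$-cosets mod $tn_2$, the orbit stays inside the long part, and one needs the coprimality hypothesis precisely to ensure that no element of the long part accidentally becomes divisible by $n_2$ under the action of $\mu_b=\mu_{-2}$ or of $\mu_4$ — i.e. to guarantee that $\mu_{-2}$ does not collapse a "new" coset into an "old" one. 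This is the quantitative content of Lemma \ref{L:2} applied at length $n_1 n_2$: a coset is fixed by $\mu_{-2}$ iff the relevant quotient divides some $2^{2j-1}+1$, and the hypothesis rules out the quotients that would mix the two parts. For the "short" part $\{in_2 : i\in T_k\}$ (or the twisted version), scaling by $n_2$ intertwines the length-$n_1$ cyclotomic structure with the length-$n_1n_2$ one, so the $T_k$ being $4$-coset unions mod $tn_1$ transfers over; the $\times 2$ twist in the "otherwise" case is exactly the correction needed so that $2in_2 \in \Omega_\omega$ when $n_2\equiv 2\pmod 3$, mirroring the $Z(n_1)$-versus-$Z(2n_1)$ distinction in Proposition \ref{P:3} and Corollary \ref{C:1}.

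Finally, the multiplier condition: since both original splittings are given by the \emph{same} $\mu_b$, I would check $\mu_b(S_1)=S_2$ (and symmetrically) and $\mu_b$ fixes every coset in $S_0$ piece by piece — the short part transforms by $\mu_b$ on the $T$-index (using $\mu_b T_1 = T_2$, $\mu_b|_{T_0}=\mathrm{id}$ on cosets), and the long part transforms by $\mu_b$ on the $U$-index (using $\mu_b U_1 = U_2$), because scaling by $n_2$ and the shift by $jtn_2$ both commute appropriately with multiplication by $b$ modulo $tn_1n_2$. I expect the main obstacle to be the coset-closure verification for the long part: one must argue carefully that for $i\in U_k$ and any $0\le j\le n_1-1$, every element of $Z_{tn_1n_2}(i+jtn_2)$ again has the form $i'+j'tn_2$ with $i'\in U_k$ — this requires combining the known closure of $U_k$ modulo $tn_2$ with a CRT-style decomposition $\Z/tn_1n_2\Z$ relative to the factor $n_2$, and this is precisely the step that breaks without the $\gcd(n_2,2^{2j-1}+1)=1$ hypothesis. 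Once closure and the partition property are established, the multiplier bookkeeping and the $a=1$ versus $a=\omega$, $n_2\bmod 3$ case split are routine but must be tracked consistently with the conventions of Section \ref{S:consta3|n}.
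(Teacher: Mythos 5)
Your skeleton matches the paper's proof — check that each $S_k$ is a union of $4$-cyclotomic cosets modulo $tn_1n_2$, that the three sets are disjoint and exhaust $\Omega_a$, and that $\mu_b$ swaps $S_1$ with $S_2$ while fixing every coset in $S_0$ — but you have misplaced the role of the hypothesis $\gcd(n_2,2^{2j-1}+1)=1$, and the step where it is actually needed is precisely the one you leave as vague ``accounting.'' Coset closure of the long part $\{i+jtn_2 : i\in U_k,\ 0\le j\le n_1-1\}$ does \emph{not} require the hypothesis: multiplying $i+jtn_2$ by $4$ yields $i''+j''tn_2 \pmod{tn_1n_2}$ with $i''\equiv 4i\pmod{tn_2}$ still in $U_k$ (because $U_k$ is already a union of $4$-cosets modulo $tn_2$) and $0\le j''\le n_1-1$, so the orbit stays inside the long part unconditionally. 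Likewise, since $\gcd(4,n_2)=\gcd(b,tn_1n_2)=1$, divisibility by $n_2$ is preserved under $\mu_4$ and $\mu_b$, so no element can ``accidentally become divisible by $n_2$'' — the failure mode you predict for this step does not occur, with or without the coprimality assumption.

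Where the hypothesis genuinely enters (via Lemma \ref{L:2} applied at length $n_2$) is in forcing $|U_0|=1$, hence $|U_1|=|U_2|=\frac{n_2-1}{2}$. That is what makes the count $|S_0|+|S_1|+|S_2|=(|T_0|+|T_1|+|T_2|)+n_1(|U_1|+|U_2|)=n_1+n_1(n_2-1)=n_1n_2=|\Omega_a|$ close, and exhaustion is exactly the point at risk: $S_0$ contains only the scaled copy of $T_0$ and no long part, and the long shifts of the unique coset in $U_0$ are already accounted for by the scaled copy of the length-$n_1$ ambient set, so any additional coset in $U_0$ would leave $n_1(|U_0|-1)$ residues uncovered by $S_0\cup S_1\cup S_2$. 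Without isolating ``$|U_0|=1$'' your partition argument cannot be completed. Relatedly, the disjointness needs one concrete observation you only gesture at: if $in_2=i'+jtn_2$ with $i\in T_k$ and $i'\in U_1\cup U_2$, then $i'\equiv 0\pmod{n_2}$, which forces $Z(i')$ to be a singleton fixed by $\mu_b$ and hence $i'\in U_0$, a contradiction. The remaining items — the $\mu_b$-equivariance of both parts and the $\times 2$ twist when $a=\omega$ and $n_2\equiv 2\pmod 3$ — you describe correctly and in the same way as the paper.
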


\begin{proof}
Below we give the proof for the case $n_2\equiv 1 \pmod 3$, and proof of the other case follows very similarly.
It is straightforward to show that $S_0$, $S_1$, and $S_2$ are unions of $4$-cyclotomic cosets modulo $tn_1n_2$.
Note also that 
\[ 
   \begin{split}
       \mu_{-2}(S_k)&=\{(-2i)n_2: i \in T_k\} \cup \{(-2i)-2(jtn_2): i \in U_k, 0\le j \le n_1-1 \}\\&=\{in_2: i \in \mu_{-2}(T_k)\} \cup \{i+jtn_2: i \in \mu_{-2}(U_k), 0\le j \le n_1-1 \}
   \end{split} 
 \]
for each $1\le k \le 2$. Thus
$\mu_{-2}(S_1)=S_2$ and $\mu_{-2}(S_2)=S_1$. 

Next, we show that $S_1 \cap S_2=\emptyset$.  
Assume on the contrary that $s \in S_1 \cap S_2$. First, suppose that $s=in_2=i'n_2$ for some $i\in T_1$ and $i'\in T_2$. Since $i,i'<tn_1$, we conclude that $i=i'$ which is a contradiction as $T_1 \cap T_2=\emptyset$. 
Second, suppose that $s=in_2$ for some $i \in T_1$ and $s=i'+jtn_2$ for some $i' \in U_2$ and $0\le j\le n_1-1$. Then $i' \equiv 0 \pmod{n_2}$ and consequently $4i' \equiv i' \pmod{tn_2}$. 
This implies that $i' \in U_0$, which is again a contradiction.
Thirdly, suppose that $s=i+jtn_2=i'+j'tn_2$ for some $i \in U_1$, $i' \in U_2$, and $0\le j,j'\le n_1-1$. 
Then $i\equiv i' \pmod{tn_2}$ which is a contradiction with the fact that $i \in U_1$ and $i' \in U_2$. 
Hence we conclude that $S_1 \cap S_2=\emptyset$.  

A similar argument can be used to show that $S_0$ does not overlap either of $S_1$ and $S_2$. Thus $S_0,S_1$, and $S_2$ are disjoint.
The condition $\gcd(n_2, 2^{2j-1}+1)=1$ implies that $|U_0|=1$ and $|U_1|=|U_2|=\frac{n_1}{2}$. Next we count the number of elements in the sets $S_1,S_2$, and $S_3$. The definition of $S_0$ implies $|S_0|=|T_0|$. Moreover, 
\[ |S_k|=|T_k|+n_1|U_k|=|T_k|+\frac{n_1(n_2-1)}{2}\]
for each $1\le k \le 2$. Hence
\[ |S_0|+|S_1|+|S_2|=|T_0|+|T_1|+|T_2|+2\frac{n_1(n_2-1)}{2}=n_1n_2.\]
Therefore, $\Omega_a=\displaystyle\bigcup_{k=0}^{3}S_i$ modulo $tn_1n_2$.
It only remains to show that $\mu_{-2}(Z(s))=Z(s)$ for each $s\in S_0$. Let $v=xn_2 \in S_0$ for some $x \in T_0$. Thus there exists a non-negative integer $\alpha$ such that $-2x\equiv 4^\alpha x \pmod{tn_1}$. Multiplying all sides by $n_2$ implies that $-2v\equiv -2xn_2\equiv 4^\alpha x n_2\equiv 4^\alpha v \pmod{tn_1n_2}$. Thus $\mu_{-2}(Z(v))=Z(v)$.
\hfill $\square$\end{proof}

We call the triple $(S_0,S_1,S_2)$ in the previous theorem the {\em extended splitting} of $(T_0,T_1,T_2)$ and $(U_0,U_1,U_2)$. 
Note also that the extended splitting is sensitive to the selection of each of the two smaller-length splittings, and exchanging them can result in a different extended splitting or no extended splitting at all.
Next, we give a minimum distance upper bound for even-like duadic codes that are constructed using an extended splitting.

\begin{theorem}\label{T:5}

 Let $a\in\F_4^\ast$ with $\ord(a)=t$, and $(S_0,S_1,S_2)$ be the extended splitting of the length $n_1$ and $n_2$ splittings $(T_0,T_1,T_2)$ and $(U_0,U_1,U_2)$.
 Let  $C_S$ and $C_U$ be even-like duadic $a$-constacyclic codes of lengths $n_1n_2$ and $n_2$ over $\F_4$ with the defining sets $ S_0\cup S_1$ and $ U_0\cup U_1$, respectively.  
 If $C_U$ has a vector of weight $t$, then $C_S$ also has a vector of weight~$t$.

 In particular, 
 \[d(C_S)\le d(C_U)< n_2.\]
\end{theorem}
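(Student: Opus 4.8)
The plan is to take a weight-$t$ codeword of $C_U$ and lift it to a weight-$t$ codeword of $C_S$ by exploiting the way the length-$n_1n_2$ splitting is built from the length-$n_2$ one. Concretely, write the codewords of an $a$-constacyclic code of length $N$ as residues in $\F_4[x]/\langle x^N - a\rangle$ and recall that the even-like duadic code $C_U$ is the one whose defining set is $U_0 \cup U_1$, i.e.\ it consists of all $u(x)$ with $u(\alpha_2^s)=0$ for every $s \in U_0\cup U_1$, where $\alpha_2$ is the fixed $(tn_2)$-th root of unity with $\alpha_2^{n_2}=a$. The key observation is that in the definition of $S_k$ (for the case $a=1$ or $n_2 \equiv 1 \pmod 3$), the ``$U$-part'' of the defining set, namely $\{i + jtn_2 : i \in U_k,\ 0\le j \le n_1-1\}$, is exactly the defining set of the code obtained by the substitution $x \mapsto x$ lifted along the divisibility $x^{n_2}-a \mid x^{n_1 n_2}-a$ — more precisely, if $v(x) = \sum v_\ell x^\ell$ is a codeword of $C_U$, then $v(x^{n_1}) \bmod \langle x^{n_1n_2}-a\rangle$ is a codeword of the length-$n_1n_2$ constacyclic code whose defining set is that $U$-part. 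This is because the $(tn_1n_2)$-th root of unity $\alpha$ satisfies $\alpha^{n_1}$ being a $(tn_2)$-th root of unity with $(\alpha^{n_1})^{n_2}=a$, so $v((\alpha^{n_1})^s)=0$ for $s\in U_0\cup U_1$ translates into the vanishing conditions on the correct cyclotomic cosets modulo $tn_1n_2$.

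The steps, in order: (i) Fix a weight-$t$ codeword $v(x) \in C_U$; such a codeword exists by hypothesis. (ii) Form $w(x) = v(x^{n_1})$, reduced modulo $x^{n_1n_2}-a$. Since $v$ has $t$ nonzero coefficients, say at positions $\ell_1 < \cdots < \ell_t$, the polynomial $w$ has its nonzero coefficients at positions $n_1\ell_1, \dots, n_1\ell_t$, all of which are distinct and lie in $\{0,\dots,n_1n_2-1\}$ (here I use $\ell_r < n_2$ so $n_1\ell_r < n_1n_2$); hence $\wt(w)=t$. (iii) Verify that $w \in C_S$. For the part of the defining set of $C_S$ coming from $U_0 \cup U_1$ this follows from the substitution argument above: $w(\alpha^s) = v(\alpha^{n_1 s})$ and the exponents $n_1 s$ for $s$ ranging over a cyclotomic coset mod $tn_2$ in $U_0\cup U_1$ are exactly the coset $\{i + jtn_2\}$ appearing in $S_k$. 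For the part coming from $T_0 \cup T_1$ (the set $\{in_2 : i \in T_0\cup T_1\}$, or $\{2in_2 \bmod tn_1n_2\}$ in the other case), I need $w(\alpha^{in_2})=0$; but $w(\alpha^{in_2}) = v(\alpha^{n_1 n_2 i}) = v(a^i \cdot (\text{something}))$ — more carefully, $\alpha^{n_1n_2}=a$ (or a power of $a$), and one checks that $\alpha^{n_1 n_2 i}$ is a root of $x^{n_2}-a$ or forces $w$'s evaluation to vanish because $w$ only has terms $x^{n_1 \ell}$ and $(\alpha^{n_1n_2})^{\ell}$ collapses to a constant, making $w(\alpha^{in_2}) = a^{?}\sum v_\ell = 0$ using that $v$ is even-like with respect to $U_0$ (so $\sum v_\ell = v(\alpha_2^{s_0})=0$ for the appropriate $s_0 \in U_0$, recalling $U_0$ is the fixed part and $|U_0|=1$ under the gcd hypothesis). (iv) Conclude $d(C_S) \le \wt(w) = t$, and since $C_U$ has a vector of weight $t$, $d(C_U) \le t$; combined with the trivial bound $d(C_U) < n_2$ (the even-like code is proper, and in fact $t \le 3 < n_2$ once $3 \mid n_2$, or more simply $d(C_U)\le t < n_2$ as $n_2 > 1$), the inequality $d(C_S) \le d(C_U) < n_2$ follows — note we also get $d(C_S) \le t \le d(C_U)$ directly. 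Actually the cleanest route for the final chain is: $d(C_S) \le t$ (just shown) and $t = \wt(v) \ge d(C_U)$ is false in general, so instead argue $d(C_U) \le \wt(v) = t$ is the relevant direction; re-examining, the paper asserts $d(C_S)\le d(C_U)$, which I would get by showing the map $v \mapsto w$ sends \emph{every} even-like codeword (not just a weight-$t$ one) of $C_U$ into $C_S$ without increasing weight, hence in particular a minimum-weight codeword, giving $d(C_S) \le d(C_U)$; then $d(C_U) < n_2$ since $C_U$ is a proper code with a codeword of weight $t < n_2$.

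The main obstacle I anticipate is step (iii), the verification that $w$ vanishes on the $T$-part of the defining set of $C_S$: one must pin down exactly which $(tn_1n_2)$-th root of unity $\alpha$ is being used versus the $(tn_2)$-th root $\alpha_2$ for the shorter code, confirm the compatibility $\alpha^{n_1}$ (or an appropriate power) equals $\alpha_2$ up to the conventions in the Remark fixing these roots, and then check that evaluating $w(x)=v(x^{n_1})$ at $\alpha^{in_2}$ really does reduce to $\sum_\ell v_\ell \cdot \zeta^{\ell}$ for some \emph{fixed} root of unity $\zeta$ independent of $\ell$ (so that the even-like condition $\sum v_\ell$-type relation, or rather the vanishing $v(\alpha_2^{s})=0$ at the fixed coset, can be invoked) — this hinges on $n_1 n_2 \cdot i \equiv 0$ or a controlled value modulo the order of $\alpha$, which is where the $\mod(tn_1n_2)$ bookkeeping and the distinction between the $a=1$/$n_2\equiv 1$ case and the ``otherwise'' case (with the extra factor of $2$) genuinely matter. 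The other case ($n_2 \equiv 2 \pmod 3$) should follow by the same computation after the relabeling $i \mapsto 2i n_2$, exactly as the paper's proof of Theorem~\ref{T:4} reduces to one case.
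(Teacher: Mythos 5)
Your proposal takes essentially the same route as the paper: the weight-preserving substitution $c(x)\mapsto c(x^{n_1})$, with membership in $C_S$ checked on the $U$-part of the defining set by the direct substitution argument and on the $T$-part via the collapse $w(\alpha^{in_2})=v(a^i)=v(a)$ (resp.\ $v(a^2)$), which vanishes because the fixed coset $U_0$ puts $(x-a)$ (resp.\ $(x-a^2)$) into the generator polynomial of the even-like code $C_U$ --- exactly the resolution you flagged for step (iii). The only slip is the intermediate expression $a^{?}\sum_\ell v_\ell$: the correct vanishing condition is $v(\alpha_2^{s_0})=0$ for the single $s_0\in U_0$ (which equals $\sum_\ell v_\ell$ only when $a=1$), a point you yourself correct later in the same paragraph.
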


\begin{proof}
Let $\alpha$ be a $(tn_1n_2)$-th primitive root of unity in a finite field extension of $\F_4$ such that $\alpha^{n_1n_2}=a$. 
Note that $\alpha^{n_1}$ is a $(tn_2)$-th primitive root of unity, and we construct the generator polynomials of both $C_S$ and $C_U$ using the mentioned roots of unity.
Let $c(x) \in C_U$ be an arbitrary vector.  Next we show that $c(x^{n_1}) \in C_S$.  To prove this, it is enough to show that $c((\alpha^v)^{n_1})=0$ for each $v \in S_0\cup S_1$. First, let $v=sn_2$ when $n_2\equiv 1 \pmod 3$ and $v=2sn_2$ when $n_2\equiv 2 \pmod 3$ for some $s \in T_0 \cup T_1$. 
Then 
\begin{equation}\label{Eq:1}    
c((\alpha^v)^{n_1})=
\begin{cases}
c(\alpha^{sn_1n_2})=c(a)=0 &  n_2\equiv 1 \pmod 3  \\
c(\alpha^{2sn_1n_2})=c(a^2)=0 &  n_2\equiv 2 \pmod 3.
\end{cases}
\end{equation}
The last equality of (\ref{Eq:1}) arises from the fact that when $\gcd(3,n_2)=1$ and $n_2\equiv 1 \pmod 3$ (respectively when $n_2\equiv 2 \pmod 3$), we have that $(x-a)$ (respectively $(x-a^2)$) divides the generator polynomial of even-like duadic $a$-constacyclic codes of length $n_2$. 

Second, let $v=s+j_0tn_2$ for some $s \in U_0 \cup U_1$ and $0\le j_0\le n_1-1$. Then
\[c((\alpha^v)^{n_1})=c(\alpha^{n_1s})=0. \]
 Thus $c(x^{n_1}) \in C_S$ and has the same weight as $c(x)$. This completes the proof of the first part. The second part is an immediate consequence of the first part.  
\hfill $\square$\end{proof}
An application of this result is provided below. 
In particular, one can design even-like and odd-like duadic constacyclic codes $D$ and $C$, respectively, such that
 $D \subsetneq C$ and 
\[d_o(C)> d(D)=d(C).\]
Since the minimum distance of quantum codes in Theorem \ref{T:3} are computed using the minimum odd-like weight, this observation helps us to construct an infinite family of degenerate quantum codes.  

\begin{corollary}\label{C:Q2}
Let $a\in \F_4^\ast$ with $\ord(a)=t$ and $n=3^in_0n_2$ be a positive odd integer such that $i\ge 0$, $n_0,n_2>1$, and $\gcd(n_0n_2, 2^{2j-1}+1)=1$ for each integer $1\le j \le \ord_{tn}(4)$. Then there exists a degenerate binary quantum code with parameters 
\[ [\![n,3^i,d \ge \sqrt{{n_0n_2}  } ]\!]_2.\]
\end{corollary}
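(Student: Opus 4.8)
The plan is to combine the extended-splitting machinery of Theorem \ref{T:4} with the square-root bound of Theorem \ref{T:3} Part (1), and then use Theorem \ref{T:5} to certify degeneracy. First I would set $n_1 = 3^i n_0$ and apply Corollary \ref{C:Quantum} (or rather the splittings produced by Corollaries \ref{C:1} and \ref{C:2} together with the trivial splitting when $3^i \parallel n_1$ contributes only multiples of $n_0$) to obtain a splitting $(T_0,T_1,T_2)$ of $\Omega_a$ modulo $tn_1$ given by $\mu_{-2}$; the hypothesis $\gcd(n_0 n_2, 2^{2j-1}+1)=1$ forces $\gcd(n_0, 2^{2j-1}+1)=1$, so such a splitting exists with $|T_0| = 3^i$. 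Next, since $\gcd(n_2, 2^{2j-1}+1)=1$ as well, Corollary \ref{C:Quantum} Part (1) with $i=0$ gives a splitting $(U_0,U_1,U_2)$ of $\Omega_a$ modulo $tn_2$ given by $\mu_{-2}$ with $|U_0|=1$. Then Theorem \ref{T:4} (applied with $b=-2$) assembles these into an extended splitting $(S_0,S_1,S_2)$ of $\Omega_a$ modulo $tn_1n_2 = tn$, again given by $\mu_{-2}$, with $|S_0| = |T_0| = 3^i$.

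Let $C$ and $D \subsetneq C$ be the odd-like and even-like duadic $a$-constacyclic codes of length $n$ with defining sets $S_1$ and $S_1 \cup S_0$, and let $E$ be the code with defining set $\Omega_a \setminus S_0$. By Theorem \ref{T:3} Part (1) there is a binary quantum stabilizer code $[[n, 3^i, d]]_2$ with $d = d_o(C) = d(C\setminus D) \ge \sqrt{d(E)}$. For the distance bound I would show $d(E) \ge n_0 n_2$: the defining set of $E$ is $\Omega_a \setminus S_0$, and because $S_0$ consists precisely of the cosets with coset leaders that are multiples of $n_0$ (inherited from the structure of $T_0$ in Corollaries \ref{C:1}, \ref{C:2}), one can exhibit a consecutive run of $n_0 n_2 - 1$ elements of $\Omega_a$ avoiding all multiples of $n_0$, exactly as in the proof of Corollary \ref{C:Quantum} Part (1) but now with $n_0 n_2$ in place of $n_1$. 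The BCH bound of Theorem \ref{T:BCH} then yields $d(E) \ge n_0 n_2$, hence $d \ge \sqrt{n_0 n_2}$.

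Finally, to see the code is degenerate I would invoke Theorem \ref{T:5}. Apply that theorem with the roles $n_1 \mapsto n_0$ (or rather with the extended-splitting decomposition $n = (3^i n_0)\cdot n_2$ iterated, or directly $n = n_0 \cdot (3^i n_2)$ whichever makes $C_U$ have a low-weight word): the even-like duadic code $C_U$ of the shorter length $n_2$ over $\F_4$ contains a weight-$t$ vector — for $a=1$ the even-like code of length $n_2 > 1$ has the weight-$2$ vector $x^j - x^{j'}$ corresponding to $(x-1)$ dividing its generator polynomial, and for $a = \omega$ one similarly gets a weight-$t$ word from $(x-a)$ or $(x-a^2)$ dividing the generator polynomial — so $d(D) = d(C_S) \le d(C_U) < n_2$. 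Since $d_o(C) = d \ge \sqrt{n_0 n_2} \ge n_2 > d(D)$ (using $n_0 > 1$, or more carefully $n_0 n_2 \ge n_2^{\,?}$; here one wants $\sqrt{n_0 n_2} > d(C_U)$, which follows from $d(C_U) < n_2 \le \sqrt{n_0 n_2}$ when $n_0 \ge n_2$, and otherwise one swaps the factors so that the ``shorter'' length is the larger of $n_0, n_2$), the quantum code has stabilizer elements (the codewords of $D = C^{\bot_h}$) of weight strictly below $d$, i.e. it is degenerate.

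The main obstacle I anticipate is bookkeeping around which of $n_0, n_2$ plays the role of the ``extension factor'' $n_2$ in Theorem \ref{T:4} versus Theorem \ref{T:5}: Theorem \ref{T:4} requires the condition $\gcd(n_2, 2^{2j-1}+1)=1$ on the factor contributing the full $\mu_b$-orbit structure, while Theorem \ref{T:5} wants a short even-like word at the ``base'' length; one must choose the ordering so that the chain $n = 3^i n_0 \cdot n_2$ simultaneously makes the extended splitting valid and places a weight-$t$ codeword in the relevant even-like code, and then verify the strict inequality $\sqrt{n_0 n_2} > d(C_U)$ cleanly. The rest is a routine concatenation of the already-established lemmas.
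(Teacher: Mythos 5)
Your overall architecture is exactly the paper's: extend the length-$3^in_0$ and length-$n_2$ splittings via Theorem \ref{T:4}, get $d=d_o(C)\ge\sqrt{n_0n_2}$ from the square-root bound, and combine Theorem \ref{T:5} with the arrangement $n_0\ge n_2$ (the paper simply says ``without loss of generality $n_0\ge n_2$'') to conclude $d(C^{\bot_h})<n_2\le\sqrt{n_0n_2}\le d$, hence degeneracy. Your final chain of inequalities is the correct one. Two of your intermediate justifications, however, are off. First, $S_0$ does not consist of the cosets whose elements are multiples of $n_0$: by Theorem \ref{T:4} its elements are $in_2$ (or $2in_2$) with $i\in T_0$, i.e.\ multiples of $n_0n_2$, and indeed $|S_0|=3^i$, not $3^in_2$. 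As literally written, ``a consecutive run of $n_0n_2-1$ elements avoiding all multiples of $n_0$'' does not exist (any $n_0$ consecutive residues contain such a multiple); the run you need avoids multiples of $n_0n_2$, which is precisely the proof of Corollary \ref{C:Quantum} Part (1) with $n_1$ there equal to $n_0n_2$ — and since the hypothesis gives $\gcd(n_0n_2,2^{2j-1}+1)=1$, you can (as the paper does) cite that corollary verbatim instead of redoing the BCH argument.

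Second, your attempt to feed Theorem \ref{T:5} a weight-$2$ (or weight-$\ord(a)$) codeword of the short even-like code $C_U$ is both unnecessary and false: the ``$t$'' in that theorem's hypothesis is a generic weight, not $\ord(a)$, and even-like duadic codes generally contain no weight-$2$ words (divisibility of the generator by $x-1$ or $x-a$ does not put $x^j-x^{j'}$ in the code; e.g.\ the even-like duadic cyclic code of length $5$ over $\F_4$ has minimum distance $4$). All you need is the stated conclusion $d(C_S)\le d(C_U)<n_2$ of Theorem \ref{T:5}, applied to the even-like code $C_e=C^{\bot_h}$ of the extended splitting. Finally, your remark about swapping so that ``the shorter length is the larger of $n_0,n_2$'' is garbled — the extension factor playing the role of $n_2$ in Theorems \ref{T:4} and \ref{T:5} must be the \emph{smaller} of the two, i.e.\ one assumes $n_0\ge n_2$ — but your displayed inequality $d(C_U)<n_2\le\sqrt{n_0n_2}$ shows you have the right condition in mind. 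With these repairs your argument coincides with the paper's proof.
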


\begin{proof}
Without loss of generality, we assume that $n_0\ge n_2$, and the proof of the other case follows by interchanging these values. 
Let $n_1=3^in_0$, and $T=(T_0,T_1,T_2)$ and $U=(U_0,U_1,U_2)$ be splittings for duadic $a$-constacyclic codes of lengths $n_1$ and $n_2$ that are given by $\mu_{-2}$, respectively. 
The condition $\gcd(n_0n_2, 2^{2j-1}+1)=1$ for each integer $1\le j \le \ord_{tn}(4)$  ensures the existence of such splittings.
By Theorem \ref{T:4}, we can extend the splittings $T$ and $U$ to a splitting $S=(S_0,S_1,S_2)$ for duadic $a$-constacyclic codes of length $n$ that is given by $\mu_{-2}$.  Let $C_o$  and $C_e$ be the odd-like and even-like duadic $a$-constacyclic codes of length $n$ with the defining sets $S_1$ and $S_1 \cup S_0$, respectively. 
Corollary \ref{C:Quantum} part (1) implies the existence of a binary quantum code with parameters $[\![n,3^i,d ]\!]_2$ such that 
\[d=d_o(C_o)= d(C_o \setminus C_e) \ge \sqrt{n_0n_2}.\]
Using Theorem \ref{T:5} we have 
\[d(C_o)\le d(C_e) < n_2.\] 
Therefore, the fact that $n_0 \ge n_2$ implies
\[d=d_o(C_o) \ge \sqrt{n_0n_2} \ge n_2 > d(C_o),\]
which establishes both the minimum distance bound and the degeneracy of the constructed quantum code. 
\hfill $\square$\end{proof}

It is important to note that, degenerate codes usually require fewer physical qubits, and consequently fewer Pauli operations, for syndrome extraction compared to non-degenerate codes with the same parameters. 
Given that it is not possible to completely eliminate the effect of noise using current technologies, codes with fewer operations in the syndrome extraction circuit are more favorable. 
Hence, degenerate quantum duadic codes described in Theorem \ref{T:3} may be more practical for processing using current or near-future technologies compared to non-degenerate quantum cyclic and constacyclic codes.

The quantum code of Example \ref{EX:1} part (1) is constructed by extending two splittings of length $n_1=15$ and $n_2=5$. 
The above theorem proves the degeneracy of such code without computing its minimum distance. 
On the other hand, the quantum code in Example \ref{EX:1}, part (2), is not constructed by extending the splittings of length $n_1=15$ and $n_2=5$. This is why its minimum distance is larger than $n_2$ (each duadic code of length $75$ corresponding to an extended splitting has minimum distance $<n_2=5$).

Next, we show that the quantum codes in Corollary \ref{C:Q2} contain infinite families of degenerate binary quantum codes that have the same property as quantum codes of Corollary \ref{C:3}. 
The proof is similar to that of Corollary \ref{C:3}, which also uses the result of Corollary \ref{C:Q2}. Therefore, we omit it here.

\begin{corollary}\label{C:4}
(1) Let $\delta$ be an arbitrary positive integer. Then there exists an infinite family of degenerate quantum duadic codes with minimum distance $d>\delta$ and a non-zero asymptotic rate.   

(2) For each integer $i\ge 0$, there exists an infinite family of degenerate quantum duadic code with dimension $3^i$ and a growing minimum distance.
\end{corollary}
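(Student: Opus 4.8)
The plan is to mirror the proof of Corollary~\ref{C:3}, but feeding the \emph{degenerate} construction of Corollary~\ref{C:Q2} (instead of the construction of Corollary~\ref{C:Quantum}) with integers whose prime factors are controlled by Proposition~\ref{PP:2}. Recall that Corollary~\ref{C:Q2} produces, for every odd $n=3^in_0n_2$ with $i\ge 0$, $n_0,n_2>1$, and $\gcd(n_0n_2,2^{2j-1}+1)=1$ for each $1\le j\le \ord_{tn}(4)$, a \emph{degenerate} binary quantum code with parameters $[[n,3^i,d\ge\sqrt{n_0n_2}]]_2$. The only hypothesis that requires attention is the gcd condition, and this is exactly what Proposition~\ref{PP:2} supplies: if a prime $p\equiv 5$ or $7\pmod 8$, then $p\nmid 2^{2j-1}+1$ for \emph{all} $j\ge 1$, so any product of such primes is coprime to every $2^{2j-1}+1$ regardless of the range of $j$. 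By Dirichlet's theorem there are infinitely many primes in each of the residue classes $5$ and $7$ modulo $8$, and all of them are odd and different from $3$.

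For Part~(1): given $\delta$, pick a prime $p\equiv 5$ or $7\pmod 8$ with $p>\delta$ and set $n_0=n_2=p$. For every $i\ge 0$ the data $(3^i,p,p)$ meets the hypotheses of Corollary~\ref{C:Q2} (the gcd condition holds by Proposition~\ref{PP:2}, \emph{uniformly} in $i$), so we obtain a degenerate $[[3^ip^2,3^i,d\ge p>\delta]]_2$ code. Letting $i$ run over $\{0,1,2,\dots\}$ gives an infinite family of degenerate quantum duadic codes, each of minimum distance exceeding $\delta$ and each of rate $3^i/(3^ip^2)=1/p^2$; the asymptotic rate is the positive constant $1/p^2$.

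For Part~(2): fix $i\ge 0$ and let $p_1<p_2<\cdots$ be an increasing sequence of primes, each congruent to $5$ or $7$ modulo $8$. For each $\ell$ apply Corollary~\ref{C:Q2} with $n_0=n_2=p_\ell$ and the fixed $i$, obtaining a degenerate $[[3^ip_\ell^2,3^i,d\ge p_\ell]]_2$ code. Since $p_\ell\to\infty$, the minimum distances grow without bound while the dimension stays fixed at $3^i$.

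The main obstacle is not a genuine difficulty but a bookkeeping point: one must verify the gcd hypothesis of Corollary~\ref{C:Q2} \emph{uniformly} over the whole family, since the bound $\ord_{tn}(4)$ on $j$ varies with $i$ (resp.\ $\ell$); this is precisely why one invokes the ``for all $j\ge 1$'' strength of Proposition~\ref{PP:2} rather than checking finitely many congruences by hand. Everything else is immediate once the inputs are chosen, and the degeneracy of every member of the family is automatic, being part of the conclusion of Corollary~\ref{C:Q2}.
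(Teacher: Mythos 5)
Your proposal is correct and follows essentially the same route as the paper: both parts invoke Corollary \ref{C:Q2} with $n_0,n_2$ built from primes congruent to $5$ or $7$ modulo $8$, use Proposition \ref{PP:2} to verify the gcd hypothesis for all $j$ (hence uniformly in $i$ or $\ell$), and then vary $i$ for Part (1) (fixed rate $1/(n_0n_2)>0$) and vary $n_0n_2$ for Part (2). Your specialization to $n_0=n_2=p$ a single prime exceeding $\delta$ is just a particular instance of the paper's choice $n_0n_2>\delta^2$, and the appeal to Dirichlet is harmless though not needed.
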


\subsection{Odd-like distance of duadic codes constructed from extended splittings}

Our next step is to compute or bound the minimum odd-like weight in odd-like duadic codes that are constructed from extended splittings.
We first need the following setup. 

Let $a\in\F_4^\ast$ with $\ord(a)=t$ and $T=(T_0,T_1,T_2)$ and $U=(U_0,U_1,U_2)$ be splittings for duadic $a$-constacyclic codes of lengths $n_1$ and $n_2$ that are given by $\mu_{-2}$, respectively. 
Recall that, by Theorem \ref{T:4}, one can find a splitting for $a$-constacyclic codes of length $n_1n_2$ in  the form 
\[
S_k=
\begin{cases}
\{in_2: i \in T_k\} \cup \{i+jtn_2: i \in U_k, 0\le j \le n_1-1 \}  & \hspace{-2.2cm} a=1 \ \text{or}\ n_2\equiv 1 \pmod 3\\
\{(2in_2) \mod(tn_1n_2): i \in T_k\} \cup \{i+jtn_2: i \in U_k, 0\le j \le n_1-1 \} & \ \text{otherwise}
\end{cases}
\]
for each $1\le k \le 2$ and 
\[
S_0=
\begin{cases}
\{in_2: i \in T_0\}  &  a=1 \ \text{or}\ n_2\equiv 1 \pmod 3\\
\{(2in_2) \mod(tn_1n_2): i \in T_0\} &  \ \text{otherwise}.
\end{cases}
\]
Next we find the generator polynomial of odd-like duadic $a$-constacyclic code $C$ of length $n_1n_2$ over $\F_4$ with the defining set $S_1$. 
Let $\alpha$ be a $(tn_1n_2)$-th primitive root of unity in a finite field extension of $\F_4$ such that $\alpha^{n_1n_2}=a$. 
We choose $\alpha$ in such a way that $\alpha^{n_2}$, a primitive $(tn_1)$-th root of unity, remains consistent with the splitting of $n_1$ provided above.
Then, the generator polynomial of $C$ over $\F_4$ is 
\[g(x)=\displaystyle \prod_{s\in S_1}(x-\alpha^s)=\displaystyle \prod_{s\in T_1}(x-\alpha^{sn_2})\displaystyle \prod_{\substack{s\in U_1\\ 0\le j\le n_1-1}}(x-\alpha^{s+jtn_2}).\]
Note that $f(x)=\displaystyle \prod_{s\in T_1}(x-\alpha^{sn_2})$ is the generator polynomial of the odd-like duadic code of length $n_1$ with the defining set $T_1$. Moreover, we have 
\begin{equation}\label{E:h}
\begin{split}
h(x)=\displaystyle \prod_{\substack{s\in U_1\\ 0\le j\le n_1-1}}(x-\alpha^{s+jtn_2})&=\displaystyle \prod_{\substack{s\in U_1}} \alpha^{sn_1}\prod_{\substack{0\le j\le n_1-1}}(\alpha^{-s}x-\alpha^{jtn_2})=\displaystyle \prod_{\substack{s\in U_1}} \alpha^{sn_1} ((\alpha^{-s}x)^{n_1}-1).
\\&=\displaystyle \prod_{\substack{s\in U_1}} (x^{n_1}-\alpha^{sn_1})=e(x^{n_1}),
\end{split}\end{equation}
where $e(x)$ is the generator polynomial of the odd-like duadic $a$-constacyclic code of length $n_2$ with the defining set $U_1$. 
Thus 
$g(x)=f(x)e(x^{n_1})$
and it has degree $|T_1|+\frac{n_1(n_2-1)}{2}$, where $f(x)$ and $e(x)$ are the generator polynomial of the odd-like duadic $a$-constacyclic codes of length $n_1$ and $n_2$ with the defining set $T_1$ and $U_1$, respectively. In the case that $n_1$ has a trivial splitting, we get $g(x)=e(x^{n_1})$ and $f(x)=1$. 

Next, we find two different representatives for each element of the code $C$. 
Let $v(x)=b(x)f(x)e(x^{n_1})$ be a codeword of $C$ for some polynomial $b(x)$ of degree less than $n_1n_2-(|T_1|+\frac{n_1(n_2-1)}{2})$.
First, there exist polynomials $a_i(x)\in \F_4[x]$ such that
\begin{equation}\label{E2}
v(x)=f(x)b(x)e(x^{n_1})=f(x)(a_0(x)+a_1(x)+\cdots+a_{n_2-1}(x)),  
\end{equation}
where non-zero entries of $f(x)a_i(x)$, if it has any, have exponents in the interval 
\[\![{in_1},{(i+1)n_1}-1]\]
for each $0\le i \le n_2-1$. Thus $v(x)$ can be represented by $n_2$ vectors of length $n_1$, namely 
\begin{equation}\label{E:6}
(f(x)a_0(x),f(x)a_1'(x),\ldots,f(x)a_{n_2-1}'(x))
\end{equation}
using the above order, where $f(x)a_i(x)=x^{in_1}f(x)a_i'(x)$.

Second, there exist polynomials $b_i(x)=\displaystyle\sum_{j=0}^{n_1-1} b_{ij}x^j \in \F_4[x]$ such that
\begin{equation}\label{E3}
v(x)=e(x^{n_1})f(x)b(x)=e(x^{n_1})(b_0(x)+b_1(x)x^{n_1}+\cdots+b_{n_2-1}(x)x^{n_1(n_2-1)}),  
\end{equation}\label{E:4}
for each $0\le i \le n_2-1$. Let 
\[u_k(x)=\sum_{j=0}^{n_2-1}b_{jk}x^j\]
for each $0\le k \le n_1-1$. Then 
\begin{equation}\label{E:7}
v(x)=\sum_{k=0}^{n_1-1}x^k (u_k(x^{n_1})e(x^{n_1})). 
\end{equation}
The representation of (\ref{E:7}) indicates that $v(x)$ can be represented by $n_1$ vectors of length $n_2$, namely
\begin{equation}\label{E:5}
u_0(x)e(x),u_1(x)e(x),\ldots,u_{n_1-1}(x)e(x),
\end{equation}
where $u_k(x)e(x)$ determines the values of $v(x)$ in coordinate positions $k,n_1+k,2n_1+k,\ldots,(n_2-1)n_1+k$ for each $0\le k \le n_1-1$. 

Note also that when $n_1$ has a trivial splitting, the above discussion remains true after substituting $f(x)$ with $1$. Next we show how to compute or bound the true minimum distance of odd-like duadic codes that are constructed from an extended splitting. 

\begin{theorem}\label{T:3.1}
 Let $n_1$ and $n_2$ be positive odd integers and $a\in\F_4^\ast$ with $\ord(a)=t$.  \\
 (1) Suppose that there exist non-trivial splittings  modulo $tn_1$ and $tn_2$,
 and $C$, $C_1$, and $C_2$ are odd-like duadic $a$-constacyclic codes of lengths $n_1n_2$, $n_1$, and $n_2$ over $\F_4$, respectively, such that $C$ is constructed after extending the splittings of $C_1$ and $C_2$. 
Then 
\[d_o(C_1)d(C_2) \le d_o(C)  \le d_o(C_1)d_o(C_2) .\]
In particular, if $d(C_2)=d_o(C_2)$, we have $d_o(C)=d_o(C_1)d_o(C_2).$\\
(2) If $n_1$ has a trivial splitting and $C$ and $C_2$ are as above, then $d_o(C)=d_o(C_2)$. 
\end{theorem}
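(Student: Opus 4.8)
The plan is to exploit the two block representations of a codeword of $C$ developed just before the theorem, namely the representations \eqref{E:6} as $n_2$ blocks of length $n_1$ and \eqref{E:5} as $n_1$ blocks of length $n_2$. For Part (1), I would first prove the upper bound $d_o(C)\le d_o(C_1)d_o(C_2)$: take a minimum odd-like weight codeword $f(x)$-multiple representing a minimum odd-like weight vector of $C_2$ of weight $d_o(C_2)$, say $e(x)u(x)$ with $u(x)\in C_2\setminus D_2$, and ``inflate'' it via $x\mapsto x^{n_1}$ as in the proof of Theorem~\ref{T:5}, then multiply by a minimum odd-like weight codeword of $C_1$ placed appropriately; the resulting vector lies in $C$, has weight at most $d_o(C_1)d_o(C_2)$, and I must check it is genuinely odd-like with respect to $S_0$ — this is where I would use that the defining-set bookkeeping of $S_0$ forces a nonzero evaluation at some $\delta^s$, $s\in S_0$, coming from the $T_0$-part, exactly as in Theorem~\ref{T:2} Part~(5). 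For the lower bound $d_o(C)\ge d_o(C_1)d(C_2)$, I would take an arbitrary $v(x)\in C\setminus D$ (odd-like), use representation \eqref{E:5}: each nonzero block $u_k(x)e(x)$ is a (possibly zero) codeword of $C_2$, so each nonzero block has weight $\ge d(C_2)$; and one shows that at least $d_o(C_1)$ of the blocks must be nonzero, because if fewer than $d_o(C_1)$ blocks were nonzero, then passing to representation \eqref{E:6}/\eqref{E2} and reading off the $f(x)$-component would exhibit $v$ as an even-like vector or as a vector whose ``$n_1$-projection'' lies in $D_1$, contradicting $v\notin D$. Combining, $\wt(v)\ge d_o(C_1)\cdot d(C_2)$.

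The hard part will be the lower bound, specifically the claim that an odd-like $v(x)\in C$ cannot be supported on fewer than $d_o(C_1)$ of the length-$n_2$ blocks of \eqref{E:5}. I expect to handle this by relating the two representations: from \eqref{E:7} the polynomial obtained by collapsing each block $u_k(x^{n_1})e(x^{n_1})$ to its ``leading'' contribution is governed by $f(x)$, the generator of $C_1$, and the vector $(u_0(\cdot),\dots,u_{n_1-1}(\cdot))$ reduced modulo the block structure must itself be a codeword of $C_1$ — so its number of nonzero entries is $\ge d(C_1)\ge$ (in the odd-like case) $d_o(C_1)$ provided $v$ projects to an odd-like vector of $C_1$. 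Establishing that $v\notin D$ forces this projection to be odd-like (i.e.\ not in $D_1$) is the delicate point: it requires tracing through which cyclotomic cosets in $S_0$ are ``inherited'' from $T_0$ versus $U_0$, and using that $|U_0|=1$ together with the evaluation identities in \eqref{Eq:1}.

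For the ``in particular'' clause, once both inequalities are in hand, the hypothesis $d(C_2)=d_o(C_2)$ pinches $d_o(C)$ between $d_o(C_1)d_o(C_2)$ and $d_o(C_1)d_o(C_2)$, giving equality immediately.

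For Part (2), when $n_1$ has the trivial splitting, we have $f(x)=1$, $T_1=T_2=\emptyset$, and $g(x)=e(x^{n_1})$, so by \eqref{E:h} the code $C$ is precisely $\{w(x^{n_1})\cdot(\text{stuff}) : \dots\}$ — concretely, $C$ consists of the vectors $\sum_{k=0}^{n_1-1}x^k\,(u_k(x^{n_1})e(x^{n_1}))$ with the $u_k$ arbitrary of degree $<n_2$. The odd-like condition with respect to $S_0=\{\,in_2 : i\in T_0=\Omega_a^{(n_1)}\,\}$ (the inflated full index set for length $n_1$) translates, via the substitution $x\mapsto x^{n_1}$ and $\delta^{n_1}$ being a primitive $(tn_2)$-th root, into the condition that \emph{at least one} block $u_k(x)e(x)$ be odd-like in $C_2$. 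Hence the minimum odd-like weight of $C$ equals the minimum, over such configurations, of $\sum_k\wt(u_k(x)e(x))$, which is minimized by taking a single block equal to a minimum odd-like weight codeword of $C_2$ and all other blocks zero, giving exactly $d_o(C_2)$; conversely any odd-like $v\in C$ has some odd-like nonzero block of weight $\ge d_o(C_2)$, so $d_o(C)\ge d_o(C_2)$. This matches, and is really the $d_o(C_1)=1$ degenerate instance of Part (1).
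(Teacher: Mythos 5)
Your plan follows the paper's proof essentially verbatim: both use the transposed block decompositions \eqref{E:6} and \eqref{E:5}, argue that odd-likeness of $v$ forces some length-$n_1$ row $f(x)a_i'(x)$ to be an odd-like codeword of $C_1$ (hence at least $d_o(C_1)$ of the length-$n_2$ columns $u_k(x)e(x)$ are nonzero, each of weight at least $d(C_2)$), and get the upper bound from the product codeword $a(x)f(x)u(x^{n_1})e(x^{n_1})$, with Part~(2) handled identically as the $f(x)=1$ case. The only differences are cosmetic: the paper runs the ``at least $d_o(C_1)$ nonzero blocks'' step directly through the odd-like row rather than through your contrapositive, and the defining-set bookkeeping you flag as delicate collapses immediately because $S_0$ consists exactly of the cosets inherited from $T_0$, so odd-likeness of $v$ is read off from evaluations of the rows at the $(tn_1)$-th roots $\delta^{in_2}$.
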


\begin{proof}
(1) As we mentioned before this theorem, the duadic code $C$ has the generator polynomial $g(x)=f(x)e(x^{n_1})$, where $f(x)$ and $e(x)$ are the generator polynomials of the codes $C_1$ and $C_2$, respectively.
Let $v(x)=b(x)f(x)e(x^{n_1})$ be a non-zero odd-like vector of $C$ for some polynomial $b(x)$ of degree less than $n_1n_2-(|T_1|+\frac{n_1(n_2-1)}{2})$. 
By (\ref{E:6}), 
$v(x)$ can be represented by $n_2$ vectors of length $n_1$, namely 
\[(f(x)a_0(x),f(x)a_1'(x),\ldots,f(x)a_{n_2-1}'(x)).\]
Since $v(x)$ is odd-like, we have that at least one of $f(x)a_i'(x)$ is odd-like in $C_1$ for some $0 \le i \le n_2-1$, as otherwise (\ref{E2}) implies that $v(x)$ is even-like. 
Hence $f(x)a_i'(x)$ has weight at least $d_o(C_1)$. Moreover,
by (\ref{E:5}), the vector $v(x)$ can be represented using $n_1$ codewords of $C_2$, namely the vectors
\[u_0(x)e(x),u_1(x)e(x),\ldots,u_{n_1-1}(x)e(x).\]
The way these vectors are constructed, see (\ref{E:6}), implies that non-zero coefficients of $f(x)a_i'(x)$ appear as non-zero coefficients in at least $d_o(C_1)$ different $u_k(x)e(x) \in C_2$ for $1\le k \le n_1-1$. 
Therefore, $v(x)$ has weight at least $d_o(C_1)d(C_2)$. Hence 
$d_o(C_1)d(C_2) \le d_o(C)$. Moreover, let $a(x)f(x)$ and $u(x)e(x)$ be minimum odd-like elements of $C_1$ and $C_2$, respectively. Then $a(x)f(x)u(x^{n_1})e(x^{n_1}) \in C$ and has weight $d_o(C_1)d_o(C_2)$. Thus, 
\[d_o(C_1)d(C_2) \le d_o(C)  \le d_o(C_1)d_o(C_2).\]
The last part is immediate.

(2) In this case $g(x)=e(x^{n_1})$.
Let $v(x)=b(x)e(x^{n_1})$ be an non-zero odd-like vector of $C$ for some polynomial $b(x)$ of degree less than $n_1n_2-(\frac{n_1(n_2-1)}{2})$. By (\ref{E:5}), the polynomial $v(x)$ can be represented using $n_1$ codewords of $C_2$ namely the vectors
\[u_0(x)e(x),u_1(x)e(x),\ldots,u_{n_1-1}(x)e(x).\]
Since $v(x)$ is odd-like, it implies that at least one of the above $n_1$ codewords is odd-like. Thus $d_o(C) \ge d_o(C_2)$. 
Conversely, let $u(x)e(x)$ be minimum odd-like vector of $C_2$. Then $u(x^{n_1})e(x^{n_1})\in C$, which implies that $d_o(C) \le d_o(C_2)$. Hence $d_o(C) =d_o(C_2)$.
\hfill $\square$\end{proof}

From the perspective of classical coding theory, it is beneficial to avoid the extended splittings as they produce duadic codes with a small minimum distance. 
However, from the standpoint of quantum codes, using the extended splittings can offer codes with degeneracy which is a desirable property for error-correction. 
Moreover, as Theorem \ref{T:3.1} shows, we can bound or compute the true minimum distance of these quantum codes even if the length of the code is very large.

We conclude this section by noting that although we have only explored Hermitian dual-containing constacyclic codes in this paper, a similar construction approach can be applied to binary cyclic codes by employing the quantum CSS construction \cite[Theorem 9]{Calderbank}. 
Due to space constraints in this paper and the fact that binary CSS codes are exclusively derived from cyclic codes (where binary constacyclic codes are always cyclic), we defer this investigation to a future task.

\subsection{Numerical computations}

In this section, we outline the parameters of several quantum codes derived from the constructions detailed in the preceding section. The codes provided here represent a selection of potential quantum duadic codes that may serve as building blocks for designing other codes or for practical applications.
All of our numeric computations regarding the minimum distance have been done in Magma computer algebra \cite{magma}. 
Among our codes in Tables \ref{TL:1} and \ref{TL:2}, we present: 
\begin{itemize}
    \item Currently best-known codes (those with the best parameters), or codes with the same parameters as a currently best-known quantum code presented in \cite{GrasslT}, by $\ast$.
    \item  Codes with the same length and dimension, but with only a very small difference in the minimum distance (at most 3 units smaller) compared to the best-known codes, are marked with $\dagger$.
\end{itemize} 

The next example illustrates the construction of the currently best-known binary quantum code with a length of $39$ and dimension of $3$. 
This code is derived from a duadic constacyclic code by employing the result of Corollary \ref{C:Quantum} part (1). 
It was originally discovered in \cite{Lisonek2}, independently of the discourse on duadic constacyclic codes presented here.

\begin{example}
Let $n=39$. Then we can partition $\Omega_\omega$ modulo $117$ as 
$$\Omega_\omega=Z(1) \cup Z(7) \cup Z(10) \cup Z(13) \cup Z(19) \cup Z(25) \cup Z(58).$$  
Let $S_1=Z(1) \cup Z(7) \cup Z(19)$, $S_2=Z(10) \cup Z(25) \cup Z(58)$, and $X=Z(13)$. Then $(X,S_1,S_2)$ forms a splitting of $\Omega_\omega$ that is given by $\mu_{-2}$. 
Let $C$ and $D$ be the odd-like and even-like duadic $\omega$-constacyclic code of length $39$ over $\F_4$ with the defining sets $S_1$ and $S_1 \cup X$.  
Using the result of Corollary \ref{C:Quantum} part (1), we can construct a $[\![39,3,d_o(C) ]\!]_2$ binary quantum code $Q$. Our computation in Magma \cite{magma} shows that $d(C)=d_o(C)=11$. Thus, $Q$ is a non-degenerate $[\![39,3,11 ]\!]_2$ code. This code possesses the highest minimum distance among all currently known quantum codes of length and dimension $39$ and 3, respectively. 

Applying the construction of Corollary \ref{C:Quantum}, part (2), also yields a binary quantum code with parameters $[\![42,0,12 ]\!]_2$. 
This code has the same minimum distance as the currently best-known quantum code of the same length and dimension presented in \cite{GrasslT}.
\end{example}

Other conducted computations for quantum codes that have been constructed from duadic constacyclic codes with small lengths using the results of Theorem \ref{T:3} and Corollaries \ref{C:Quantum} and \ref{C:Q2} are given below. 
Table \ref{TL:1} gives codes with dimension larger than one. For each given length, we employed the computation for both cyclic and $\omega$-constacyclic codes, in case that they were not monomially equivalent. 
\begin{table}[ht]
\begin{center}
\scalebox{1}{
\begin{tabular}{ |p{1.5 cm}|p{3.8 cm}|p{0.5 cm}|p{2.15 cm}| p{1.9 cm}|}
\hline
 Length & Coset Leaders &$a$ &Parameters & Degenerate\\
 \hline
$n=15$ & $1,2,3$ & $1$ &$[\![15,3,5 ]\!]_2^*$ & No \\
$n=21$ & $1,10,13$ & $\omega$ &$[\![21,3,6 ]\!]_2^ *$ & No \\
$n=39$ & $1,7,19$ & $\omega$ &$[\![39,3,11 ]\!]_2^ *$ & No \\
$n=45$ & $1,2,3,6,9$ & $1$ &$[\![45,9,5 ]\!]_2$ & No \\
$n=51$ & $1,2,3,5,7,9$ & $1$ &$[\![51,3,11 ]\!]_2^\dagger$ & No \\
$n=63$ & $1,2,3,5,6,9,10,11,13$ & $1$ &$[\![63,9,7 ]\!]_2$ & No \\
$n=69$ & $1,2,15$ & $1$ &$[\![69,3,11 ]\!]_2$ & No \\
$n=75$ & $1,2,3,5,10,15$ & $1$ &$[\![75,3,15 ]\!]_2^\dagger$ & Yes \\
$n=87$ & $1,2,3$ & $1$ &$[\![87,3,17 ]\!]_2^\dagger$ & No \\
$n=93$ & $1, 5, 9, 13, 17, 23, 33,$ & $1$ &$[\![93,3,21 ]\!]_2^*$ & No\\
&$34, 45$&&&\\
$n=95$ & $1,13,19$ & $\omega$ &$[\![95,19,5 ]\!]_2$ & No \\
$n=105$ & $1,25,46,94,130,136,$ & $\omega$ &$[\![105,3,18 ]\!]_2^\dagger$ & Yes \\
&$160, 226,301,304$&&&\\
$n=111$ & $1,7,19$ & $\omega$ &$[\![111,3,25 ]\!]_2^*$ & No \\
\hline
\end{tabular}}
\end{center}
\captionsetup{justification=centering}
\caption{Quantum codes from duadic constacyclic codes over $\F_4$.} 
\label{TL:1}
\end{table}

Except for $n=105$ or $111$, where the minimum distance computation of duadic codes required a considerable amount of time, all other presented codes have the highest minimum distance among all duadic cyclic and $\omega$-constacyclic codes. 
The quantum code $[\![111,3,25 ]\!]_2$, which is a best-known quantum code of \cite{GrasslT}, was recently uncovered through an exhaustive search on constacyclic codes in \cite{Reza-Pedro}. 
Furthermore, applying the construction given in Corollary \ref{C:Quantum} part (2) to $[\![93,3,21 ]\!]_2$ and $[\![111,3,25 ]\!]_2$ gives the codes $[\![96,0,22 ]\!]_2$ and $[\![114,0,26 ]\!]_2$, which are the currently best-known quantum codes with these parameters (these are in correspondence to the best-known Hermitian self-dual codes).

We have also provided parameters of 1-dimensional degenerate quantum codes constructed from odd-like duadic cyclic codes in Table \ref{TL:2}. As mentioned earlier, these 1-dimensional quantum codes are derived using the results from \cite[Section IV]{aly2006}, and here we only provide numerical computations for such degenerate codes. 
Parameters of 1-dimensional non-degenerate quantum codes can be derived from Table 1 of \cite{RezaDuadic}.

The $[\![25,1,9 ]\!]_2$ quantum code in Table \ref{TL:2} can be constructed from a Hermitian dual-containing duadic code $C$. Moreover, our computation shows that $C^{\bot_h}$, considered as an $\F_2$ vector space, has a generator matrix consisting of 20 vectors of weight 4 and 4 vectors of weight 12 (its generator matrix has 128 non-zero entries). However, for a non-degenerate code (constructed from a Hermitian dual-containing linear code $C'$ over $\F_4$, if such a code exists) with the same parameters, the generator matrix of $C'^{\bot_h}$ would have at least $24 \times 9 = 216$ non-zero entries. Hence, the code $C^{\bot_h}$ has a sparser generator matrix compared to $C'^{\bot_h}$, and majority of the syndromes in the corresponding quantum code can be obtained using operations on only 4 physical qubits. Hence, such degenerate codes can accommodate a more reliable syndrome extraction compared to a non-degenerate code with the same parameters.

\begin{table}[ht]
\begin{center}
\scalebox{1}{
\begin{tabular}{|p{1.4 cm}|p{5.1 cm}|p{2.1 cm}| }
\hline
 Length & Coset Leaders&Parameters\\
 \hline
$n=25$ & $1,5$  & $[\![25,1,9 ]\!]_2^*$\\
$n=35$ & $ 1,2,7,15$ &  $[\![35,1,9 ]\!]_2^\dagger$\\
$n=49$ & $1,7$ &  $[\![49,1,9 ]\!]_2$\\
$n=65$ & $ 1,5,6,9,11,26$ &$[\![65,1,15 ]\!]_2^\dagger$\\
$n=85$ & $1,5,6,9,14,15,19,21,29,34,41$ & $[\![85,1,21 ]\!]_2^*$\\
 $n=91$ & $ 1, 9, 13, 68, 69,77,79, 82 $ &  $[\![91,1,15 ]\!]_2$\\
\hline
\end{tabular}}
\end{center}
\captionsetup{justification=centering}
\caption{Good 1-dimensional degenerate quantum duadic codes.} 
\label{TL:2}
\end{table}

\section{Conclusion and future works} 

We introduced the family of duadic constacyclic codes over $\F_4$, enabling us to construct an infinite class of binary quantum stabilizer codes called quantum duadic codes. These quantum codes can possess varying dimensions and their minimum distances are lower bounded by a square root bound. 
Additionally, we proved that our quantum codes contains an infinite class of degenerate codes. 
We also discussed a technique for extending splittings of duadic constacyclic codes, allowing us to extract new information about the minimum distance and minimum odd-like weight of duadic constacyclic codes. 
Finally, we demonstrated parameters of some quantum codes inside this family.

An outstanding open problem is to determine the asymptotic behavior of the quantum codes presented in this paper, which could be closely connected to the asymptotic behavior of the corresponding classical duadic constacyclic codes.

\section*{Acknowledgements}
This work has been financially supported by the Spanish Ministry of Economy and Competitiveness through the MADDIE project (Grant No. PID2022-137099NB-C44), by the Spanish Ministry of Science and Innovation through the proyect ``Few-qubit quantum hardware, algorithms and codes, on photonic and solid-state systems'' (PLEC2021-008251), by the Ministry of Economic Affairs and Digital Transformation of the Spanish Government through the QUANTUM ENIA project call - Quantum Spain project, and by the European Union through the Recovery, Transformation and Resilience Plan - NextGenerationEU within the framework of the ``Digital Spain 2026 Agenda''. 
\\AN acknowledges support from the U.S. Department of Energy, Office of Science, National Quantum Information Science Research Centers, Quantum Systems Accelerator.

\bibliographystyle{unsrt}
\bibliography{MyReferences}

\begin{appendices} 
\section{Splitting for quaternary duadic constacyclic codes} \label{A:spliting}

This section provides complementary information for materials presented in Section \ref{S:spliting}.

\textbf{Proof of Proposition \ref{PP:1}}:  \\
First, suppose that there exists a splitting $(X,S_1, S_2)$ of $\Omega_a$ that is given by $\mu_{-2}$. Let $C$ be an odd-like duadic $a$-constacyclic code with respect to $X$ and with the defining set $A$. 
 Without loss of generality, we assume that $A=S_1$. As $\mu_{-2}$ gives the splitting, we have $\mu_{-2}(A)=S_2$ and $A \cap \mu_{-2}(A)=\emptyset$. Hence, $C$ is Hermitian dual-containing by Theorem \ref{T:Hermitian}.
 Next, we show that enlarging the defining set $A$ results in an $a$-constacyclic code that is not Hermitian dual-containing. Thus, the code $C$ is minimal Hermitian dual-containing.
 Let $s \in \Omega_a \setminus A$, which implies that $s \in X$ or $s\in S_2$.  
If  $s\in X$, then we have $\mu_{-2}(Z(s))=Z(s)$, so $ s \in (A \cup \{s\}) \cap -2(A \cup \{s\})$. Therefore $A \cup \{s\}$ cannot be contained in the defining set of a Hermitian dual-containing code. 
If $s\in S_2$, then again $s \in (A \cup \{s\}) \cap (\mu_{-2}(A \cup \{s\}))$. Thus, by Theorem \ref{T:Hermitian}, enlarging $A$ results in an $a$-constacyclic code which is not Hermitian dual-containing.   

 Conversely, suppose that $C$ is a minimal Hermitian dual-containing duadic $a$-constacyclic code with the defining set $A$. Let $S_1=A$ and $S_2=\mu_{-2}(A)$.
 Then $S_1$ and $S_2$ are non-empty and disjoint by Theorem \ref{T:Hermitian}. Moreover, we have $\mu_{-2}(S_2)=\mu_{4}(S_1)=S_1$. Since $n$ is odd and $S_1 \cap S_2=\emptyset$, we have $X=\Omega_a \setminus (S_1 \cup S_2)\neq \emptyset$. 
 Additionally, $\mu_{-2}(Z(s))=Z(s)$ for any $s \in X$, as otherwise we can enlarge the defining set $A$ to $A'=A\cup Z(s)$ and $A'$ is the defining set of a Hermitian dual-containing code. However, this is a contradiction with the minimality of $C$. 
 This implies that $(X,S_1, S_2)$ is a splitting of $\Omega_a$ that is given by $\mu_{-2}$, and $C$ is an odd-like duadic $a$-constacyclic code with respect to $X$.  
 \hfill $\square$

\textbf{Proof of Lemma \ref{L:1}}:\\
 First, suppose that $tn \mid 2^{2j-1}+1$ for some $1\le j \le r$. Thus 
 \[2^{2j-1}+1 \equiv 2s(2^{2j-1}+1) \equiv  0 \pmod{tn}\]
  for each $s \in \Omega_a$. This implies that $4^js\equiv -2 s \pmod{tn}$ or equivalently $\mu_{-2}(Z(s))=Z(s)$.

  Conversely, suppose that $\mu_{-2}(Z(s))=Z(s)$ for each $s \in \Omega_a$. Considering $s=1$ implies that there exists $1\le j \le r$ such that $4^j \equiv -2 \pmod{tn}$. Thus $tn \mid 2^{2j-1}+1$.
 \hfill $\square$

\textbf{Proof of Lemma \ref{L:2}}:\\
 The proof is very similar to that of Lemma \ref{L:1}. We have  $\mu_{-2}(Z(s))=Z(s)$ 
 if and only if  $4^js\equiv -2 s \pmod{tn}$ for some integer $1\le j \le r$ if and only if $2s(2^{2j-1}+1) \equiv 0 \pmod{tn}$. Since $\gcd(n,s)=m$, the latter congruence holds if and only if $t\frac{n}{m} \mid 2^{2j-1}+1$.
 \hfill $\square$

Algorithm \ref{A:Splitting} summarizes the steps that need to be followed in order to find a splitting.

\begin{algorithm}[h!]
\SetAlgoLined
$-$Input: positive integer $n$ and $a \in \F_4^\ast$\\
$-$Output: all the splittings of $\Omega_a$ modulo $tn$ in the form $(X,S_1,S_2)$\\
\vspace{0.3cm}
$-$fix $t:=\ord(a)$ and $r:=\ord_{tn}(4)$\;
 \eIf{$2^{2j-1}+1 \equiv 0 \pmod{tn}$ for some $1\le j \le r$}
 {$-$return (``there is no splitting'')\;}
 {$-$compute $\Omega_a$\;
 $-$fix $CL:=$ set of all coset leaders of $\Omega_a$\;
$-$find $X_0:=\{s \in CL: \frac{tn}{\gcd(s,n)}\mid 2^{2j-1}+1$ for some $1\le j\le r \}$\;
$-$partition $CL \setminus X_0$ using $CP:=\{(b_i,c_i): c_i \ \text{is the coset leader of} \ Z(-2b_i)\}$\;
$-$fix $k:=|CP|$\;
\For{$v=(v_1,v_2,\ldots,v_k) \in \F_2^k$}
{$-X:=\displaystyle\bigcup_{s \in X_0}Z(s)$\;
$-S_1=\displaystyle\bigcup_{(b_i,c_i) \in CP}Z(v_ib_i+(1-v_i)c_i)$\;
$-S_2=\displaystyle\bigcup_{(b_i,c_i) \in CP}Z((1-v_i)b_i+v_ic_i)$\;
$-$print $(X,S_1,S_2)$\;
}}
 \caption{Finding splittings of $\Omega_a$}
 \label{A:Splitting}
\end{algorithm}

In the following proof, we use the fact that for any odd prime $p$, each primitive element of $(\Z/p^2\Z)^\ast$ remains a primitive element of $(\Z/p^t\Z)^\ast$ for each $t>2$ \cite[Page 186]{dickson}.

\textbf{Proof of Proposition \ref{P:3}}:\\
 First, note that if $n_1 \equiv 2 \pmod 3$, then $n_1\not\in \Omega_\omega$ and $2n_1 \in \Omega_\omega$. The proofs of both $n_1 \equiv 1,2 \pmod 3$ are very similar, and here we only give the proof for the case $n_1\equiv 1 \pmod 3$. 

 (1) Lemma \ref{L:2} implies that $3n \nmid 2^{2j-1}+1$ for each integer $1\le j \le r$ if and only if $Z(1) \neq \mu_{-2}(Z(1))$. So the result follows from the assumption $3n \nmid 2^{2j-1}+1$.

(2) We have that $2$ is a primitive element of $(\Z/3\Z)^\ast$ and $(\Z/9\Z)^\ast$. Thus, $2$ is also a primitive element of $(\Z/3^{i+1}\Z)^\ast$.
The multiplicative group $(\Z/3^{i+1}\Z)^\ast$ is a cyclic group of order $\phi(3^{i+1})=2\times3^i$, where $\phi$ is Euler's totient function.  Hence $2^{3^i}\equiv -1 \pmod{3^{i+1}}$ or equivalently $3^{i+1} \mid 2^{3^i}+1$. 

(3) We have $\gcd(n_1,n)=n_1$. To show that $\mu_{-2}(Z(n_1))=Z(n_1)$, by Lemma \ref{L:2} we only need to show that $3^{i+1} \mid 2^{2j-1}+1 $ for some integer $1\le j \le r$. 
The proof of part (2) shows this as $3^{i+1} \mid 2^{3^i}+1$ and $3^i$ is an odd integer.
Next we compute $|Z(n_1)|$. The fact that $2$ is a primitive element of $(\Z/3^{i+1}\Z)^\ast$ implies that $2^{2\times 3^i}\equiv 4^{3^i}\equiv 1 \pmod{3^{i+1}}$. 
Equivalently $4^{3^i}n_1\equiv n_1 \pmod{3n}$ and $3^i$ is the smallest integer with this property. Hence, $|Z(n_1)|=3^i$.

(4) As we mentioned above, $2$ is a primitive element of $(\Z/3^{i+1}\Z)^\ast$. Hence, there exists a positive integer $\ell$ such that 
\begin{equation}\label{E:11}
2^\ell \equiv m \pmod{3^{i+1}}.
\end{equation}
Since both sides of (\ref{E:11}) have the same remainder modulo 3, we get $\ell=2\ell'$. Therefore, $4^{\ell'} \equiv m \pmod{3^{i+1}}$ which is equivalent to $4^{\ell'}n_1 \equiv mn_1\pmod{3n}$. This implies that $Z(n_1)=Z(mn_1)$.
\hfill $\square$

\textbf{Proof of Corollary \ref{C:1}}:\\
We only give the proof for $n_1 \equiv 1 \pmod 3$ since the proof for the other case follows similarly. 
First, note that by Proposition \ref{P:3} part (3), we have $\mu_{-2}(Z(n_1))=Z(n_1)$. 
Let $s \in \Omega_\omega \setminus Z(n_1)$. Then, Proposition \ref{P:3} part (4) implies that $\gcd(n,s)=\gcd(n_1,s)=m<n_1$.
We claim that $\mu_{-2}(Z(s)) \neq Z(s)$. Suppose in contrary that $\mu_{-2}(Z(s)) = Z(s)$. 
Then, Lemma \ref{L:2} implies that $3^{i+1}\frac{n_1}{m} \mid 2^{2j-1}+1$ for some $1\le j \le r$. This implies that
 $1 <\frac{n_1}{m}\mid \gcd(n_1,2^{2j-1}+1)$ which is a contradiction.
Thus, for each $s \in \Omega \setminus Z(n_1)$, we have $\mu_{-2}(Z(s)) \neq Z(s)$.
Therefore, we can find distinct cyclotomic cosets in the form 
\[Z(s_1),Z(-2s_1),Z(s_2),Z(-2s_2),\ldots,Z(s_k),Z(-2s_k)\] 
that partition $\Omega_\omega \setminus Z(n_1)$.
Next, we add $Z(s_i)$ to the set $S_1$ and $Z(-2s_i)$ to the set $S_2$ for each $1\le i \le k$.
Therefore, we can partition $\Omega_\omega$ into $X=Z(n_1),S_1=\displaystyle\bigcup_{z=1}^{k}Z(s_z),S_2=\displaystyle\bigcup_{z=1}^{k}Z(-2s_z)$ that satisfies: 
\begin{itemize}
    \item $S_1 \cup S_2 \cup X=\Omega_\omega$,
    \item  $S_1,S_2,X \neq \emptyset$ and they are disjoint,
    \item $\mu_{-2}(S_1)=S_2$, $\mu_{-2}(S_2)=S_1$, and $\mu_{-2}(Z(s))=Z(s)$ for each $s\in X$.
\end{itemize}
Thus, $(X,S_1,S_2)$ forms a splitting of $\Omega_\omega$ over $\F_4$ that is given by $\mu_{-2}$.  The fact that $|X|=3^i$ follows from Proposition \ref{P:3} part (3). Finally, since $|S_1|=|S_2|$ and $|S_1|+|S_2|+|X|=3^in_1$, we get $|S_1|=|S_2|=\frac{3^i(n_1-1)}{2}$.
\hfill $\square$

\textbf{Proof of Proposition \ref{P:4}}:\\
(1) The proof follows from Lemma \ref{L:2} as $n \nmid 2^{2j-1}+1$ for each integer $1\le j \le r$.

(2) First, note that since $s\neq s'$ and $\gcd(3,b)=1$, we have $sb\not\equiv s'b \pmod 3$. Moreover, for any $1\le j \le r$, we have $4^j(sb)\equiv sb \pmod 3$. This implies that $Z(sb) \neq Z(s'b)$. The last part follows from the fact that $\mu_{2}$ is a bijective involution on $\Z/n\Z$ that preserves $4$-cyclotomic cosets.

(3) We have $\gcd(3^\ell sn_1,n)=3^\ell n_1$. By Lemma \ref{L:2}, in order to show that $\mu_{-2}(Z(3^\ell sn_1))=Z(3^\ell sn_1)$,  we only need to prove that $3^{i-\ell} \mid 2^{2j-1}+1 $ for some integer $1\le j \le r$. Employing the same proof as that of Proposition \ref{P:3} part (2) shows that $3^{i} \mid 2^{3^{i-1}}+1$. This implies the result since 
\[3^{i-\ell} \mid 3^{i} \mid 2^{3^{i-1}}+1\]
and $3^{i-1}$ is an odd integer.

Next, we show that $|Z(3^\ell sn_1)|=3^{i-\ell-1}$. Recall that for any $k>0$, we have that $2$ is a primitive element of $(\Z/3^{k}\Z)^\ast$
and  $|(\Z/3^{k}\Z)^\ast|=2\times3^{k-1}$. Putting $k=i-\ell$ implies that $4^{3^{i-\ell-1}}\equiv 1 \pmod{3^{i-\ell}}$.
Equivalently, we have  $4^{3^{i-\ell-1}}(3^{\ell}sn_1)\equiv 3^{\ell}sn_1 \pmod{n}$ and $3^{i-\ell-1}$ is the smallest integer with this property. Hence $|Z(3^{\ell}sn_1)|=3^{i-\ell-1}$.

(4) We only give the proof for $n_1\equiv 1 \pmod 3$ and the other case follows by repeating the same proof for $2n_1$ and using the last property of part (2). As we mentioned above, $2$ is a primitive
element of $(\Z/3^{i-\ell}\Z)^\ast$. Hence there exists a positive integer $\beta$ such that 
\begin{equation}\label{E:1}
2^\beta \equiv m \pmod{3^{i-\ell}}.
\end{equation}
Since both sides of (\ref{E:1}) have the same remainder modulo 3 one can find integer $\beta'$ such that
\[
m \equiv \begin{cases}
 4^{\beta'} \pmod{3^{i-\ell}} & m \equiv 1 \pmod 3 \\
  2 \times 4^{\beta'} \pmod{3^{i-\ell}} & m \equiv 2 \pmod 3.
\end{cases}     
\]
This results that 
\[
m3^{\ell} n_1 \equiv\begin{cases}
  4^{\beta'}(3^{\ell} n_1) \pmod{n} & m \equiv 1 \pmod 3 \\
   4^{\beta'}(2\times3^{\ell} n_1) \pmod{n} & m \equiv 2 \pmod 3.
\end{cases}     
\] 
This proves the result.
\hfill $\square$

\textbf{{Proof of Corollary \ref{C:2}}}:\\
First, note that $\mu_{-2}(Z(0))=Z(0)$ and, by Proposition \ref{P:4} Parts (3) and (4), we have for any $s$ such that $n_1\mid s$ we have $\mu_{-2}(Z(s))=Z(s)$. 
Let $s \in \Z/n\Z$ such that $n_1\nmid s$ and $\gcd(n,s)=m$. Hence, $m< n_1$.
We claim that $\mu_{-2}(Z(s)) \neq Z(s)$. Suppose in contrary that $\mu_{-2}(Z(s)) = Z(s)$. 
Then, Lemma \ref{L:2} implies that $\frac{n}{m} \mid 2^{2j-1}+1$ for some $1\le j \le r$. This results in
 $1 <\gcd(n_1,2^{2j-1}+1)$ for some $j$, which is a contradiction. Therefore, $\mu_{-2}(Z(s)) \neq Z(s)$ if and only if $n_1 \nmid s$. 
So one can find elements $b_1,b_2,\ldots,b_{m_1},c_1,c_2,\ldots,c_{m_2}\in \Z/n\Z$ such that 
\begin{itemize}
    \item $n_1 \nmid b_{j_1}$ and $n_1 \mid c_{j_2}$ for each $1\le j_1 \le m_1$ and $1\le j_2 \le m_2$,
    \item and $\Z/n\Z=\displaystyle\bigcup_{j_1=1}^{m_1} Z(b_{j_1}) \cup \displaystyle\bigcup_{j_1=1}^{m_1} Z(-2b_{j_1}) \cup \displaystyle\bigcup_{j_2=1}^{m_2} Z(c_{j_2})$ are disjoint unions.
\end{itemize}
Let $S_1=\displaystyle\bigcup_{j_1=1}^{m_1} Z(b_{j_1})$,  $S_2=\displaystyle\bigcup_{j_1=1}^{m_1} Z(-2b_{j_1})$, and  
$X=\displaystyle\bigcup_{j_2=1}^{m_2} Z(c_{j_2})$.
Then, $(X,S_1,S_2)$ forms a splitting for $\Z/n\Z$ that is given by $\mu_{-2}$.
The last part is straight forward as for each $0\le x \le n-1$, we have $x\in X$ if and only if $n_1\mid x$, and there are exactly $\frac{n}{n_1}=3^i$ such values of $x$. Therefore, $|X|=3^i$ and $|S_1|=|S_2|=\frac{n-|X|}{2}=\frac{3^i(n_1-1)}{2}$.
\hfill $\square$

\begin{remark}
Note that for each $a \in \F_4^\ast$ there exists even number of cyclotomic cosets inside $\Omega_a$ that are not fixed by $\mu_{-2}$. 
For instance, if there are $2k$ such cyclotomic cosets, then there exist $2^{k}$ different duadic $a$-constacyclic codes. Some of such duadic codes are monomially equivalent. 
An efficient method to identify many equivalent duadic $a$-constacyclic codes is by applying affine maps to their defining sets. 
This technique, as demonstrated in Example 5.5 of \cite{Reza.Equiva}, can substantially reduce the computation time for code parameters after determining, even a small number of, equivalent codes.
\end{remark}

\end{appendices}

\end{document}